\documentclass[11pt]{article}
\usepackage[left=1in, right=1in, top=1in, bottom=1in, margin=1in]{geometry}

\usepackage{tcolorbox}
\usepackage{ninecolors}
\usepackage{xcolor}
\usepackage{labelschanged}

\usepackage{amsthm}
\usepackage{amsmath,amssymb,xspace,graphicx,relsize,bm,breqn,multirow}
\usepackage{multicol}

\usepackage{tikz}
\usetikzlibrary{arrows.meta,calc,fit,positioning, arrows.meta,calc,decorations.pathreplacing}

\usepackage{braket}
\usepackage{qcircuit}
\usepackage{adjustbox}

\usepackage{float}
\usepackage[linesnumbered,ruled,vlined]{algorithm2e}
\SetKwInput{KwInput}{Input}
\SetKwInput{KwOutput}{Output}
\SetKwInOut{Promise}{Promise}
\SetKwInput{Goal}{Goal}
\SetKwProg{Fn}{function}{}{}
\SetKwFor{RepTimes}{repeat}{times}{}
\SetKwFunction{Ver}{Verify}
\SetKwFunction{Prep}{PrepareState}
\SetKwComment{Comment}{/* }{ */}
\usepackage{algorithmic}

\usepackage{paralist}

\newtcolorbox{myalgorithm}[1][]{
    colback=gray!10, % Light grey background
    colframe=black, % Black border
    arc=5pt, % Curved corners
    boxrule=0.5pt, % Thin border
    left=0pt, right=0pt, top=0pt, bottom=0pt % Padding inside the box
}
\newcommand{\Exp}{\mathop{\mathbb{E}}}

\usepackage[margin=1in]{geometry}

\newcommand{\E}{\mathbb{E}}

\usepackage{amsthm}
\usepackage{dsfont}
\usepackage{array}
\usepackage{makecell}

\newcommand{\TV}{\ensuremath{\mathsf{TV}}}

\newcommand{\poly}{\ensuremath{\mathsf{poly}}}
\newcommand{\GCz}{\textsf{GC}^0}
\newcommand{\Index}{\textsf{Inx}}
\newcommand{\MIndex}{\textsf{MultInx}}
\newcommand{\IIndex}{\textsf{IterInx}}
\newcommand{\IMIndex}{\textsf{IterMultInx}}

\newcommand{\R}{\ensuremath{\mathbb{R}}}

\newcommand{\id}{\ensuremath{\mathbb{I}}}

\newif\ifcomments
\commentstrue
\ifcomments
\newcommand{\hari}[1]{{\textcolor{teal}{[Hari: {#1}}}]}
\else
\newcommand{\hari}[1]{}
\fi

\newif\ifcomments
\commentstrue
\ifcomments
\newcommand{\arko}[1]{{\textcolor{orange}{[Arko: {#1}}}]}
\else
\newcommand{\arko}[1]{}
\fi

\usepackage{upgreek}
\usepackage{enumerate}

\usepackage[pagebackref]{hyperref}
\usepackage[usestackEOL]{stackengine}
\usepackage{thm-restate,mathrsfs}
\usepackage{enumerate}
\usepackage{array}
\usepackage{parskip}
\def\01{\{0,1\}}
\newcommand{\ketbra}[2]{|#1\rangle\langle#2|}
\hypersetup{
	colorlinks,
	linkcolor={blue!100!black},
	citecolor={red!100!black},
}

\newcommand{\be}{\begin{equation}}
\newcommand{\ee}{\end{equation}}
\newcommand{\ba}{\begin{array}}
\newcommand{\ea}{\end{array}}
\newcommand{\bea}{\begin{eqnarray}}
\newcommand{\eea}{\end{eqnarray}}

\usepackage{mathtools}
\DeclarePairedDelimiter\ceil{\lceil}{\rceil}

\newcommand{\blk}{\textsf{blk}}

\newcommand{\calA}{{\cal A }}

\newcommand{\calC}{{\cal C }}

\newcommand{\calT}{{\cal T }}

\newcommand{\DLM}{\textsf{DLM}}
\newcommand{\LLM}{\textsf{LLM}}
\newcommand{\MASK}{\textsf{mask}}

\newcommand{\AND}{\textsf{AND}}

\usepackage{amsthm}

\def\01{\{0,1\}}

\newcommand{\supp}{\mathsf{supp}}

\newcommand{\CNOT}{\textsf{CNOT}}
\newcommand{\NCz}{\textsf{NC}^0}
\newcommand{\QNCz}{\textsf{QNC}^0}
\newcommand{\ACz}{\textsf{AC}^0}
\newcommand{\QACz}{\textsf{QAC}^0}
\newcommand{\TCz}{\textsf{TC}^0}

\definecolor{citegreen}{HTML}{208054}
\definecolor{citeblue}{HTML}{0055cc}

\NineColors{saturation=high}
\hypersetup{
    breaklinks=true,   % splits links across lines
    colorlinks=true, % false: boxed links; true: colored links
    linkcolor=blue3, % color of internal links
    citecolor=green5, % color of links to bibliography
    urlcolor=blue3, % color of external links
}

\newtheoremstyle{spaced}
  {10pt} % Space above
  {10pt} % Space below
  {\itshape} % Body font
  {} % Indent amount
  {\bfseries} % Head font
  {.} % Punctuation after head
  {0.5em} % Space after head
  {} % Head spec

\theoremstyle{spaced}

\newtheorem{theorem}{Theorem}[section]
\newtheorem{lemma}[theorem]{Lemma}
\newtheorem{claim}[theorem]{Claim}
\newtheorem{corollary}[theorem]{Corollary}

\makeatletter
\def\widebreve{\mathpalette\wide@breve}
\def\wide@breve#1#2{\sbox\z@{$#1#2$}%
     \mathop{\vbox{\m@th\ialign{##\crcr
\kern0.08em\brevefill#1{0.8\wd\z@}\crcr\noalign{\nointerlineskip}%
                    $\hss#1#2\hss$\crcr}}}\nolimits}
\def\brevefill#1#2{$\m@th\sbox\tw@{$#1($}%
    \hss\resizebox{#2}{\wd\tw@}{\rotatebox[origin=c]{90}{\upshape(}}\hss$}

\title{Separating quantum circuits from classical LLMs}
\author{Srinivasan Arunachalam$^*$ \and Arkopal Dutt$^*$ \and  Hari Krovi$^*$ \and Rik Sengupta\thanks{IBM Research. \href{mailto:srinivasan.arunachalam@ibm.com}{srinivasan.arunachalam@ibm.com}, \href{mailto:arkopal@ibm.com}{arkopal@ibm.com}, \href{mailto:hari.krovi@ibm.com}{hari.krovi@ibm.com}, \href{mailto:rik@ibm.com}{rik@ibm.com}.}}
\date{}
\begin{document}
\maketitle

\begin{abstract}
Modern large language models -- transformers and diffusion language models -- are built around two canonical algorithmic tasks: \emph{prediction and generation}. We prove unconditional separations between low-depth quantum computation and the corresponding bounded-resource classical language-model architectures in both~regimes. Concretely, we exhibit the following:
\vspace{2mm}
    \begin{enumerate}
   \item  \textbf{Distributional separation.} We give a distribution that is sampleable by $\textsf{QNC}^0$  circuits (i.e.,  a family of constant-depth  quantum circuits consisting of bounded fan-in gates) that no constant-round diffusion language model ($\textsf{DLM}$) with shallow scheduling and denoising can sample within constant distance, even when allowed sublinear chain-of-thought and output-token revision/remasking events, the very features modern $\textsf{DLM}$s rely~on.
  \vspace{2mm}
\item \textbf{Functional separation.}
We exhibit a function computable in $\land \circ \textsf{QNC}^0[\log\log n]$ (i.e., a family of $O(\log\log n)$-depth $\textsf{QNC}^0$ circuits, where $n$ is the input length, followed by a single classical $\mathsf{AND}$ gate) such that any constant-depth decoder-only transformer computing the function must be \emph{large}: it would have to have width $n^{\Omega(1)}$.
  \vspace{2mm}  
    \end{enumerate}
    Together, our work initiates the study of quantum advantage in the era of large~language~models.
\end{abstract}

\setcounter{tocdepth}{2}
{\footnotesize \tableofcontents}

\newpage 

\section{Introduction}\label{sec:intro}

A central task in quantum computing is finding problems solvable by quantum computers but not solvable by their classical counterparts.  At its strongest, this asks for an \emph{unconditional} complexity-theoretic separation between quantum and classical computing. In the oracular model, exponential separations are well-known, but in the non-oracular model  such separations are far beyond current techniques for general polynomial-time computation. More recently, a fruitful line of work has focused on \emph{restricted} models and asking if one can show unconditional separations between quantum and classical~computation.

One such direction that has received a lot of traction recently is understanding the power of \emph{shallow} circuits, i.e., {constant-depth} quantum circuits in contrast to their classical counterparts. Shallow quantum circuits are well motivated from a practical point of view, since they are potentially implementable  on near-term fault-tolerant quantum computers. Understanding the power of these circuits was spurred by the groundbreaking work of Bravyi, Gosset, and K\"{o}nig~\cite{bravyi2018quantum}, who gave an unconditional relational separation between quantum and classical shallow circuits. Concretely, they considered the class of classical
circuits of polynomial size, \emph{constant} depth, unbounded fan-out, and bounded fan-in gates (which is often referred to as $\NCz$) and the class of quantum circuits of polynomial size and constant depth with single- or two-qubit gates (which is referred to as quantum $\NCz$, denoted $\QNCz$). In~\cite{bravyi2018quantum}, they exhibited a relational problem\footnote{Unlike a functional problem which has a unique output for every input, a \emph{relational} problem can have several outputs for an input; the goal is to produce one such valid output for a given input.} solvable by a $\QNCz$ circuit but not by any $\NCz$ circuit. Subsequent works strengthened this result in several directions, showing that the classical lower bound holds in the average case~\cite{coudron2021trading,gall2018average}, that the quantum upper bound is robust to noise~\cite{bravyi2020quantum}, that stronger separations can be obtained between $\QNCz$ vs $\ACz$~\cite{watts2019exponential},  $\QNCz$ vs $\GCz$~\cite{grewal2024improved}  (we will formally define these classes later), as well as in interactive settings~\cite{grier2020interactive} and sampling separations between $\QNCz$ and classical classes~\cite{watts2023unconditional,grier2025quantum}. These results suggest that, in general, shallow quantum circuits can generate and utilize correlations that are inaccessible to several ``standard'' shallow classical circuit~classes.  

A natural next question is whether such separations  can be shown for more ``contemporary'' classical computational models. In this direction, given the remarkable success of \emph{large language models} ($\LLM$s) in practice, a rapidly growing body of work has sought to understand the expressive power and computational limitations of various models of $\LLM$s. In this regime, transformer models~\cite{vaswani2017attention} have been the dominant architecture for prediction-based language modeling, while diffusion language models~\cite{sohl2015deep,ho2020denoising} ($\DLM$s) have provided an alternative paradigm for content generation.  Although these models can seem superficially different from the traditional circuit complexity classes studied in complexity theory (such as $\NCz$, $\ACz$, and $\TCz$), their theoretical abstractions often retain a key shallow-classical feature: every layer of token updates is generated by bounded-depth classical computation with limited bandwidth. 
This suggests a new form of quantum-classical comparison: rather than only asking whether one can separate $\QNCz$ from the ``theoretically-motivated" circuit classes $\NCz$, $\ACz$, or $\GCz$, we can now also ask whether shallow quantum circuits can solve computational or sampling tasks that are hard for transformers or $\DLM$s, resulting in a new realm of separation results.  Given the recent surge of interest in understanding large language models and the role of quantum computing in machine learning, a particularly timely question is the following:
\begin{quote}
\centering
    \emph{Is there a task solvable by shallow quantum circuits but not by classical $\LLM$s?}
\end{quote}

\subsection{Main results}
We show two results that separate quantum shallow circuits from transformers and $\DLM$s for functional and distributional tasks respectively. In particular, we exhibit the following:
\begin{enumerate}
    \item a function that is computable by a $O(\log \log n)$-depth $\QNCz$ circuit followed by a single classical $\textsf{AND}$ gate, but hard to compute by a shallow-depth transformer unless it is ``large'';\footnote{ We formally define the notion of \emph{small} and \emph{large} transformers in Section~\ref{sec:transformers}.}
    \item  a distribution that is sampleable in $\QNCz$, but hard to approximately sample from a $\DLM$ which uses $\GCz$ denoisers,\footnote{The circuit class $\GCz[k]$ consists of unbounded fan-in and fanout $\textsf{AND},\textsf{OR},\textsf{NOT}$ gates along with the $\textsf{G}[k]$ gate, which  on input $x\in \{0,1\}^n$, can compute an arbitrary function of $x$ if $|x|\leq k$ and evaluates to $0$ otherwise. Throughout this work, we will restrict to $k\leq O(\log n)$.  } has sublinear chain-of-thought/workspace, and a sublinear number of output-token revision/remasking events (which are the features modern $\DLM$s rely~on).
\end{enumerate}
A key distinction from earlier shallow-circuit separations is that our lower bounds are not only against classical circuit classes such as $\NCz,\ACz,\GCz$, but against basic transformer and diffusion language models used in recent theoretical studies of modern language models~\cite{chen2024theoretical,jiang2025diffusion} to understand the strengths and weaknesses of these models. Although our tasks are deliberately complexity-theoretic, the separating models are among the closest formal proxies currently available for the prediction and generation mechanisms used~in~practice. To formulate our main results formally, we need a high-level definition of transformers and $\DLM$s, which we provide in the next sections (Sections \ref{sec:transformers-overview} and \ref{sec:dlms-overview}).  For each of these models, we give an overview of some prior \emph{classical} works in the related area, which serve as a starting point for our results; after that, we state our main theorem for that model, and give a detailed proof sketch of the separation in question.

\subsection{Functional separation}\label{sec:transformers-overview}

\subsubsection{Transformer models (a high level overview)}

In this section, we give a high-level overview of transformers, sufficient to understand the proof overview that follows subsequently. In Section~\ref{sec:transformer-prelims} we give a detailed overview of the model.

\emph{\textbf{Transformers}} form the backbone of most modern $\LLM$s. The core transformer architecture is  specified by a number of parameters: the input length $n$ (the ``context size''), the number $L$ of layers (the ``depth''), the number $H$ of attention heads, the embedding dimension $d$, and a precision parameter $p$. A transformer can be viewed as a sequence of length-preserving functions on $d$-dimensional real vectors across a sequence of \emph{layers}, once the input text has been embedded into one such sequence. The transition between every pair of adjacent layers in the transformer architecture is determined by parameterized \emph{query}, \emph{key}, and \emph{value} vectors. At a high level, the so-called \emph{attention} mechanism takes as input a vector $\mathbf{x} \in \R^n$, and then takes a position-wise weighted average of value vectors (where the weights are a function of the key and query vectors), followed by a position-wise piecewise affine transformation. In this work, we exclusively consider \emph{decoder-only} transformers, where each position attends only to positions \emph{preceding} it, reflecting the causal structure of autoregressive generation. We emphasize that almost every prominent modern $\LLM$, such as GPT~\cite{radford2019language}, Claude~\cite{claude:decoder}, Gemini~\cite{team2024gemini} and LlaMa~\cite{llama:decoder}, use variants of the decoder-only transformer architecture, due to their superiority in next-token prediction.

From the perspective of complexity, the key resource for transformers is not merely the total number of parameters, but the \emph{total amount of information}  communicated through the attention mechanism at each layer. Following recent work, this notion of ``useful'' information transmitted through the layers is captured by the so-called \emph{width} $Hdp$, which is simply the product of the number of heads, the embedding dimension, and the precision used to represent each numeric coordinate.  Following the convention of~\cite{chen2024theoretical,sanford2023representational} we say a transformer is \emph{small} if $Hdp=n^{o(1)}$, and \emph{large} otherwise. In particular, a computational task can be thought of as requiring large width if every constant-depth decoder-only transformer solving the problem requires large internal communication in between the layers. We note that the assumption of constant depth is reasonable in practice.

\subsubsection{Prior works and context}
We now provide some context and prior works in understanding the complexity of transformers, which will set up the main motivation for our results.
\begin{enumerate}[$(i)$]
    \item There have been several recent complexity theoretic works characterizing the power of transformer models in terms of circuits~\cite{chiang2022overcoming,chiang2023tighter,YCAhard,li2024chain}. Depending on the assumptions on the parameters of the transformers in question (i.e., $H,d,p$ defined earlier), their expressive power can be shown to lie somewhere roughly between $\ACz$ (e.g., for constant-precision transformers using unique hard attention) and $\TCz$ (log-precision transformers using softmax attention). Hence, it is conceivable that the results in~\cite{watts2019exponential,grewal2024improved} imply that $\QNCz$ can solve (relational) tasks that the small class of constant-precision transformers cannot solve.\footnote{We believe making this rigorous might be fruitful and leave it for future work.}
    \item Chen, Peng, and Wu~\cite{chen2024theoretical} were the first to show unconditional limitations of decoder-only transformer models (in the regime of \emph{non-constant} precision, i.e., in the $\TCz$ regime). Although their main candidate problem certifying hardness is not quite phrased in these terms, it can actually be viewed as a variant of the well-known \emph{index} function. Their lower bound was obtained via a reduction to a novel communication complexity model.
    \item Finally, a recent result of Grier et al.~\cite{QAC0tc0} shows how to compute arbitrary $\TCz$ functions in $\QACz$, when the quantum circuit is provided with $O(n^2)$ copies of the input (where $n$ is the length of the input),\footnote{Note that in classical complexity we often assume that the bits and gates have \emph{fanout}, i.e., $x_1,\ldots,x_n$ can be fed to several gates,  and so we can get polynomially many copies of the bits $x_1,\ldots,x_n$ for free. However, quantum gates do not have fanout, hence this is not a free operation for quantum circuits. Hence, one needs to account for ``copying" the input $x$ and using $t$ copies of the input increases the depth of the circuit by $O(\log t)$; so, creating $\textsf{poly}(n)$ many copies of the input adds $O(\log n)$ depth to the quantum circuit.} and further shows that computing the index function essentially requires these many copies.
\end{enumerate}
The starting point for our work is the observation that we can avoid the $O(n^2)$ copies for a certain class of problems which are also hard to compute by transformers (which computationally form a subclass of $\TCz$). In particular, we generalize the candidate function of~\cite{chen2024theoretical}, and show that this generalized function requires large width for constant-depth transformers computing it, but is computable by the ``simplest'' non-trivial class $\land\circ \QNCz[\log \log n]$, i.e., an $O(\log \log n)$-depth $\QNCz$ circuit followed by a single classical $\textsf{AND}$ gate.\footnote{We call this the \emph{simplest} class here since we show in addition that, if a function is exactly computed by $\land\circ \QNCz[\log \log n]$, then this function is actually computable by $\ACz$.}

\subsubsection{Proof sketch of the functional separation}

\textbf{Why not a relational separation?} We first explain why the well-known \emph{relational} separations used in~\cite{bravyi2018quantum,watts2019exponential,grewal2024improved}  do not imply separations between $\QNCz$ and transformers of the form that we want. To the best of our knowledge, almost all known existing techniques for proving lower bounds on transformer width are primarily via proving communication complexity lower bounds~\cite{sanford2023representational,peng2024limitations,chen2024theoretical}. In this communication model, the relevant resource is not the total number of bits of communication, but rather the \emph{per-player bandwidth}, which is typically measured \emph{proportional} to the size of that player's input. This creates a basic obstacle for relational problems: the player's output needs to be similar to the input (in order to have a well-formed problem), and if each player is allowed bandwidth proportional to their input length, then enough information can be communicated to solve the corresponding relational communication task in a straightforward way. Thus, the bandwidth regime in which we can hope for a nontrivial lower bound essentially disappears. In contrast, for functional problems,  there are meaningful intermediate bandwidth regimes: the output is a single fixed bit, so it makes sense to ask for a bound on the amount of local information that must be transmitted in order to compute this target bit. This is the reason we focus on a functional separation instead of a relational one.

\paragraph{Our candidate and lower bound.} In light of the above discussion, we turn towards functional problems, and in particular to the one considered by~\cite{chen2024theoretical}, essentially a variant of the \emph{iterated index function}.  Before discussing the iterated version, we first define the index function: on input $(i, \mathbf{x}) \in [n]\times\{0, 1\}^n$, the goal is to output $x_i$. It is a well-known fact that the index function can be computed in $\ACz$; our goal was to express this circuit in terms of quantum gates, and in particular through weak $\QACz$ circuits, which are the same as $\QNCz$ except that we also allow \emph{many-qubit} gates (potentially acting on $n$ qubits) on top of the usual one- and two-qubit gates. There have been several recent works aimed at understanding the power of multi-qubit gates in computing well-known Boolean functions~\cite{green2001counting,rosenthal2020bounds,vasconcelos2024learning,fenner2025tight,gretta2026parity,gretta2026super,joshi2026improved,anshu2025computational,bao2024learning,nadimpalli2024pauli}; however, a common concern acknowledged in these works is that it is \emph{undesirable} to allow multi-qubit gates in quantum circuits, since they require coherent interaction between arbitrary qubits, which is incompatible with locality and noise constraints in physical architectures of current quantum devices. At this point, the following sequence of results yields our separation:
\begin{enumerate}
    \item \textbf{Index with copies.} We first observe that the index function can be computed in $\QACz$ assuming we have \emph{copies} of the input. This is undesirable for two reasons: $(i)$ we require copies and $(ii)$ $\QACz$ circuits can be trivially simulated by $\QNCz$ circuits of depth $O(\log n)$. At this point, we observe that one can overcome this trivial simulation as follows. We observe that the index function can be written as a $\QNCz[\log \log n]$ circuit followed by a (large) $\textsf{AND}$ gate at the end, however we still need copies of the input $i$ in order to compute $x_i$. 
 
    \item \textbf{Generalizing to iterated index, still with copies.} Next, consider the \emph{iterated index function} (as defined by Chen et al.~\cite{chen2024theoretical}) as follows:
    $$
    \textsf{IterInx}: [m]\times [N_1]^m\times [N_2]^{\log N_1}\times \cdots \times [N_\ell]^{\log (N_{\ell-1})}\rightarrow [N_\ell],
    $$
    defined as
    \begin{align}
    \label{eq:defnofIterinx}
    \textsf{IterInx}(i,x^1,\ldots,x^\ell)=x^\ell(x^{\ell-1}(x^{\ell-2}(\cdots x^1(i))\cdots )),
    \end{align}
      for a pre-defined choice of parameters $m,N_1,N_2,\ldots,N_{\ell-1}$. 
    In other words, we iteratively compute index on $(i, x)$, look at $x_i\in [N_1]$, then evaluate $x_2$ on $x_1$ (where we view $x_1$ as a $(\log N_1)$-bit string) and repeat this iteratively for $\ell$ many functions. Using the previous bullet point wherein one can compute index using copies, we show that one can actually even compute $\textsf{IterInx}$ also in $\land \circ \QNCz[\log \log n]$ given copies of the input.
    \item \textbf{Removing the multiple copies.} The next natural step is to remove the assumption of many copies of the input being given. Using a standard trick also used in~\cite{QAC0tc0}, we first define the \emph{multi-index} function as:
    \begin{align}
    \label{eq:multiinx}
    \textsf{MultInx}: (i_1,\ldots,i_N,x)=x_i\cdot [i_1=\cdots=i_N],
    \end{align}
    which is now a total function. Note that by definition, if $i_1=\cdots=i_N$, then we effectively have access to copies of the input. However, the issue is that one needs to check the condition $[i_1=\cdots=i_N]$. For this, we show that this indicator can be checked in $\land\circ \QNCz[\log \log n]$ as well. Furthermore, the same circuit used for checking the index can in fact also compute $x_i$ without increasing the complexity (our construction is subtle in that we ensure that the eventual circuit is not in $\wedge \circ \vee\circ \QNCz[\log \log n]$, but just a single $\textsf{AND}$ gate at the end of the $\QNCz[\log \log n]$ circuit).
    \item \textbf{Generalizing to a hard problem for transformers.} Now that we have a total function $\textsf{MultInx}$, the goal is to prove a lower bound against transformers. As we observed earlier, this problem is \emph{a priori} easy for decoder-only transformers. To make it hard for the transformer model, we define our final candidate function to be the \emph{iterated multi-index} function, denoted $\textsf{IterMultInx}$ (putting together the functional computation Eq.~\eqref{eq:defnofIterinx} along with the predicate of Eq.~\eqref{eq:multiinx}). Here, we have copies of truth tables $\{x^1_i,\ldots,x^\ell_i\}_{i\in [t]}$ (the $t=1$ case is precisely  the $\textsf{IterInx}$ function) and with these copies of the truth table and index position, we show that $\IMIndex$ can be computed in our desired quantum circuit class. A description of this final candidate is quite dense and technical, and so we refer the reader to Section~\ref{sec:multiindex} for the~details. 
    \item \textbf{Putting everything together.} The main challenge in putting together everything is keeping track of several parameters which need to be carefully chosen to witness our separation. Furthermore, the current outline is simplified slightly for ease of readability, but the eventual $\IMIndex$ candidate is actually chosen with a slightly more contrived base index function which is also defined over a larger alphabet. With these technicalities in place, we concretely show that
    \begin{inparaenum}[$(i)$]
        \item the $\textsf{IterMultInx}$ can be computed still in $\land \circ \QNCz[\log \log n]$; 
        \item the $\IMIndex$ problem can be reduced to the $\textsf{IterInx}$ problem, and the hardness for transformers can be therefore imported via the lower bound by Chen et al.~\cite{chen2024theoretical}.
    \end{inparaenum}
     We note here that our lower bound is polynomially weaker than the one in \cite{chen2024theoretical}, since our problem is more~general. 
\end{enumerate}
Finally, we remark that our separation above is \emph{essentially tight}. In particular, we show that for deterministic protocols, any function $f$ that is computed by a $\land \circ \QNCz[\log \log n]$ circuit whose depth is at most $\log \log n$ (i.e., when the implicit constant in the $O(\log \log n)$ depth is at most $1$) can also be computed in $\ACz$, a circuit class that essentially corresponds to transformers with \emph{constant precision} (which is a strong assumption when it comes to transformer computations). Since our goal is to prove lower bounds against non-constant precision transformers, our quantum upper bound of $\land \circ \QNCz[\log \log n]$ (where the implicit constant in the $O(\log \log n)$ depth is strictly greater than $1$) allows us to obtain a function that is in $\QNCz$ but not computable by transformers unless their width is large, i.e., $Hdp=n^{\Omega(1)}$.

\subsection{Distributional separation}\label{sec:dlms-overview}

\subsubsection{Diffusion models (a high level overview)}

In this section, we give a high-level overview of $\DLM$s, sufficient to understand the proof overview that follows. As with transformers in Section~\ref{sec:transformers-overview}, we first give a detailed overview of the model.

\emph{\textbf{Diffusion language models}} ($\DLM$s) are a central generative paradigm for tasks such as image  and text generation. In this work, we will exclusively consider discrete diffusion models. $\DLM$s are organized differently from transformers, and are tailor-made to generate samples from distributions. At a high level, $\DLM$s generate samples by training models via the following process: they introduce random noise to training data, and then train the model to systematically remove the noise, so that the eventual model can generate new noiseless samples from the distribution. More formally, instead of generating tokens strictly from left to right (as in transformers), a $\DLM$ starts with the so-called \emph{masked} sequence $\{\MASK\}^\star$ (which one can think of as maximally random) and iteratively \emph{reveals/unmasks} tokens to produce elements from the alphabet $\Sigma$ (for simplicity in this paper we let $\Sigma=\{0,1\}$). So, at any point in the process, the $\DLM$ maintains a partially revealed sequence whose coordinates are of the form $\Sigma \cup \{\MASK\}$, where $\Sigma$ is the fixed token alphabet and $\MASK$ is an auxiliary token. The sequence contains input coordinates and output coordinates, together with possible auxiliary \emph{workspace} or \emph{chain-of-thought} (CoT) coordinates. The computation itself begins with several masked coordinates, and proceeds through a number of \emph{denoising rounds} which consists of two phases: In the first phase, a \emph{scheduler} decides on the set of currently masked coordinates to be revealed, a decision that can adaptively depend on all the (non-mask) tokens generated so far, including workspace tokens; and  in the second phase, a \emph{denoiser} circuit is applied to each newly revealed coordinate, and a value is~sampled.

The defining \emph{structural} feature of a standard $\DLM$ is
\emph{within-round conditional independence}. Once the current partially
generated sequence is fixed, the coordinates updated in that round are sampled
independently (but potentially dependent on the sequence at the beginning of the round), using the corresponding coordinate-wise denoising marginals.
Thus, after conditioning on the visible transcript and the current
workspace/CoT coordinates, a single denoising step induces a product distribution over
the coordinates it updates.  Global correlations can still arise, but only through the iterative interaction of scheduling, workspace, and future denoising rounds. This makes $\DLM$s powerful enough to build correlations adaptively, while still imposing a strong structural constraint on what can be generated in one round.

\subsubsection{Prior works and context}
The starting point for our result is the work by Jiang, Haghtalab, and Chen~\cite{jiang2025diffusion}, which studies $\DLM$s as parallel samplers and analyzes the  strengths and weakness of revision/remasking/chain-of-thought in $\DLM$ models. In particular, there is one striking feature in their work: they consider the distribution:
\begin{equation*}
U_B^\oplus=\textsf{unif}\{x\in \{0,1\}^B: |x|\equiv 0\pmod 2\}.
\end{equation*}
They show that standard $\DLM$s, without remasking or revision, cannot sample from this distribution
in $O(1)$ steps when the predictor and scheduler are implemented by $\ACz$ circuits. However, their work cannot handle the scenario where the $\DLM$s are augmented with a (bounded) number of chain-of-thought tokens, and a (bounded) number of token revision/remasking \emph{events}. Note that both of these are features that modern $\DLM$s crucially rely on. Although incomparable, the best known distribution sampling quantum-classical separation we know of is due to Bene Watts and Parham~\cite{watts2023unconditional} separating $\QNCz$ from $\NCz$; so it is natural to try and prove sampling lower bounds using the candidate problem from \cite{jiang2025diffusion} against $\DLM$s with $\ACz$ denoisers. Furthermore, several quantum/classical circuit separations~\cite{bravyi2018quantum,watts2019exponential,watts2023unconditional} are in essence based on the observation that one can sample parity-based relations and distributions in $\QNCz$. With this in mind, our main contributions are as follows:
\begin{enumerate}[$(i)$]
    \item we \emph{generalize} the candidate problem of~\cite{jiang2025diffusion} and define a \emph{block-parity function}, which still admits an efficient $\QNCz$ protocol, but is provably hard for $\DLM$s;
    \item using the generalized candidate, we are able to prove lower bounds even when we allow \emph{revision/remasking and chain-of-thought} (with the caveat that we only allow at most sublinear number of tokens for these augmented features);
    \item our lower bounds allow \emph{$\GCz$ denoisers}\footnote{We do not define the class $\GCz$ here and refer the reader to Section~\ref{sec:circuits-prelims}; we simply note that it is a classical circuit complexity class that is strictly more powerful than $\ACz$.}, which are significantly stronger than $\ACz$ denoisers;
    \item  we show hardness even to \emph{approximately sample} from the newly constructed distribution. Indeed, along the way, we show several structural properties of $\DLM$s that we hope might be of independent interest.
\end{enumerate}
The main technical contribution of our work is the classical lower bound, which combines several ideas from theoretical computer science.  First, we use a direct-product style amplification to turn many local parity constraints into a global obstruction.  Second, we view a $\DLM$ execution as a query-like process whose visible behavior, after hiding the workspace/CoT coordinates, can be
expressed as a convex combination of simple decision-tree-like distributions. Finally, we prove that block-parity functions cannot be approximated by such small convex combinations. We make this candidate problem and proof structure clear in the next section.

\subsubsection{Proof sketch of the distributional separation}
We now give a proof sketch of our $\DLM$ lower bound. The task that we consider here is to sample from the distribution:
\begin{equation*}
D_{\mathrm{blk}}
=
\bigotimes_{b=1}^M U_B^\oplus=\{(x^1,\ldots,x^M): x^i\in \{0,1\}^B, |x^i| \equiv 0\pmod 2 \text{ for all }i\}
\end{equation*}
within total variation distance $\eta$ for some fixed constant $\eta>0$. The quantum algorithm to sample from this is easy, and the main ideas have appeared before~\cite{bravyi2018quantum,watts2019exponential,grewal2024improved}. We now sketch the classical lower bound.  Assume towards a contradiction that there exists a constant-round standard
$\DLM$ which samples from $D_\blk$, with workspace length $s$, fully adaptive scheduling, bounded
revision/remasking, and polynomial-size constant depth $\GCz[\log n]$ scheduling and denoising circuits.

For each block $b\in[M]$, define its \emph{completion round} to be the first
round after which all $B$ output coordinates in that block have been generated.
Let $r_b$ be the number of previously ungenerated coordinates of block $b$
that are generated in this round. Each block is completed in one of two~ways.
\begin{itemize}
    \item $r_b \geq 2$, i.e., the $\DLM$ completes the block by unmasking many fresh coordinates at once;
    \item $r_b = 1$, i.e., the $\DLM$ completes the block by generating a single final coordinate, with the remaining $B-1$ coordinates having
    already been fixed in previous rounds.
\end{itemize}

The proof is a win--win argument over these two completion modes. Suppose first
that many blocks are completed with $r_b\ge 2$. For such a block, conditioned
on the coordinates revealed earlier in the block, the $r_b$ fresh coordinates
are not freely distributed: under the target distribution, they must satisfy one
affine parity constraint. Thus a round that completes many such blocks must
simultaneously realize many independent within-block parity constraints. We
prove a structural theorem below showing that this is impossible for the conditional
laws produced by a bounded-workspace $\DLM$. Hence, it follows that only few blocks can be completed in the multi-bit way and  for most blocks, we must have $r_b=1$. But then the $\DLM$ is forced into a
``last-bit prediction" regime: when it completes such a block, the final coordinate
has to be the parity of the previous $B-1$ coordinates. This can be used to show that a shallow $\ACz$ or $\GCz$ circuit predicts parity with constant bias, contradicting the
usual parity lower~bound of Hastad~\cite{haastad2014correlation} or Kumar~\cite{kumar2023tight}.

Formally, the argument proceeds in four main steps.

\begin{enumerate}
    \item \textbf{Mixtures of product distributions induced by $\DLM$s.}
    Fix a round $t$, and condition on the revealed output history $\tau_{\textsf{out}}$,  i.e., the revealed output coordinates for the final sample together with their values, the scheduling history up to round $t$, and the previously generated workspace/chain-of-thought tokens.  Let $w$ denote the revealed current workspace so far. Now, using the independence properties of $\DLM$s, we have that the output distribution of the $t$-th round conditioned on $\tau_{\textsf{out}}$ and $w$ is given by:
    \begin{equation*}
    P(\text{output of the $t$-th round}\mid\tau_{\textsf{out}}, w) = Q_{\tau_{\textsf{out}}, w},
    \end{equation*}
    where $Q_{\tau_{\textsf{out}}, w}$ is a product distribution over the coordinates revealed in round $t$. Hence, averaging over the possible current workspaces gives:
    \begin{align}\label{eq:DLMapproxdist}
    P(\text{output of the $t$-th round}\mid\tau_{\textsf{out}}) = \sum_w\Pr[w\mid\tau_{\textsf{out}}]\cdot Q_{\tau_{\textsf{out}}, w}. 
    \end{align}
    Since the token alphabet has constant size, the current workspace contains at most $s$ tokens, and so there are at most $2^{O(s)}$ possible values of $w$. Consequently, conditioned on
    any visible transcript, a denoising round induces only a small latent mixture of product distributions.
    
    \item \textbf{Multi-bit block completions.}
    Now, fix a transcript branch and consider a block $b$ completed in some round with $r_b \geq 2$. Under the target distribution $U_B^\oplus$, the full block satisfies $\bigoplus_{i=1}^B x_{b,i}=0$. Once some coordinates of the block have already been revealed, the remaining unrevealed coordinates are still uniform subject only to the residual parity constraint determined by the revealed prefix. Therefore, if the $\DLM$ completes the block by generating $r_b$ fresh coordinates simultaneously, the correct conditional law for these fresh coordinates is:
    \begin{equation*}
    U^{a_b}_{r_b}
    :=
    \mathrm{unif}
    \left\{
    x\in\{0,1\}^{r_b}
    :
    \bigoplus_i x_i = a_b
    \right\},
    \end{equation*}
    where $a_b$ depends on the previously revealed coordinates and the parity of their respective coordinates. Importantly, these distributions are not products over coordinates: the
    fresh coordinates inside the block must jointly satisfy a parity condition. Since $D_{\mathrm{blk}}$ is product across blocks, if many blocks are completed simultaneously, then the target conditional law~becomes:
    \begin{align}\label{eq:structureofV}
    V
    =
    \bigotimes_b U^{a_b}_{r_b},
    \end{align}
    namely, a product across blocks of internally parity-constrained factors.
    \item \textbf{Inapproximability of Eq.~\eqref{eq:structureofV}.} Now, our main structural result is that a distribution of the form given in Eq.~\eqref{eq:structureofV} 
    cannot be approximated by a small mixture of coordinate-wise product distributions. Quantitatively, if Eq.~\eqref{eq:DLMapproxdist}, written abstractly as:
    \begin{equation*}
    P=\sum_{h=1}^K \lambda_h Q_h
    \end{equation*}
    is a mixture of product distributions satisfying $\TV(P,V)\le \eta$,
    then we necessarily have:
    \begin{equation*}
    \log K
    \ge
    \Omega\!\left(
    \sum_b (r_b-1)
    \right).
    \end{equation*}
    Intuitively, each block completed with $r_b\ge 2$ contributes an independent intra-block parity constraint: the $r_b$ fresh coordinates must lie in an affine parity coset, i.e., satisfy $\oplus_i x_i=a_b$, rather than in a product distribution. A small latent mixture cannot realize many such constraints simultaneously. Quantitatively, if a round completes a collection of blocks with completion sizes $\{r_b\}$, then approximating the corresponding conditional law requires at least $2^{\Omega\left(\sum_b (r_b-1)\right)}$ product components. On the other hand, after conditioning on the visible transcript, the only hidden information is the workspace/CoT coordinates, which gives at most $K=2^{O(s)}$ mixture components. Hence, we have $\sum_b (r_b-1)=O(s)$ in any typical denoising round. In particular, since every block with $r_b\geq 2$ contributes at least $1$ to this sum, only $O(s)$ blocks can be completed by simultaneous multi-bit generation over all (constantly many) rounds. Therefore, when $s=o(M)$, almost every block must fall in the category where $r_b=1$. Hence, we have reduced the task of $\DLM$ approximating $D_{\blk}$ to the task of predicting a parity bit, which will be the basis for our lower bound.
    
    \item \textbf{Reduction of final-bit completion to parity prediction.} Suppose now that a block satisfies $r_b=1$. At its completion round, the $\DLM$ has already generated $B-1$ coordinates and must generate the last one. Under $U_B^\oplus$, this last coordinate is forced:
    \begin{equation*}
    X_{b,j}=\bigoplus_{i\neq j}X_{b,i}.
    \end{equation*}
    Thus, a successful $\DLM$ must predict a parity bit. Technically, if the previous $B-1$ bits generated by the $\DLM$ were uniformly random, then we are done since we can use known lower bounds of $\GCz,\ACz$ computing parity. However,  the first $B-1$ bits were generated by the $\DLM$ itself, while the parity lower bound is for uniform external inputs. This is the most subtle part of the proof, and showing this requires us to account for the entropy of various blocks along the $\DLM$ process and show that there exists at least one block among the $M$ blocks in which the pre-completion $B-1$ bits have large entropy. In this block $b^\star$, we can use the probabilistic method to show that the pre-completion bits are already \emph{close} to uniform, which means that we can replace these bits by a set of independent uniform bits, This changes the execution distribution by only $o(1)$ in total variation, while leaving the final completion round unchanged. After this replacement, the final generated bit predicts the parity of a uniform $B$-bit input with constant advantage. Since the planted execution can be simulated by a shallow $\GCz[\log n]$ circuit, we obtain a shallow circuit that computes parity on $B-1$ bits with constant advantage, contradicting the known parity lower bounds for $\GCz[\log n]$ circuits \cite{kumar2023tight,grewal2024improved}.
    \item \textbf{Handling Remasking/Revision.}  Finally, we extend the argument to bounded remasking/revision. We observe that this case requires a small modification of the entropy arguments. We cannot unfortunately simply restrict our attention to untouched blocks, since the set of touched blocks may be correlated with the generated values. Instead, we account for the entropy argument by adjusting the probability that it is revised after completion. Since there are at most $R$ revision/remasking events, this adds only an $o(M)$ correction for our parameters. In fact, the same win-win and planting argument as before goes through now. Each output-token remasking/revision event touches exactly one coordinate, and hence belongs to one block. Therefore, at most $R$ distinct blocks can be touched by all such events. On the remaining blocks, the execution is equivalent to one without output-token remasking/revision, and the previous argument applies. Restricting to the untouched blocks eliminates revision entirely, and the previous argument applies verbatim. Thus the lower bound survives as long as  $s+R=o(M)$, which indeed motivates our main choice of parameters.
\end{enumerate}

\subsection{Outlook and further directions}\label{sec:outlook}
We see this work as initiating the task of separating quantum computation and classical $\LLM$s. Naturally, there are several questions arising from this work that are worth further investigating. We hope that this serves as a launching pad for substantial future work from the community. 
\begin{enumerate}
    \item \textbf{Strengthening the separations.} Is there a candidate problem for which we can also show $\DLM$-hardness, even in the presence of linear chain-of-thought, and remasking and/or revision? In the case of transformers, we provide one reason why a more typical relational separation (where the input and output length are the same) seems essentially impossible to obtain, but this does not preclude more contrived search-based or relational separations between $\QNCz$ and decoder-only transformers. Finally, since transformers (without chain-of-thought) are known to be inside $\TCz$, it is conceivable that one can obtain a width-$\Omega(n)$ lower bound without that implying a complexity-theoretic breakthrough.
    \item \textbf{Randomized separation.} This work is only concerned with deterministic functions. If we move to randomized problems (where the output needs to be correct with probability $\geq 2/3$, say), can we also show separations, between, e.g., $\land \circ \QNCz$ and transformers?
    \item \textbf{Quantum-augmented $\DLM$s.}  In the model we consider here, recall that new bits are generated using $\ACz$ or $\GCz$ denoisers. What if we used quantum circuits to generate new bits? Could such quantum-augmented $\DLM$s be more powerful than using $\DLM$s with classical-circuit based denoisers? More generally, is there a deeper connection between quantum dynamics and $\DLM$s?\footnote{A recent work of Layden et al.~\cite{layden2025wavefunction} showed that the dynamics of \emph{continuous}-model $\DLM$s can be efficiently simulated on a quantum computer.}
    \item \textbf{Copy complexity of transformers.} As mentioned earlier, a recent work of Grier et al.~\cite{QAC0tc0} shows how to compute $\TCz$ functions using $\QACz$, with the caveat that one needs $O(n^2)$ copies of the input (assumed to be on $n$ bits). Since transformers can be shown to roughly interpolate between $\ACz$ and $\TCz$ (depending on the precision parameter and the type of attention used), it is not inconceivable that computing these functions would require fewer copies for the $\QACz$ circuit. Could this be formalized?
    \item \textbf{Architecture-independent separations.} Our functional and distributional separations rely on different structural limitations of transformers and diffusion language models. Is there a single task solvable by shallow quantum circuits for which every bounded-resource classical language model—whether autoregressive, diffusion-based, or a hybrid of the two—must incur a polynomial cost in at least one of width, chain-of-thought/workspace, sequential generation rounds, or token revision?
\end{enumerate}

\vspace{2mm}
\textbf{Final remarks.} Finally, we make a couple of important remarks on our results to contextualize our work fairly. The separations that we exhibit here are \emph{entirely} theoretical. Our motivation is to understand how quantum algorithms could be better than classical $\LLM$s at certain ad hoc tasks that are \emph{specifically} designed to show these separations. Secondly,  state-of-the-art AI models rely on significant amounts of chain-of-thought~\cite{wei2022chain}, which is empirically known to improve the ability to reason by forcing the $\LLM$ to perform step-by-step ``scratchpad''-style reasoning, which has been shown to yield better accuracy and more expressivity in practice. Our transformer and $\DLM$ lower bounds are proven under the assumption of \emph{no} chain-of-thought and \emph{sublinear} chain-of-thought respectively, which are very strong assumptions; the lower bounds do not hold without these restrictions in place, and indeed it is entirely conceivable that our problems are easy for classical $\LLM$s with a reasonable amount of chain-of-thought. The goal of this work, rather, is to initiate the study of the interplay between quantum circuits and classical machine learning models, which we view as being timely and important in the era of large language models.
%We hope that this work takes a small but important step in the theoretical study of quantum advantage in the era of large language models.

\vspace{2mm}
\textbf{Acknowledgments.} SA thanks Yunchao Liu for pointing the work of~\cite{chen2024theoretical} and  thanks Hongxun Wu for clarifications about their work~\cite{chen2024theoretical}.  
SA thanks ChatGPT 5.5 for help in proving Theorem~\ref{thm:approximable} and the subsequent corollary. 
RS thanks ChatGPT for some minor corrections in the parameterization of the transformer lower bound, which were since then strengthened by hand. The presentation here has been rewritten and checked by us (any further mistakes are our~own).

\section{Preliminaries and Background}\label{sec:prelims}

\paragraph{Notation.} We use base-$2$ logarithms throughout. We also define $[n] := \{1,\ldots,n\}$ for any natural number $n \in \mathbb{N}$ and $[n]_0 := \{0\} \cup [n]$.

\paragraph{Information theory.} We will need some well-known results in information theory, such as Pinsker's inequality, which can all be found in~\cite{cover1991entropy}. For a distribution $P$ on a finite set $\Omega$, its Shannon entropy is defined as:
\begin{equation*}
H(P):=\sum_{x\in\Omega}P(x)\log\frac{1}{P(x)}.
\end{equation*}
It is a fact that $H(\cdot)\leq \log|\Omega|$. 
For a random variable $X$, we write $H(X)$ for the entropy of its law. For
random variables $X,Y$, the conditional entropy is defined as:
\begin{equation*}
H(X\mid Y) := \E_{y\sim Y}\big[H(X\mid Y=y)\big].
\end{equation*}
We will use the standard facts that:
\begin{equation*}
H(X\mid Y)\leq H(X),
\qquad
H(X_1,\ldots,X_m\mid Y)
\leq
\sum_{i=1}^m H(X_i\mid Y).
\end{equation*} 
For two distributions $P,Q$ on the same finite set $\Omega$, their total variation distance is defined as:
\begin{equation*}
\TV(P,Q) := \frac12\sum_{x\in\Omega}|P(x)-Q(x)| = \sup_{A\subseteq \Omega}|P(A)-Q(A)|.
\end{equation*}
We will also use the \emph{data processing inequality} for total variation: if $f:\Omega\to\Omega'$
is \emph{any} map, then:
\begin{equation*}
\TV(f(P),f(Q))\leq \TV(P,Q).
\end{equation*}
Equivalently, taking marginals cannot increase the total variation distance.

We will need a couple of additional lemmas, as follows.

\begin{lemma}
\label{lem:conditional-tv-average}
Let $P$ and $Q$ be distributions on a finite product space $\mathcal T\times\mathcal Y$.  Then, we have:
\begin{equation*}
\mathbb E_{t\sim Q_T}
[\TV\!\left(P_{Y|T=t}, Q_{Y|T=t}\right)] \leq 2\cdot\TV(P, Q),
\end{equation*}
where $Q_T$ is the marginal distribution. 
The same bound also holds when the expectation is taken with respect to $P_T$ in place of $Q_T$. 
\end{lemma}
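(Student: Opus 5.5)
We will prove the bound by writing the conditional total variation distance explicitly and using the triangle inequality to split it into a term comparing the joint distributions and a term comparing the marginals on $\mathcal T$. Write $P(t,y)$ and $Q(t,y)$ for the joint probabilities and $P_T(t)$, $Q_T(t)$ for the marginals. Only $t$ with $Q_T(t)>0$ contribute to the expectation. For such $t$ with also $P_T(t)>0$, we have
\begin{equation*}
2\,Q_T(t)\,\TV\!\left(P_{Y|T=t},Q_{Y|T=t}\right)=\sum_{y}\left|\frac{Q_T(t)}{P_T(t)}P(t,y)-Q(t,y)\right|.
\end{equation*}
If $P_T(t)=0$, the conditional $P_{Y|T=t}$ is arbitrary, and we will simply use the trivial bound $\TV\le 1$.

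The key step is to insert $P(t,y)$ as an intermediate term and apply the triangle inequality:
\begin{equation*}
\sum_y\left|\frac{Q_T(t)}{P_T(t)}P(t,y)-Q(t,y)\right|
\le \sum_y\left|\frac{Q_T(t)}{P_T(t)}P(t,y)-P(t,y)\right|+\sum_y\left|P(t,y)-Q(t,y)\right|.
\end{equation*}
The first sum equals $|Q_T(t)-P_T(t)|$, because $\sum_y P(t,y)=P_T(t)$. For $t$ with $P_T(t)=0$ the contribution is at most $Q_T(t)=|Q_T(t)-P_T(t)|$, so the same bound applies. Summing over $t$ gives
\begin{equation*}
2\,\E_{t\sim Q_T}\!\left[\TV\!\left(P_{Y|T=t},Q_{Y|T=t}\right)\right]\le \sum_t|Q_T(t)-P_T(t)|+\sum_{t,y}|P(t,y)-Q(t,y)| = 2\TV(P_T,Q_T)+2\TV(P,Q).
\end{equation*}
By data processing (taking marginals), $\TV(P_T,Q_T)\le\TV(P,Q)$. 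Dividing by $2$ yields the bound $2\TV(P,Q)$.

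The statement with $P_T$ in place of $Q_T$ follows by symmetry: swap the roles of $P$ and $Q$, which is valid because $\TV$ is symmetric. The only delicate point is the convention for conditionals on zero-probability $t$, and the $\TV\le 1$ bound above handles it. I do not expect any real obstacle beyond choosing the right intermediate term.
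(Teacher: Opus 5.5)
The paper states this lemma without proof, so there is no argument of the authors' to compare with. Your proof is correct and complete.

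The following steps all check out:
\begin{itemize}
\item the identity $2Q_T(t)\,\TV(P_{Y|T=t},Q_{Y|T=t})=\sum_y\bigl|\tfrac{Q_T(t)}{P_T(t)}P(t,y)-Q(t,y)\bigr|$;
\item the insertion of $P(t,y)$ as the intermediate term;
\item the final appeal to data processing;
\item the symmetry argument for the $P_T$-weighted version.
\end{itemize}
Your argument actually yields the slightly sharper bound $\E_{t\sim Q_T}[\TV(P_{Y|T=t},Q_{Y|T=t})]\le\TV(P_T,Q_T)+\TV(P,Q)$, from which the stated factor $2$ follows.

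The one sentence worth tightening is the case $P_T(t)=0<Q_T(t)$. Read literally, ``the contribution is at most $Q_T(t)$'' looks like a claim that $2Q_T(t)\,\TV(P_{Y|T=t},Q_{Y|T=t})\le Q_T(t)$, which is false in general. What you need, and what is true, is the bound $2Q_T(t)$. The per-$t$ inequality still holds because $P(t,\cdot)\equiv 0$ forces $\sum_y|P(t,y)-Q(t,y)|=Q_T(t)$. Hence the right-hand side $|Q_T(t)-P_T(t)|+\sum_y|P(t,y)-Q(t,y)|$ also equals $2Q_T(t)$.

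Treating the undefined conditional as arbitrary, as you do, is the right convention: the bound then holds however those conditionals are chosen.
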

\begin{lemma}
\label{lem:fannes}
Let $P$ and $W$ be distributions on a set $\mathcal{X}$ with $\TV(P,W)=\epsilon\leq 1/4$. Then, we have
$|H(P)-H(W)|\le 2\epsilon\log|\mathcal{X}|+1.$
\end{lemma}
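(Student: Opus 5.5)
The plan is to use a coupling argument (a Fano-type estimate) rather than the sharp Fannes--Audenaert inequality. The stated bound has slack: an additive $1$ and a factor $2$ in front of $\epsilon\log|\mathcal X|$. By symmetry in $P$ and $W$, it suffices to show $H(P)-H(W)\le \epsilon\log|\mathcal X|+1$; the reverse inequality follows by exchanging the roles of $P$ and $W$.

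\textbf{Step 1 (maximal coupling).} Since $\mathcal X$ is finite, there is a joint distribution of random variables $(X,Y)$ with $X\sim P$, $Y\sim W$ and $\Pr[X\neq Y]=\TV(P,W)=\epsilon$. Concretely, with probability $1-\epsilon$ draw from the normalized overlap $\min(P,W)/(1-\epsilon)$ and set $X=Y$. Otherwise draw $X$ and $Y$ independently from the normalized excess parts $(P-W)_+/\epsilon$ and $(W-P)_+/\epsilon$. Let $E:=\mathbf 1[X\neq Y]$, so that $\Pr[E=1]=\epsilon$.

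\textbf{Step 2 (entropy chain).} Using only the standard facts recalled above,
\begin{equation*}
H(X)\le H(X,Y)=H(Y)+H(X\mid Y)\le H(Y)+H(E)+H(X\mid Y,E).
\end{equation*}
The last inequality holds because $H(X\mid Y)\le H(X,E\mid Y)=H(E\mid Y)+H(X\mid Y,E)$ and $H(E\mid Y)\le H(E)$. The two terms are bounded as follows.
\begin{itemize}
\item Conditioned on $E=0$ we have $X=Y$, so $H(X\mid Y,E=0)=0$.
\item Conditioned on $E=1$, $X$ takes at most $|\mathcal X|$ values, so $H(X\mid Y,E=1)\le\log|\mathcal X|$.
\end{itemize}
Averaging over $E$ gives $H(X\mid Y,E)\le\epsilon\log|\mathcal X|$. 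Since $E$ is a single bit, $H(E)\le 1$.

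\textbf{Step 3 (conclusion).} Combining, $H(P)-H(W)\le 1+\epsilon\log|\mathcal X|\le 1+2\epsilon\log|\mathcal X|$, and symmetry gives the absolute-value bound. The hypothesis $\epsilon\le 1/4$ is not needed for this route. It would only matter for replacing the additive $1$ by the binary entropy $h(\epsilon)$, which is monotone on $[0,1/2]$.

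The only step needing any care is Step 1: one must check that the explicit coupling has the right marginals, which is a one-line computation, or simply cite the coupling characterization of total variation distance. The rest is bookkeeping with the conditional-entropy facts already stated in the preliminaries, so there is no substantial obstacle.
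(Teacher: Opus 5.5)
Your proof is correct, and it takes a different route from the paper. The paper's proof is a one-line appeal to the standard $L_1$ continuity bound for entropy (as in Cover--Thomas), $|H(P)-H(W)|\le \|P-W\|_1\log\frac{|\mathcal X|}{\|P-W\|_1}$. That bound is valid only for $\|P-W\|_1\le 1/2$, which is the sole reason for the hypothesis $\epsilon\le 1/4$. The paper then absorbs the residual term $2\epsilon\log\frac{1}{2\epsilon}$ into the additive constant via $u\log(1/u)\le(\log e)/e<1$.

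Your maximal-coupling/Fano argument is self-contained from the chain rule and the conditional-entropy facts in the preliminaries. It needs no restriction on $\epsilon$ and gives the sharper bound $|H(P)-H(W)|\le\epsilon\log|\mathcal X|+h_2(\epsilon)$. If you also note that $X\neq Y$ on $\{E=1\}$, you even get $\epsilon\log(|\mathcal X|-1)+h_2(\epsilon)$. That last form is what the paper actually invokes when it applies this lemma in the proof of Theorem~\ref{thm:approximable}, namely $\eta\log(2^N-1)+h_2(\eta)$. The lemma as stated, with the factor $2$ and the additive $1$, does not literally supply this, so your route justifies that usage directly. The paper's route is shorter only because it outsources the work to a cited inequality.

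One small correction to your closing remark: the hypothesis $\epsilon\le 1/4$ is not needed to replace the additive $1$ by $h_2(\epsilon)$, since $H(E)=h_2(\epsilon)$ holds exactly for your coupling. Monotonicity of $h_2$ on $[0,1/2]$ matters only if you substitute an upper bound $\eta\ge\epsilon$ for $\epsilon$. That is the situation in the application above, where only $\TV(P,W)\le\eta$ is known.
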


\begin{proof}
For $\|P-W\|_1 = 2\epsilon\le 1/2$, we have:
\begin{equation*}
    |H(P)-H(W)|\le \|P-W\|_1\log\frac{|\mathcal{X}|}{\|P-W\|_1} = 2\epsilon\log|\mathcal{X}| + 2\epsilon\log\frac{1}{2\epsilon},
\end{equation*}
and $u\log(1/u)\le (\log e)/e < 1$ for $u\in(0,1]$.
\end{proof}

\begin{lemma}\label{lem:pinsker}
If $\mu$ is a
distribution on $\{0,1\}^{r}$ then
$\TV\big(\mu,\mathsf{unif}(\{0,1\}^{r})\big)
\le \sqrt{\,r-H(\mu)\,}.$
\end{lemma}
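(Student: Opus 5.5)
The plan is to combine Pinsker's inequality with the identity that the KL divergence to the uniform distribution on $\{0,1\}^r$ is the entropy deficit. Let $U:=\mathsf{unif}(\{0,1\}^{r})$. Since $U(x)=2^{-r}$ for every $x$, we have
\begin{equation*}
D(\mu\,\|\,U)=\sum_{x}\mu(x)\log\frac{\mu(x)}{2^{-r}} = r - H(\mu),
\end{equation*}
where both the divergence and the entropy are taken in bits. The right-hand side is nonnegative because $H(\mu)\le \log 2^r = r$, so the square root in the statement is well defined.

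Next we invoke Pinsker's inequality, which with KL measured in nats reads $\TV(\mu,U)\le \sqrt{\tfrac12 D_{\mathrm{nats}}(\mu\|U)}$. The paper uses base-$2$ logarithms, so we convert with $D_{\mathrm{nats}} = (\ln 2)\, D_{\mathrm{bits}}$. This gives
\begin{equation*}
\TV(\mu,U)\le \sqrt{\tfrac{\ln 2}{2}\,(r-H(\mu))}.
\end{equation*}
Since $\tfrac{\ln 2}{2}<1$, this is at most $\sqrt{r-H(\mu)}$, which is the claimed bound with room to spare.

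There is no real obstacle; the only point needing care is the unit conversion between bits and nats. That conversion only loosens the bound, and this is why the statement can drop the constant. As a sanity check, when $r-H(\mu)\ge 1$ the bound is trivial anyway, since $\TV\le 1$.
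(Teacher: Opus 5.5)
Your proof is correct and follows essentially the same route as the paper: the identity $\mathrm{KL}(\mu\,\|\,\mathsf{unif})=r-H(\mu)$ combined with Pinsker's inequality. Your explicit bookkeeping of the bits-versus-nats factor $\tfrac{\ln 2}{2}<1$ is the justification the paper leaves implicit when it writes $\TV(\mu,\nu)\le\sqrt{\mathrm{KL}(\mu\|\nu)}$.
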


\begin{proof}
First we observe that for distributions $\mu,\nu$ on a finite set, the total variation distance is related to the KL divergence as $\TV(\mu,\nu)\le\sqrt{\mathrm{KL}(\mu\|\nu)}$. In particular, Pinsker's inequality  together with $\mathrm{KL}(\mu\,\|\,\mathsf{unif})=r-H(\mu)$ gives the statement of the lemma.
\end{proof}

\subsection{Circuits}\label{sec:circuits-prelims}
\paragraph{Classical circuit classes.} The classes $\NCz$, $\ACz$, and $\TCz$ are well-known in complexity theory. $\NCz$ is the class of all Boolean functions computable in a circuit family containing circuits with bounded \emph{fan-in} (i.e., the number of inputs to every gate is $O(1)$ independent of $n$), constant \emph{depth} (i.e., every circuit in the family has depth $O(1)$ independent of $n$, the number of inputs to the circuit itself), and \emph{basis} $\{\land, \lor, \lnot\}$. $\ACz$ is the class of all Boolean functions defined the same way as $\NCz$, except that we allow the circuits to have unbounded fan-in. $\TCz$ is the class of all Boolean functions defined the same way as $\ACz$, except that we allow the circuits to have \emph{majority} gates (gates that output $1$ if and only if a strict majority of their input bits are $1$).

In addition, we are also interested in $\GCz(k)$, a classical circuit class defined recently~\cite{kumar2023tight,grewal2024improved}, with a special type of gate allowed in addition to $\ACz$. For $k \in \mathbb{N}$, a \emph{$G_k$ gate} takes in an unbounded number of bits as input: on input $x$, if $|x|\leq k$, then $G_k(x)$ can be an \emph{arbitrary} function on the input $x$ and if $|x| > k$, then $G_k(x)$ is some constant in $\01$. These are useful as a soft generalization of \emph{threshold} gates.

\paragraph{Quantum circuit classes.} Parallel (and in some cases comparable) to the classical circuit hierarchy is the hierarchy of \emph{quantum circuits}. The two classes of importance for us are $\QNCz$ and $\QACz$. A $\QNCz$ circuit basis consists of all possible one- and two-qubit gates. For the sake of succinctness, we allow the following in particular: Clifford gates, i.e.:
$$H=\frac{1}{\sqrt{2}}\begin{pmatrix}
1 & 1\\
1 & -1
\end{pmatrix}, \qquad S=\begin{pmatrix}
1 & 0\\
0 & i
\end{pmatrix}, \qquad \mathsf{CNOT}=\begin{pmatrix}
1 & 0 & 0 & 0\\
0 & 1 & 0 & 0\\
0 & 0 & 0 & 1\\
0 & 0 & 1 & 0
\end{pmatrix},
$$
and $S,T$-gates, i.e.:
$$
S=\begin{pmatrix}
1 & 0\\
0 & i
\end{pmatrix},\qquad T=\begin{pmatrix}
1 & 0\\
0 & e^{i\pi/4}
\end{pmatrix}
$$
and Pauli matrices, i.e.:
$$\id=\begin{pmatrix}
1 & 0\\
0 & 1
\end{pmatrix}, \qquad X=\begin{pmatrix}
0 & 1\\
1 & 0
\end{pmatrix}, \qquad Y=\begin{pmatrix}
0 & -i\\
i & 0
\end{pmatrix}, \qquad Z=\begin{pmatrix}
1 & 0\\
0 & -1
\end{pmatrix}.$$
We write these gates as above because all Pauli gates and Clifford gates are simulable, and the $T$ gate makes quantum computation universal.

In order to describe $\QACz$, we first observe what the $\mathsf{CNOT}$ gate does. It performs the following operation:
$$
\mathsf{CNOT}:\ket{a,b}\mapsto\ket{a,a\oplus b} \text{ for all }a,b\in \01.
$$
This is a $2$-qubit gate. If $a=1$ it flips the second bit, whereas if $a=0$, it keeps the second qubit as it is. Consider a $\mathsf{CCNOT}$ gate (also referred to as \emph{Toffoli gate}), defined as:
$$
\mathsf{CCNOT}:\ket{a_1,a_2,b}\rightarrow \ket{a_1,a_2,b\oplus a_1\cdot a_2} \text{ for all } a_1,a_2,b\in \01,
$$
i.e., the gate flips the last bit \emph{if and only if} the first $2$ qubits are both $1$.  In order to describe $\QACz$, we allow the circuits to have \emph{generalized} $\mathsf{CNOT}$ (referred to as Toffoli gates) acting on $t\in [n]$ qubits: 
$$
\mathsf{Toffoli}:\ket{a_1,\ldots,a_t,b}\mapsto\ket{a_1,\ldots,a_t,b\oplus a_1\cdots a_t} \text{ for all }a_1,\ldots,a_t,b\in \01,
$$
i.e., the gate flips the last bit \emph{if and only if}  the first $t$ qubits are all $1$. We can now describe the two quantum circuit classes of interest to us.

\begin{enumerate}
    \item $\QNCz$, the set of circuits acting on: $n$ input bits and $\poly(n)$ auxilliary qubits, constant depth, consisting of Paulis, Cliffords, $S$, and $T$ gates.
    \item $\QACz$, the set of circuits acting on: $n$ input bits and $\poly(n)$ auxilliary qubits, constant depth, consisting of Paulis, Cliffords, $T$, and $\mathsf{Toffoli}$ gates. 
\end{enumerate}
The only difference between $\QNCz$ and $\QACz$ is that the latter contains $\mathsf{Toffoli}$ gates, which are not present in the former.

\paragraph{Notation for superconstant depths.} Suppose we have a circuit family $C$ from a circuit class (or quantum circuit class) $\calC$, such that $C$ has depth $O(f(n))$ (i.e., the depth of each circuit in the family is allowed to be a function of the input length, instead of being a universal constant). In that case, we write $C$ belongs to $\calC[f(n)]$. In other words, $\calC[f(n)]$ is the class of circuit families or quantum circuit families whose depth grows as $O(f(n))$, but whose other parameters are defined by the class $\calC$. Now, when $f(n) = \Omega(\log n)$, typically the superscript of the circuit class goes up according to the degree of the polylogarithmic factor; for instance, $\NCz[\log n] = \mathsf{NC}^1$, $\QACz[\log n] = \mathsf{QAC}^1$, $\ACz[\log^2n] = \mathsf{AC}^2$, and so on. However, we will mainly be concerned with the case when $f(n) = o(\log n)$; in that case, we will simply designate the resulting class as $\calC[f(n)]$. In particular, $\QACz[\log \log n]$ will be the class of quantum circuits that are exactly $\QACz$ circuits, except that their depth can go up to $O(\log \log n)$.

\subsection{Transformer models}
\label{sec:transformer-prelims}
We first describe in the figure below the structure of a transformer model. In the remaining part of this section, we define various aspects of this transformer architecture and discuss how one goes from the input to the output.
\begin{figure}[ht]
    \centering
    \includegraphics[width=\linewidth]{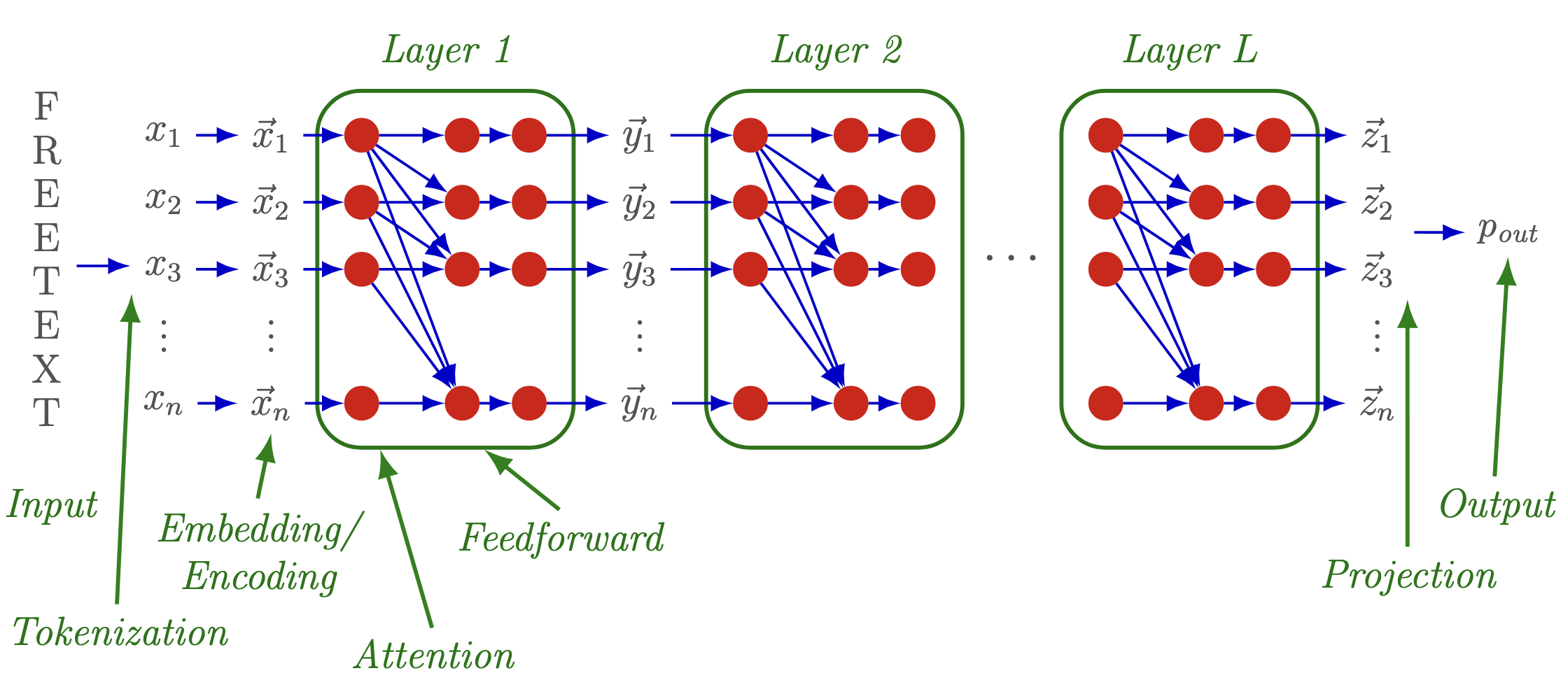}
    \caption{An idealized depiction of the core transformer architecture.}
    \label{fig:transformer-diagram}
\end{figure}
\subsubsection{Definitions and Architecture}
We now give a brief introduction to the basic transformer architecture. Apart from the choice of the attention mechanism itself (which is typically taken to be a variant of \emph{softmax}), there are a few other parameters that define a transformer:
\begin{enumerate}
    \item $L$: the number of layers
    \item $H$: the number of attention heads in each layer
    \item $d$: the embedding dimension
    \item $p$: the bit precision of entries in the internal computations.
\end{enumerate}
The \emph{context length} $n$, which corresponds to the length of the input measured in \emph{tokens}, is \emph{not} a parameter of the transformer; this is what enables a transformer to carry out its computations on unbounded-length inputs. We often refer to $L$ as the \emph{depth} of the transformer, and to the product $Hdp$ as its \emph{width} or \emph{size}. Recent works~\cite{sanford2023representational,chen2024theoretical} have adopted the convention of designating a transformer as ``small'' if $Hdp = n^{o(1)}$ and ``large'' if $Hdp = n^{\Omega(1)}$.

A transformer with $H$ heads and $L$ layers computes a length-preserving sequence-to-sequence function. It starts by breaking up some input text (the \emph{prompt}) into a length-$n$ sequence $w = w_1\ldots w_n$ of \emph{tokens}, and then embeds each token $w_i$ into a $(dH)$-dimensional real vector $x_i^{(0)}$, creating the $n$-sequence:
\begin{equation*}    \mathbf{x}^{(0)}=\Big(x_1^{(0)},\ldots,x_n^{(0)}\Big) \in \Big(\mathbb{R}^{dH}\Big)^n
\end{equation*}
where each entry is represented using at most $p$ bits. Subsequently, over the course of a sequence of length-preserving internal computations through $L$ \emph{layers} in its architecture, the transformer converts the original sequence into the sequence:
\begin{equation*}
\mathbf{x}^{(L)}=\Big(x_1^{(L)},\ldots,x_n^{(L)}\Big) \in \Big(\mathbb{R}^{dH}\Big)^n.
\end{equation*}
Each transformer layer contains $H$ \emph{attention heads}. The computation in the $\ell$-th layer is determined by \emph{query}, \emph{key} and \emph{value} matrices indexed by their corresponding attention head, explicitly given by $Q^{\ell,h},K^{\ell,h},V^{\ell,h} \in \R^{d\times dH}$ for $\ell \in [L], h\in [H]$. These matrices are all determined (i.e., learned) during training time, and are subsequently considered a complete specification to the transformer architecture (which, in this model, carries out a deterministic computation on each given input). We now describe the transition in between the layers of the transformer: if the input to the $\ell$-th layer is $\mathbf{x}^{(\ell-1)}$, then the output $\mathbf{x}^{(\ell)}$ is determined as follows: for every position $i\in [n]$, and attention head $h \in [H]$, define:
\begin{equation*}
    \alpha^{(\ell,h)}_{i,j} := \frac{1}{Z_i}\cdot \exp(\langle  Q^{(\ell,h)} x_i^{(\ell-1)}, K^{(\ell,h)} x_j^{(\ell-1)}\rangle),
\end{equation*}
where $Z_i := \sum_{j}\exp(\langle x_i^{(\ell)} Q^{(\ell,h)}, K^{(\ell,h)} x_j^{\ell}\rangle)$. Note that this is where we have used the currently-used \emph{softmax} attention, instead of other variants such as \emph{unique hard} attention and \emph{average hard} attention that have been theoretically considered as well. Then, set:
\begin{align}
\label{eq:encoderdecodertransformer}
    y_i^{(\ell,h)} := \sum_j \alpha^{(\ell,h)}_{i,j} V^{(\ell,h)} x_j^{(\ell)} \in \R^d.
\end{align}
Now, we can concatenate the vectors $y_i^{(\ell)} := (y_i^{(\ell, 1)}, \ldots, y_i^{(\ell, H)}) \in \R^{dH}$, and then set:
\begin{equation*}
    \mathbf{y}^{(\ell)} := \Big(y_1^{(\ell)},\ldots,y_n^{(\ell)}\Big) \in \Big(\R^{dH}\Big)^n.
\end{equation*}
Finally, the output $\mathbf{x}^{(\ell)}$ of the $\ell$-th layer is obtained via applying an \emph{arbitrary} (learnable) function $g^{\ell}:\R^{dH}\rightarrow \R^{dH}$ coordinate-wise to $\mathbf{y}^{(\ell)}$, i.e.:
\begin{equation*}
\mathbf{x}^\ell :=\Big(g^\ell(y_1^{(\ell)}), \ldots, g^\ell(y_n^{(\ell)})\Big).
\end{equation*}
Usually, the transformer takes the vector $x_n^{(L)} \in \R^{dH}$, applies a linear transformation to it along with a sigmoid function to obtain a probability $p_{\text{out}} \in [0, 1]$, and then rounds this probability $p_{\text{out}}$ to a value in $\{0, 1\}$. In this sense, a transformer can be seen as a \emph{language recognizer} over its input alphabet, and this is what enables us to place transformers inside or outside standard complexity classes.

\paragraph{Decoder transformers.} 
The definition above is for a general transformer. We say a transformer is a \emph{decoder-only transformer} if the sums in Eq.~\eqref{eq:encoderdecodertransformer} and the definition of $Z_i$ are only over all $j$ with $j< i$. In other words, the values of $y_i^{(\ell,h)}$ and $Z_i$ only depend on all the coordinates of $\mathbf{x}$ \emph{until} coordinate $i$, i.e., only on $x_1^{(\ell)},\ldots,x_{i-1}^{(\ell)}$. This requirement for each position to only attend to positions before it is known as \emph{future masking}, and it is what enables the autoregressive capabilities of modern $\LLM$s.

\paragraph{Chain-of-thought.} 
In autoregressive decoder models, the architecture is equipped with the ability to output intermediate tokens during its computation process, which are then fed back to the architecture by appending them to the (increasingly growing) input. This process is called \emph{chain-of-thought} (CoT); it is well known that CoT-equipped transformers are strictly more powerful in terms of expressivity than transformers without CoT. Indeed, most modern LLMs use CoT-style intermediate reasoning.

\subsubsection{Complexity Class Relations}
As we saw, viewing transformers as language recognizers enables us to compare their capabilities with standard complexity classes. In this section, we provide a brief summary of the capabilities of \emph{decoder-only} transformers. Note that encoder-only transformers are theoretically strictly more powerful than decoder-only ones, but recent $\LLM$s focus almost exclusively on decoder-only models, and so we will restrict ourselves to that model. We note that modern architectures allow intermediate tokens generated during the computation by a process called \emph{chain-of-thought} (CoT). Allowing $\ell$ intermediate tokens on an input of length $n$, where $\ell\in \{0,\ldots,\log n\}$ does not increase the complexity class, but increasing $\ell$ further can certainly improve the expressivity of these transformers (in fact, in the limit, arbitrary amount of chain-of-thought lets transformers Turing-complete). So, the same table below applies even for chain-of-thought length up to $\ell=O(\log n)$~\cite{li2024chain}.
\renewcommand{\arraystretch}{1.35}
\begin{table}[!ht]
\begin{center}
\begin{tabular}{|c|c|c|}
\hline
\textbf{Precision} & \textbf{Model dimension} & \textbf{Complexity class} \\
\hline
$O(1)$ & $O(\log n)$ & $\subseteq \ACz$ \\
\hline
$O(1)$ & $\mathrm{poly}(n)$ & $= \ACz$ \\
\hline
$O(\log n)$ & $O(\log n)$ & $\subseteq \TCz$ \\
\hline
$O(\log n)$  & $\mathrm{poly}(n)$ & $= \TCz$ \\
\hline
\end{tabular}
\caption{Summary of main results of~\cite{li2024chain} which simulates the transformer class with the corresponding precision and model dimension in terms of standard circuit complexity classes.}
\end{center}
\end{table}

\subsubsection{Lower Bound via Communication Complexity}
In a recent work, Chen et al.~\cite{chen2024theoretical} introduced an \emph{autoregressive communication model}, and showed that any unconditional lower bound on the communication complexity of a particular problem in this model yields lower bounds on the width $Hdp$ of any constant-depth transformer designed to solve the same problem. The {communication} model is defined formally as follows. There are $L+2$ players (denoted players $-1,0,1,\ldots,L$). For $-1 \leq i \leq L$, the input $z_i$ for player $i$ is describable using $m_i$ tokens. There is a (directional and restricted) communication protocol for these players, which takes place over a certain number of \emph{epochs} and a fixed \emph{bandwidth} $B$, as follows. Let $X^{(0)}_\ell := z_\ell$, for $\ell \in \{-1,\ldots,L\}$, for $\ell\geq 1$, they proceed as follows. Every player $i\in \{-1,\ldots,L\}$ runs the following:

\begin{enumerate}
    \item  For $j>i$, player $i$ sends a message $\Gamma^{(\ell)}_{ij}$ to player $j$, depending solely on its current state $X^{(\ell)}_i$.
    \item Player $j$ replies back to player $i$ with a message $\Pi^{(\ell)}_{ij}$ that satisfies two constraints:
    \begin{inparaenum}[$(i)$]
    \item the message $\Pi^{(\ell)}_{ji}$ depends solely on $\Gamma^{(\ell)}_{ij}$ and player $j$'s  current state $X^{(\ell)}_j$;
    \item the size of the message satisfies $|\Pi^{(\ell)}_{ji}| \leq B\cdot m_{i}$.
    \end{inparaenum}
    \item Upon receiving the messages $\{\Pi^{(\ell)}_{ji}\}_{j>i}$, player $i$ updates its state to $X^{(\ell+1)}:=X^{(\ell)}_i\cup \bigcup_{j>i}\Gamma^{(\ell)}_{ij}$.
\end{enumerate}

After $L$ rounds, player $-1$ should output the first bit of $X^{(L)}$ in order for the protocol to be deemed successful.

We now point out two nuances about this protocol.
\begin{itemize}
    \item \textbf{\emph{Memorylessness.}} Suppose during epoch $\ell$, players $i,j$ interacted by exchanging the messages $\Gamma^{(\ell)}_{ij},\Pi^{(\ell)}_{ji}$, and so player $i$ updated its current state $X^{(\ell+1)}$ by including the transcript message $\Pi^{(\ell)}_{ji}$. For a subsequent epoch, say $\ell+1$, by the protocol requirements, player $j$ is \emph{forced to forget} the previous exchange altogether. Since player $j$ did not update its local state $X^{(\ell)}\rightarrow X^{(\ell+1)}$ using $\Gamma^{(\ell)}_{ij},\Pi^{(\ell)}_{ji}$ and all messages solely depend on their local state, players are therefore forced to forget some information between epochs.
    \item \textbf{\emph{Local one-way-ness.}} Observe that in each epoch, when player $i$ talked to player $j$, by the protocol requirements, player $j$ cannot reply back to player $i$ based on its interaction with the players $\{-1,\ldots,j-1\}$. In any given epoch, player $j$ cannot respond to any of the players by ``mixing" the input messages it received. Instead, it can only reply based on its current state $X^{(\ell)}_j$ and $\Pi^{(\ell)}_{ij}$.
\end{itemize}
The main result proven by Chen et al.~\cite{chen2024theoretical} is the following: they define the \emph{iterated index} function, called $\IIndex$ (which we formally define later in Section~\ref{sec:itreindx}), and show how to give lower bounds on the width of constant-depth decoder-only transformers computing $\IIndex$, using the communication protocol defined above. In particular, they show the following result.
\begin{lemma}\label{lem:cpw-reduction}
If there is an $L$-layer decoder-only transformer with width $Hdp$ that solves $\mathsf{IterInx}$ for all instances of the problem, then there is a deterministic autoregressive protocol on $L + 2$ players that solves $\IIndex$ in $L$ epochs with bandwidth $2Hdp$.
\end{lemma}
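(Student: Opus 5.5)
The approach is a direct simulation: distribute the transformer's token positions among the $L+2$ players and check that one round of attention fits inside one epoch of the autoregressive protocol. I would order the input sequence as $(x^\ell, x^{\ell-1},\ldots,x^1, i)$, with the query at the end, so that the position reading the answer is last. The key requirement is that player $j$ holds only tokens at \emph{later} positions than player $i$ whenever $j>i$. Player $-1$ holds the final output position. Under causal masking, a position then attends only to positions held by itself or by players with larger index, matching the direction of the protocol in which player $i$ queries the players $j>i$.

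Throughout, each player's state $X^{(\ell)}$ will contain the layer-$(\ell-1)$ vectors $x_k^{(\ell-1)}$ for all positions $k$ it owns. The epoch-$\ell$ step then runs as follows.
\begin{enumerate}
\item Player $i$ sends its query vectors $Q^{(\ell,h)}x_k^{(\ell-1)}$ for its positions. Since these depend only on its own state, this is a legal message $\Gamma^{(\ell)}_{ij}$.
\item Player $j$ replies, for each query $k$ of player $i$ and each head $h$, with two partial quantities computed over its own block of positions $j'$:
\begin{equation*}
S_{k,h}=\sum_{j'}\exp\big(\langle Q^{(\ell,h)} x_k, K^{(\ell,h)} x_{j'}\rangle\big), \qquad N_{k,h}=\sum_{j'}\exp\big(\langle Q^{(\ell,h)} x_k, K^{(\ell,h)} x_{j'}\rangle\big)\, V^{(\ell,h)}x_{j'}.
\end{equation*}
These are $d+1$ numbers per head, each rounded to $p$ bits. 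The reply depends only on $\Gamma$ and player $j$'s own state, so it respects both memorylessness and local one-way-ness. Its size is about $H(d+1)p\le 2Hdp$ bits per query position, i.e. at most $2Hdp\cdot m_i$ in total, matching the bandwidth $B=2Hdp$.
\item Player $i$ adds these partial sums to its own local sums, divides to recover $y_k^{(\ell,h)}$ exactly as the softmax formula prescribes, and applies $g^\ell$ to obtain $x_k^{(\ell)}$ for every position it owns.
\end{enumerate}
After $L$ epochs, player $-1$ holds $x_n^{(L)}$, applies the output map, and outputs the answer. Correctness on all instances of $\IIndex$ is inherited directly from the transformer.

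The main obstacle is the precision bookkeeping in step 2. The transformer computes in $p$-bit arithmetic, so the split sums must reproduce the same rounded result. The issue is that exponentials and the normalizer have to be communicated in a way that makes summing the partial values commute with rounding. I would handle this by fixing the convention of~\cite{chen2024theoretical}: each player sends the normalizer and the unnormalized weighted value sum in a floating-point format that spends a constant factor more bits. That factor is absorbed into the $2$ in $2Hdp$.

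A second point to verify is the ownership itself: the partition must respect the causal order so that no player ever needs information from a player of smaller index. This is arranged by the ordering of the input chosen at the start. Player $i$ attends to all positions preceding its own (including, within its own block, those it computes locally), and each of those lies either in its own block or in the block of some player $j>i$. If the paper's $\IIndex$ token layout places blocks differently, I would simply permute the positions accordingly, since the transformer is assumed to solve every instance.
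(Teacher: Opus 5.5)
Your proposal is correct and is the same simulation argument as in~\cite{chen2024theoretical}, from which the paper imports this lemma without proof: positions are split into contiguous blocks with higher-index players holding earlier tokens, each epoch simulates one layer, and each reply carries, per query token and head, the partial softmax normalizer and the unnormalized weighted value sum, i.e.\ $H(d+1)p\le 2Hdp$ bits. Three small fixes are needed. First, your sentence saying player $j>i$ holds \emph{later} positions should say \emph{earlier}; your explicit layout, with $z_L$ first and player $-1$ last, is the right one. Second, the fallback of permuting positions is not valid: the reversed token order is part of the task, and a causal transformer that solves one ordering tells you nothing about another. Third, the factor $2$ is already spent on $d+1\le 2d$, so the precision convention has to make each partial sum itself a $p$-bit quantity rather than cost an extra constant factor.
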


\subsection{Diffusion models}\label{sec:dlms-prelims}
We first describe in the figure below the structure of a diffusion language model ($\DLM$). In the remaining part of this section, we define various aspects of this $\DLM$ architecture and discuss how one samples a distribution using this model.
\begin{figure}[!ht]
    \centering
    \includegraphics[width=\linewidth]{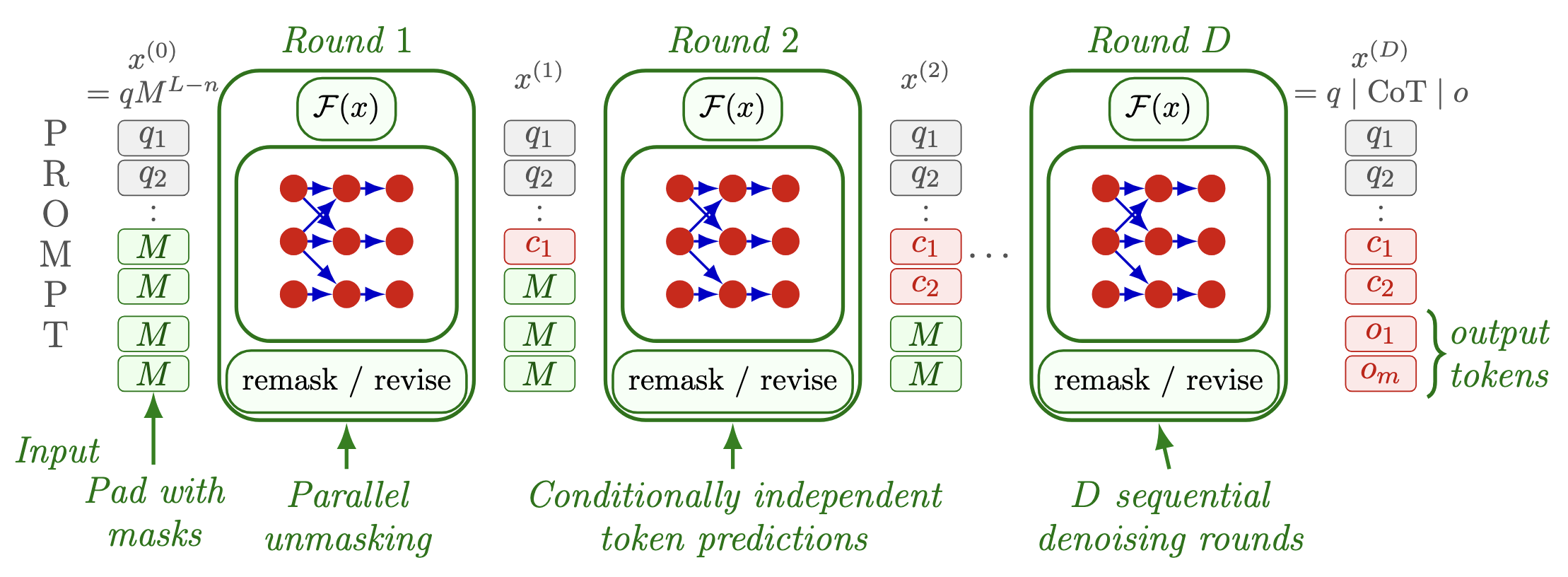}
    \caption{An idealized depiction of the core $\mathsf{DLM}$ architecture.}
    \label{fig:dlm-diagram}
\end{figure}

We now describe the diffusion language model ($\DLM$) framework used in this paper. In abstraction, a $\DLM$ is defined as a generative model for text~\cite{nie2025large,song2025seed} that starts from a corrupted (masked or noisy) sequence of tokens, and iteratively denoises it. It is used typically to generate samples from a learned distribution over text; more precisely, it learns one distribution within a parameterized class of distributions that approximates the data.  Informally, the goal is as follows. Let:
\begin{equation*}
\{D_u:\{0,1\}^n\to[0,1]\}_u
\end{equation*}
be a family of target distributions. The goal of the $\DLM$ is the following task: on input $u$, generate a sampled output string $y\sim D_u$. We also study \emph{approximate} generation, where the $\DLM$ is allowed to output a sample from a distribution $\widetilde D_u$ satisfying:
\begin{equation*}
\TV(\widetilde D_u,D_u) \leq \varepsilon.
\end{equation*}
More formally, we present the usual convention of $\DLM$s in literature (similar to the formalization given in \cite{jiang2025diffusion}). 
Let \(V\) be a finite token vocabulary (for simplicity we let it be $\{0,1\}$ here) and let $\MASK$ denote the so-called \emph{mask token}.  A state in a $\DLM$  algorithm is a sequence:
\begin{equation*}
    x\in (V\cup\{\MASK\})^L.
\end{equation*}
A coordinate equal to $\MASK$ is called \emph{masked} or \emph{unresolved}, while a coordinate in $\Sigma$ has already been assigned a token and is called \emph{resolved}. 
A noiseless sequence is an element of \(V^L\). A \emph{predictor} is a map which, for every partially masked sequence \(x\in (V\cup\{\MASK\})^L\), specifies a distribution:
\begin{equation*}
    p(\cdot\mid x)\in \Delta(V^L),
\end{equation*}
i.e., the coordinate-wise conditional distribution of \(z\sim p(\cdot\mid x)\) is given by:
\begin{equation*}
    p(z\mid x)=\prod_{i=1}^L p^i(z^i\mid x).
\end{equation*}
In the standard model, already unmasked coordinates are kept fixed, i.e., \(p^i(z^i=x^i\mid x)=1\) whenever \(x^i\in V\).

\subsubsection{Masked states}

The structure of a usual $\DLM$ algorithm is as follows: it begins from a partially masked state:
\begin{equation*}
\Big(
\underbrace{x}_{\text{input}},
\underbrace{\MASK^{m-n}}_{\text{workspace/CoT}},
\underbrace{\MASK^n}_{\text{output}}
\Big),
\end{equation*}
and progressively resolves masked coordinates over several denoising rounds (which we define below). After a fixed number of rounds, it outputs the final $n$ coordinates, which are designated as output coordinates. The goal is for the distribution of the final $n$ coordinates to be ``close'' to the target distribution $D_x$. The remaining coordinates are auxiliary workspace coordinates that may be used during the computation and discarded at the end.

We use the term \emph{workspace token} for any auxiliary coordinate generated during the $\DLM$ computation but discarded before the final output is read. In the $\DLM$ literature, such coordinates are often interpreted as chain-of-thought (CoT) tokens when they serve as intermediate reasoning or scratchwork. Since our lower bounds only use the formal role of these coordinates as auxiliary generated tokens available to later rounds, we use the terms \emph{workspace} and \emph{chain-of-thought} interchangeably when no distinction is needed.

\subsubsection{Denoising process}

We now describe the denoising process. This process is a iterative procedure through which masked coordinates are progressively resolved. Suppose the current state of the $\DLM$  at round $t$ is:
\begin{equation*}
X^{(t)}\in(\Sigma\cup\{\MASK\})^m.
\end{equation*}
A single denoising round consists of two conceptually distinct operations:
\begin{enumerate}
\item \emph{scheduling/unmasking}, where the model decides which masked coordinates will be updated in the current round;
\item \emph{token generation}, where the model generates values for the selected coordinates.
\end{enumerate}
We describe both operations below in detail.

\paragraph{Scheduling/unmasking.} 
At round $t$, the model selects a subset
$R_t(X^{(t)})\subseteq[m]$
of currently masked coordinates to reveal. Equivalently, one may think of the scheduler as outputting revealed bits
$S_{t,i}(X^{(t)})\in\{0,1\}$,
where
$S_{t,i}(X^{(t)})=1$ 
indicates that the coordinate $i$ will be updated during round $t$. The revealed set is therefore:
\begin{equation*}
R_t(X^{(t)})
=
\{i:S_{t,i}(X^{(t)})=1\}.
\end{equation*}
The model then samples, independently for all $i\in S_t$:
\begin{equation*}
    X^{(t+1)}_i \sim p_t^i(\cdot\mid X^{(t)}),
\end{equation*}
and leaves all other coordinates unchanged. A key feature of $\DLM$s is that the reveal schedule may depend adaptively on the partially generated sequence so far. Thus, the model does not need to decide the entire reveal order in advance; instead, future reveal decisions may depend on tokens generated in earlier rounds, including both workspace/CoT coordinates and output coordinates.  
An important aspect to consider here is how \emph{complex} the scheduling is. As is common in the $\DLM$ literature, we require that the scheduling rule itself be computationally shallow. Concretely, for each coordinate $i$ and round $t$, the reveal decision $S_{t,i}(X^{(t)})$ is computed by a constant-depth polynomial-size circuit.

\paragraph{Token generation.}
Once the reveal set
$R_t(X^{(t)})$
has been selected, the model generates token values at those coordinates. The crucial and defining structural property of a standard $\DLM$ is the following notion of conditional independence:

\begin{quote}
\emph{Conditioned on the current state, newly revealed coordinates are sampled independently.}
\end{quote}
More precisely, conditioned on the current partially generated state $X^{(t)}$, the model specifies marginal distributions $p_{t,i}(\cdot\mid X^{(t)})$ for all $i\in R_t(X^{(t)})$, and samples:
\begin{equation*}
X_i^{(t+1)} \sim p_{t,i}\left(\cdot\mid X^{(t)}\right)
\end{equation*}
independently across coordinates. Thus, the conditional law of the newly generated coordinates factorizes as:
\begin{equation*}
\mathcal L\left(
(X_i^{(t+1)})_{i\in R_t}
\;\Big|\;
X^{(t)}
\right)
=
\bigotimes_{i\in R_t}
p_{t,i}\left(\cdot\mid X^{(t)}\right).
\end{equation*}
Intuitively, conditioned on the current partially generated sequence, a standard $\DLM$ generates fresh coordinates in parallel using coordinate-wise denoising rules.

\paragraph{Execution as a branching process.}
It is useful to view a $\DLM$ execution as a branching process over partially generated states. Each possible state $\tau=X^{(t)}$ 
at the beginning of round $t$ defines a transcript branch of the computation. Conditioned on reaching a fixed partially generated state $\tau$:
\begin{itemize}
\item the reveal set $R_t(\tau)$ is fixed,
\item the next batch of generated coordinates has the product law:
\begin{equation*}
\mathcal L\left(
(X_i^{(t+1)})_{i\in R_t(\tau)}
\;\Big|\;
X^{(t)}=\tau
\right)
=
\bigotimes_{i\in R_t(\tau)}
p_{t,i}(\cdot\mid \tau).
\end{equation*}
\end{itemize}
We write \(p_t^i(\cdot\mid \tau)\) for the \(i\)-th marginal of the $\DLM$ predictor
$p_t(\cdot\mid \tau)$. Although the predictor specifies a full product
distribution on $V^L$, only the coordinates selected by $R_t(x)$ are actually updated in the current denoising round.
Thus, conditioned on a fixed partially generated state, the next round of generation is a product distribution over the newly revealed coordinates. Globally, however, the overall output distribution may still exhibit highly nontrivial correlations because future rounds depend adaptively on previously generated tokens. Consequently, correlations may accumulate across rounds through the interaction of adaptive scheduling and iterative denoising.

\subsubsection{Constant-round shallow DLMs}
As discussed before, a natural question in this set-up is \emph{how complex is the
sample generation?} In particular, one must also account for the circuits implementing the scheduling and token-generation steps. As is standard in
the $\DLM$ literature, these steps are required to be computationally shallow. Concretely, we assume that the number of denoising rounds is constant, i.e., $T=O(1)$, and that for every round $t$ and coordinate $i$, both the scheduling rule $S_{t,i}(X^{(t)})$ and the denoising marginal $p_{t,i}(\cdot\mid X^{(t)})$
are computable by constant-depth polynomial-size circuits. The work by Jiang et al.~\cite{jiang2025diffusion} studies this model with $\ACz$ circuits, proving both positive and negative results. Since our focus is on hardness, we prove our results even when these shallow circuits are allowed to be $\GCz$ circuits, effectively strengthening the lower bounds. Because the number of rounds is constant, the entire generation process can be viewed as a constant-depth adaptive computation built from
repeated applications of shallow scheduling and coordinate-wise denoising steps.

\subsubsection{Remasking and revision.}
The basic denoising model described above only allows masked coordinates to be resolved over time. Modern diffusion-style language models often consider stronger variants in which previously generated coordinates may later be modified or revisited. Two common mechanisms are \emph{revision} \cite{song2025seed} and \emph{remasking} \cite{nie2025large}. Intuitively speaking, in a remasking step, previously generated coordinates may be returned to the masked state and regenerated in later rounds, whereas in a revision step, the model is allowed to overwrite or update tokens that were generated in earlier~rounds. Formally, in a remasking step, after the denoising update, the model applies a (possibly randomized) remasking policy:
\begin{equation*}
G_t:(V\cup\{\mathsf M\})^L\to 2^{[L]}.
\end{equation*}
For $T_t=G_t(x)$, the coordinates $i\in T_t$ are reset to $\MASK$. In the revision model, we relax the standard condition that an unmasked coordinate must remain unmasked during the algorithm. Thus, $R_t(x)$ may
include coordinates $i$ with $x^i\in V$, and the predictor marginal $p_t^i(\cdot\mid x)$ may assign some probability less than $1$ to the current
token $x^i$. The coordinate-wise conditional independence condition is still maintained:
\begin{equation*}
p_t(z\mid x)=\prod_{i=1}^L p_t^i(z^i\mid x).
\end{equation*}

These mechanisms allow the model to iteratively refine earlier generations rather than committing permanently to the first generated value. Conceptually, remasking and revision substantially strengthen the expressive power of the model: the generation process is no longer monotone in the set of revealed coordinates, and tokens generated early in the computation may later be ``corrected'' based on information obtained in subsequent rounds. We will prove our main lower bound in the presence of both remasking and revision, making it explicit in our lower bounds.\footnote{We remark that \cite{jiang2025diffusion} accounts for the cost of remasking/revision differently from ours: they allow arbitrarily many tokens to be remasked or revised in any particular round, and only bound the total number of \emph{rounds of revision}. Here, we account for the total number of revision/remasking steps throughout the entire $\DLM$ computation.}
\section{Functional separation: quantum versus transformer models}\label{sec:transformers}

\subsection{Motivation for our candidate problem}
In this section, we first give some motivation for our candidate problem that we use to separate quantum shallow circuits from transformer models. The quantum circuits that we use for these motivating functions will be eventually used when discussing our separation.
\subsubsection{Index function}
Our first observation is that the \emph{index} function, denoted by $\Index$ and defined below, is computable with a shallow quantum circuit with minimal classical post-processing without requiring the full power of $\QACz$ circuits, albeit the caveat that it requires copies of the input:
\begin{equation*}
\Index: \{0,1\}^n\times \{0,1\}^N\rightarrow \{0,1\}, \text{ such that } \Index(i,x)=x_i
\end{equation*}
for $x\in \{0,1\}^N$, where $N=2^n$. Formally, we have the following result.
\begin{claim}\label{claim:index_with_copies}
Let $n, N \in \mathbb{N}$ such that $N = 2^n$. Given $N$ copies of the address $i \in \{0,1\}^n$ and one copy of $x \in \{0,1\}^N$, the index function $\Index(i,x)$ can be computed by a quantum circuit with gate complexity $O(N \log N)$ followed by an $\textsf{OR}$ gate. Moreover, the problem is in $\lor \circ \QNCz[{\log \log N}]$.
\end{claim}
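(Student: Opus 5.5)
The approach is to write $\Index$ as a DNF over addresses,
\begin{equation*}
x_i=\bigvee_{j\in[N]}\big(x_j\wedge[i=j]\big),
\end{equation*}
in which at most one term is true, and to realize each term $x_j\wedge[i=j]$ coherently using the $j$-th copy of the address. The plan has three steps.

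\textbf{Step 1 (per-term circuit).} For each $j\in[N]$ with binary expansion $j_1\cdots j_n$, take the $j$-th copy $i^{(j)}$ of the address. Apply $X$ gates to the qubits where $j_k=0$, so that qubit $k$ holds $1$ iff $i_k=j_k$. The indicator $[i=j]\wedge x_j$ is then the $\AND$ of the $n+1$ bits $i^{(j)}_1\oplus\bar j_1,\ldots,i^{(j)}_n\oplus \bar j_n, x_j$. We will compute this $\AND$ with a balanced binary tree of Toffoli ($\mathsf{CCNOT}$) gates writing into fresh ancillas. This has depth $\lceil\log(n+1)\rceil=O(\log\log N)$ and uses $O(n)$ gates. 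Since the inputs are classical basis states, the ancillas end in the correct basis state, and measuring the root yields $t_j=x_j[i=j]$ deterministically. Each Toffoli decomposes into $O(1)$ Clifford$+T$ gates of depth $O(1)$, so the depth stays $O(\log\log N)$ and everything lies in $\QNCz[\log\log N]$.

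\textbf{Step 2 (parallelism and resource count).} The $N$ trees act on disjoint registers: the $j$-th copy of $i$, the single bit $x_j$, and private ancillas. Each $x_j$ is used by exactly one tree, so no fan-out of $x$ is needed. This is precisely why the statement supplies $N$ copies of $i$ but only one copy of $x$. All trees therefore run in parallel with total depth $O(\log\log N)$ and gate count $N\cdot O(n)=O(N\log N)$.

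\textbf{Step 3 (classical post-processing).} Measure the $N$ roots and feed $t_1,\ldots,t_N$ into a single unbounded fan-in $\textsf{OR}$ gate. Exactly one $j$ equals $i$, so $\bigvee_j t_j=x_i$ with certainty. This gives membership in $\lor\circ\QNCz[\log\log N]$.

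The main subtlety is confirming that the tree-of-Toffolis $\AND$ is legitimately bounded fan-in with only $O(\log n)$ depth, and that no hidden fan-out of $x$ or $i$ is required. Both are handled by the disjointness observation in Step 2. Since the computation is over classical basis inputs, uncomputation or garbage is not an issue.
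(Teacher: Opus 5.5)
Your proposal is correct and follows essentially the same route as the paper. Both use the DNF $\bigvee_j x_j\wedge[i=j]$, build one circuit per $j$ on the $j$-th copy of $i$ (with $X$ gates where $j_k=0$ and an $\AND$ with $x_j$ into an ancilla), run all of these in parallel, measure, and finish with a single $\textsf{OR}$. The only difference is that you spell out the $(n+1)$-controlled Toffoli as a balanced tree of $\mathsf{CCNOT}$ gates on fresh ancillas, which is precisely the $O(n)$-gate, $O(\log n)$-depth implementation the paper invokes.
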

Before we prove the result above, we introduce some simple quantum circuits that will be used for the claim, as well as several subsequent arguments.

\paragraph{Quantum circuits for equality checks.}
We will often require a quantum subroutine for computing an \emph{equality check} of two strings $y,z \in \{0,1\}^n$, which we defined as $[y = z]$. We first present a very simple circuit to compute the equality check.
\begin{claim}\label{claim:EQ_circ1}
Let $n \in \mathbb{N}$. Given one copy of $y,z \in \{0,1\}^n$, the equality check $[y=z]$ can be computed by a quantum circuit over $(3n+1)$ qubits with gate complexity $O(n)$ and depth $O(\log n)$.
\end{claim}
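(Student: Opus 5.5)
The circuit uses three registers on $3n+1$ qubits in total: $n$ qubits holding $y$, $n$ qubits holding $z$, $n$ fresh ancilla qubits initialized to $\ket{0}$, and one output qubit initialized to $\ket{0}$. It proceeds in four stages.

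\textbf{Stage 1 (bitwise XOR).} For each $j\in[n]$, apply two $\CNOT$ gates: one from $y_j$ to ancilla $a_j$, and one from $z_j$ to ancilla $a_j$. Afterwards $a_j = y_j\oplus z_j$. This stage has depth $2$ and $2n$ gates, since the $n$ pairs act on disjoint qubits.

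\textbf{Stage 2 (negation).} Apply an $X$ gate to every ancilla, so that $a_j = \overline{y_j\oplus z_j} = [y_j=z_j]$. This stage has depth $1$ and $n$ gates. The task is now to compute $\bigwedge_{j} a_j = [y=z]$.

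\textbf{Stage 3 (AND tree).} Compute the $n$-fold $\AND$ of the $a_j$ with a balanced binary tree of bounded fan-in reversible gates. Toffoli gates on three qubits are permitted, since they are bounded fan-in gates and the paper lists $\mathsf{CCNOT}$ among its gates. A Toffoli needs a fresh target, however, and we have only $n$ ancillas plus one output qubit, while a naive binary tree uses about $n-1$ internal targets.

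The resolution is to use the $y$ and $z$ registers themselves as scratch space. The claim counts qubits rather than asking us to preserve the inputs, so the $2n$ input qubits can hold intermediate results. Concretely:
\begin{enumerate}
\item Uncompute $y_j$ into $z_j$ with another $\CNOT$ (or reset it), so that some qubits are back in a known state $\ket{0}$.
\item Alternatively, and more simply, do not reset anything: at tree level $k$, write the $\AND$ of two level-$(k-1)$ values into a register qubit whose value is known. For example, apply a $\CNOT$ from $y_j$ to $y_j$'s partner $z_j$ \emph{after} Stage 1, so that $z_j = y_j\oplus z_j$, and then use $y_j$ as a known target by XOR-ing it with itself via the ancilla copy.
\end{enumerate}
The cleanest version is the first: move the XOR into $z_j$ directly with $\CNOT(y_j\to z_j)$, negate it, and then compute $y_j\mapsto y_j\oplus z_j\oplus 1$ and correct back. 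This leaves the $n$ ancillas free as Toffoli targets. The binary tree has $n-1$ internal nodes, which fit in the $n$ ancillas, and its root is copied to the output qubit. The tree has depth $\lceil\log n\rceil$, uses $O(n)$ Toffolis, and gates within each level act on disjoint qubits.

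\textbf{Stage 4 (measurement).} Measure the output qubit in the computational basis. Since every gate used is classical reversible, the state stays a computational basis state throughout, and the measurement yields $[y=z]$ deterministically.

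\textbf{Totals and the main obstacle.} Summing the stages gives gate count $O(n)$ and depth $O(1)+O(\log n)=O(\log n)$. The main obstacle is the qubit bookkeeping that keeps the count at exactly $3n+1$. This is handled by letting the ancilla register serve as tree targets while the input register carries the equality bits. Uncomputation is unnecessary, since only the output bit is read.
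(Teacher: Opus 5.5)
Your final construction is correct, but it follows a different route from the paper's.

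\textbf{The paper's route.} The paper writes the per-bit indicators into the $n$ ancillas, using a Toffoli controlled on $y_\ell,z_\ell$. As literally stated, that gate yields $y_\ell\wedge z_\ell$, not $[y_\ell=z_\ell]$, so your $\CNOT$/$X$ gadget is the correct one. The paper then applies a single $(n+1)$-qubit Toffoli into the output qubit. It cites the standard decomposition into $O(n)$ one- and two-qubit gates of depth $O(\log n)$, which uses $O(n)$ \emph{additional} clean ancillas. So its count of $3n+1$ qubits really treats the many-controlled Toffoli as a primitive.

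\textbf{Your route.} You compute $[y_j=z_j]$ in place in the $z$ register, via $\CNOT(y_j\to z_j)$ followed by $X$. This leaves the $n$ ancillas and the output qubit free to hold the $n-1$ internal nodes of an explicit balanced tree of 3-qubit Toffolis, with the root on the output qubit. Hence $3n+1$ qubits genuinely suffice. Expanding each Toffoli into the standard ancilla-free Clifford$+T$ circuit changes only constants in gate count and depth.

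\textbf{Trade-off.} The paper's version is more modular and keeps $y$, $z$ and the bitwise indicators intact for reuse. Yours is self-contained and tight on qubits, at the cost of overwriting $z$, which the claim permits.

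\textbf{Write-up.} Present only the in-place version. Stages~1--2 as stated fill the very ancillas you later need as tree targets. Option~2 (``XOR-ing $y_j$ with itself'') is not a valid operation.
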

\begin{proof}
We define a quantum circuit $\calA^j$ with the following action:
\begin{equation}\label{eq:action1}
\ket{y}\ket{z}\ket{0}^n\ket{0} \xrightarrow{\calA^j} \ket{y}\ket{z}\left(\otimes_{\ell \in [n]} \ket{[y_\ell = z_\ell]}\right) \ket{[y = z]}.    
\end{equation}
This is implemented as followed. Let the $n$-ancilla qubit register be denoted by $s$, and the $\ell$-th qubit by $s_\ell$. Let the last ancilla qubit be denoted by $t$. For each $\ell \in [n]$, we apply a $3$-qubit Toffoli gate with controls over $\ket{y_\ell}$, $\ket{z_\ell}$, with the target being the qubit $s_\ell$. This sets the state of the $n$-ancilla register to $\otimes_{\ell \in [n]} \ket{[y_\ell = z_\ell]}$. We then apply an $(n+1)$-qubit Toffoli gate with controls over all of $s$, with the target being qubit $t$. The final state is then as given in Eq.~\eqref{eq:action1}. 

The circuit $\calA^j$ is over $3n+1$ qubits. All the $3$-qubit Toffoli gates can implemented in parallel and with gate complexity of $O(1)$ each (using $O(1)$ additional ancilla qubits for each). The $(n+1)$-qubit Toffoli gate can be implemented with $O(n)$ one- and two-qubit gate complexity, and depth $O(\log n)$ (using $O(n)$ additional ancillas). Overall, the gate complexity is $O(n)$ and the depth is $O(\log n)$. This completes the proof.
\end{proof}

The construction above can be modified to give a circuit requiring fewer qubits as input, and which we will primarily use.
\begin{claim}\label{claim:EQ_circ2}
Let $n \in \mathbb{N}$. Given one copy each of $y,z \in \{0,1\}^n$, the equality check $[y=z]$ can be computed by a quantum circuit over $(n+1)$ qubits with gate complexity $O(n)$ and depth $O(\log n)$.
\end{claim}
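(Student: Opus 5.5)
We use the same two-stage structure as Claim~\ref{claim:EQ_circ1}, but compute the bitwise comparisons \emph{in place} rather than into a fresh $n$-qubit ancilla register. Concretely, we aim for the action
\begin{equation*}
\ket{y}\ket{z}\ket{0} \mapsto \ket{y}\ket{y\oplus z}\ket{[y=z]},
\end{equation*}
optionally followed by uncomputation to restore $\ket{z}$. The register therefore has $n$ qubits holding $y$ (read as classical controls), $n$ qubits holding $z$, and one target qubit.

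The steps are as follows.
\begin{enumerate}
\item For each $\ell\in[n]$, apply $\CNOT$ with control $y_\ell$ and target $z_\ell$. This replaces $z_\ell$ by $y_\ell\oplus z_\ell$, and $y_\ell\oplus z_\ell=0$ iff $y_\ell=z_\ell$. These $n$ gates act on disjoint pairs, so they form one layer of depth $1$ with gate count $n$.
\item Apply $X$ to every qubit of the modified $z$ register (depth $1$). Each qubit now holds $[y_\ell=z_\ell]$.
\item Apply a single $(n+1)$-qubit Toffoli gate, with controls on these $n$ qubits and target the final qubit. This writes $\bigwedge_\ell [y_\ell=z_\ell]=[y=z]$ onto the target.
\item If the inputs must be returned intact, undo the $X$ layer and the $\CNOT$ layer. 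This at most doubles the constant overhead.
\end{enumerate}
The only data qubits used are the $2n$ input qubits plus one output qubit. This is where the saving over the $3n+1$ qubits of Claim~\ref{claim:EQ_circ1} comes from. I interpret the claimed ``$(n+1)$ qubits'' as counting the qubits beyond the given input $y$, which can be regarded as classical control wires and so need not be counted as separate workspace. Alternatively, if $y$ is a fixed classical string, each $\CNOT$ in step 1 is simply an $X$ gate applied when $y_\ell=1$. In that case the circuit acts literally on $n+1$ qubits.

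The main point needing care is step 3. The $(n+1)$-qubit Toffoli must be decomposed into one- and two-qubit gates with $O(n)$ gates and depth $O(\log n)$. As in Claim~\ref{claim:EQ_circ1}, we use a balanced binary tree of Toffoli gates computing pairwise ANDs into ancillas, then uncompute the tree. This uses $O(n)$ gates and depth $O(\log n)$, with each three-qubit Toffoli costing $O(1)$ in the Clifford$+T$ basis. The tree does need $O(n)$ clean ancillas. So the qubit count $n+1$ must be read as excluding the borrowed ancillas of the Toffoli decomposition, exactly as ancillas were treated in the proof of Claim~\ref{claim:EQ_circ1}. Summing the layers gives gate complexity $n+n+O(n)=O(n)$ and depth $2+O(\log n)=O(\log n)$, which is the claim.
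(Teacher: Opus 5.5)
Your argument is correct. However, the stated count of $n+1$ qubits is met only by your fallback, where one string is a fixed classical string, and that fallback is exactly the paper's proof with the roles of $y$ and $z$ swapped. The paper hardwires $z$ and applies $X$ to $y_\ell$ for every $\ell$ with $z_\ell=0$. It then applies the $(n+1)$-qubit Toffoli with controls on $y$ and target on the extra qubit, and repeats the $X$ layer to restore $y$. Its footnote says explicitly that $z$ is not a quantum input, and the $O(n)$ clean ancillas of the Toffoli decomposition are excluded from the count, just as you assumed.

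Your primary two-register construction ($\CNOT$ from $y_\ell$ onto $z_\ell$, an $X$ layer, the big Toffoli, then uncomputation) is also correct in gate count and depth. But it uses $2n+1$ data qubits. Saying that $y$ ``need not be counted'' because it only serves as controls does not recover the $n+1$ figure unless $y$ is genuinely classical. At that point your $\CNOT$s become (hardwired or classically controlled) $X$ gates, which is just the fallback. So you should lead with the hardwired construction.

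The comparison is still worth keeping:
\begin{itemize}
\item Your two-register version compares two \emph{unknown} quantum strings. That is what the paper later needs for the copy-consistency predicates in Claim~\ref{claim:multiindex_no_copies} and Theorem~\ref{thm:iter-multiindex_no_copies}, where it uses the same $\CNOT$ trick but writes into fresh ancillas.
\item The hardwired version is what the application $y=i$, $z=j$ in Claim~\ref{claim:index_with_copies} requires, because each $\calA^j$ is a separate circuit in which $j$ is a constant. It also saves $n$ qubits and the $\CNOT$ layer.
\end{itemize}
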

\begin{proof}
We first note that for a fixed $z \in \{0,1\}^n$, we have $[y=z] = \bigwedge_{\ell \in [n]} [y_\ell = z_\ell]$. For all $\ell \in [n]$, the indicator function $[y_\ell = z_\ell]$ can be written as:
\begin{equation}\label{eq:y_z_indicator}
[y_\ell = z_\ell] = 
\begin{cases}
y_\ell, & z_\ell = 1, \\
\overline{y_\ell}, & z_\ell = 0.
\end{cases}    
\end{equation}
Define the (smaller) quantum circuit $\calA^j$ as having the following action:\footnote{Note that we do not use a copy of $z$ as a quantum input, unlike in Claim~\ref{claim:EQ_circ1}.}
\begin{equation*}
\ket{y}\ket{0} \xrightarrow{\calA^j} \ket{y}\ket{[y = z]}.
\end{equation*}
This is implemented as follows. We apply a single-qubit $X$ gate on the qubits of $\ket{y}$ for which $z_\ell = 0$. This is in order to implement $[y_\ell = z_\ell]$ using Eq.~\eqref{eq:y_z_indicator}. We then implement an $(n+1)$-qubit Toffoli gate with controls over the qubits of $\ket{y}$ with the target being the last qubit. We then reset the qubits of $\ket{y}$ by applying a single-qubit $X$ gate on those qubits of $\ket{y}$ for which $z_\ell = 0$ again. Let us denote this last qubit as $\ket{t}$. The circuit is then summarized below in Figure~\ref{fig:circ_Aj_1}.
\begin{figure}[ht]
\centering
$$
\Qcircuit @C=1.5em @R=1.2em {
& & \mbox{$\calA^j$} & & \\
\lstick{\ket{y}} & \gate{\otimes_{\ell \in [n]} X_\ell [z_\ell = 0]} & \ctrl{1} & \gate{\otimes_{\ell \in [n]} X_\ell [z_\ell = 0]} & \qw & \rstick{\ket{y}} \\
\lstick{\ket{0}} & \qw & \targ & \qw & \qw & \rstick{\ket{t} = \ket{[y = z]}} \gategroup{2}{2}{3}{4}{.7em}{--} 
}
$$
\caption{Illustration of circuit $\calA^j$ for equality check $[x = y]$.}
\label{fig:circ_Aj_1}
\end{figure}

The circuit $\calA^j$ is applied over $n+1$ qubits accounting for the size of $y$ and the additional ancilla qubit. The circuit $\calA^j$ used $O(n)$ many single-qubit gates applied in parallel to each other and one $(n+1)$-controlled Toffoli gate. The latter has a one- and two-qubit gate complexity of $O(n)$ and requires depth $O(\log n)$ (when given $O(n)$ clean ancillas). This completes the proof.
\end{proof}

\paragraph{Quantum circuit (with copies) for computing index.} 
We are now ready for the proof of Claim~\ref{claim:index_with_copies}.
\begin{proof}[Proof of Claim~\ref{claim:index_with_copies}]
We first observe that the index function can be written as a depth-$3$ DNF circuit as follows:
\begin{equation*}
\Index(i,x)=\bigvee_{j\in \{0,1\}^n} [i=j]\land x_j =\bigvee_{j\in \{0,1\}^n} \left(x_j \land \bigwedge_{\ell\in [n]} [i_\ell=j_\ell]  \right). 
\end{equation*}
We now describe a quantum circuit to compute the index function. Suppose we have $N$ copies of $i$, i.e., we are given $N$ copies of the input $(i,i,i,\ldots,i)$. For one copy of $i$ and a fixed $j \in \{0,1\}^n$, we will consider the quantum circuit $\calA^j$ from Claim~\ref{claim:EQ_circ2} (with instantiations of $y$ as $i$, $z$ as $j$ there) and slightly modify it to output $x_j \wedge [i_j = j]$ (see Figure~\ref{fig:circ_Aj_2}), which has the following action:
\begin{equation*}
\ket{i}\ket{x_j}\ket{0} \xrightarrow{\calA^j} \ket{i}\ket{x_j}\ket{x_j \land [i = j]}.
\end{equation*}
Let us denote the last qubit as $\ket{t_j}$ for a given $j$.
\begin{figure}[ht]
\centering
$$
\Qcircuit @C=1.5em @R=1.2em {
& & \mbox{$\calA^j$} & & \\
\lstick{\ket{i}} & \gate{\otimes_{\ell \in [n]} X_\ell [j_\ell = 0]} & \ctrl{1} & \gate{\otimes_{\ell \in [n]} X_\ell [j_\ell = 0]} & \qw & \rstick{\ket{i}} \\
\lstick{\ket{x_j}} & \qw & \ctrl{1} & \qw & \qw & \rstick{\ket{x_j}}\\
\lstick{\ket{0}} & \qw & \targ & \qw & \qw & \rstick{\ket{t_j} = \ket{x_j \land [i_j = j]}} \gategroup{2}{2}{4}{4}{.7em}{--} 
}
$$
\caption{Illustration of quantum circuit $\calA^j$ used for equality check $[i_j = j]$ and outputting the corresponding bit $x_j$ if true.}
\label{fig:circ_Aj_2}
\end{figure}

To compute the index function, we now apply the circuit $U$ composed of the smaller circuits $\calA^j$ for $j \in \{0,1\}^n$ in parallel as follows:
\begin{equation*}
\left(\ket{i}\ket{x_0}\ket{0}\right) \ldots \left(\ket{i}\ket{x_{2^n-1}}\ket{0}\right) \xrightarrow{\calA^{0} \otimes \ldots \otimes \calA^{2^n-1}} \left(\ket{i}\ket{x_0}\ket{t_0}\right) \ldots \left(\ket{i}\ket{x_{2^n-1}}\ket{t_{2^n-1}}\right),
\end{equation*}
where we have used the equivalent integer representation of the index $j$ when denoting $x_j$ and $\calA^j$. We now note that each $\ket{t_j}$ is either $\ket{0}$ or $\ket{1}$. We simply measure the qubits over all $\ket{t_j}$ in the computational basis, and obtain the measurement outcome $t_j = x_j \land [i = j]$. Applying an \textsf{OR} gate over all $j \in \{0,1\}^n$ then gives us:
\begin{equation*}
\bigvee_{j\in \{0,1\}^n} t_j  = \bigvee_{j\in \{0,1\}^n} [i=j]\land x_j,
\end{equation*}
which is the desired index function computation. 
\begin{figure}[h!]
\centering
\includegraphics[width=0.6\linewidth]{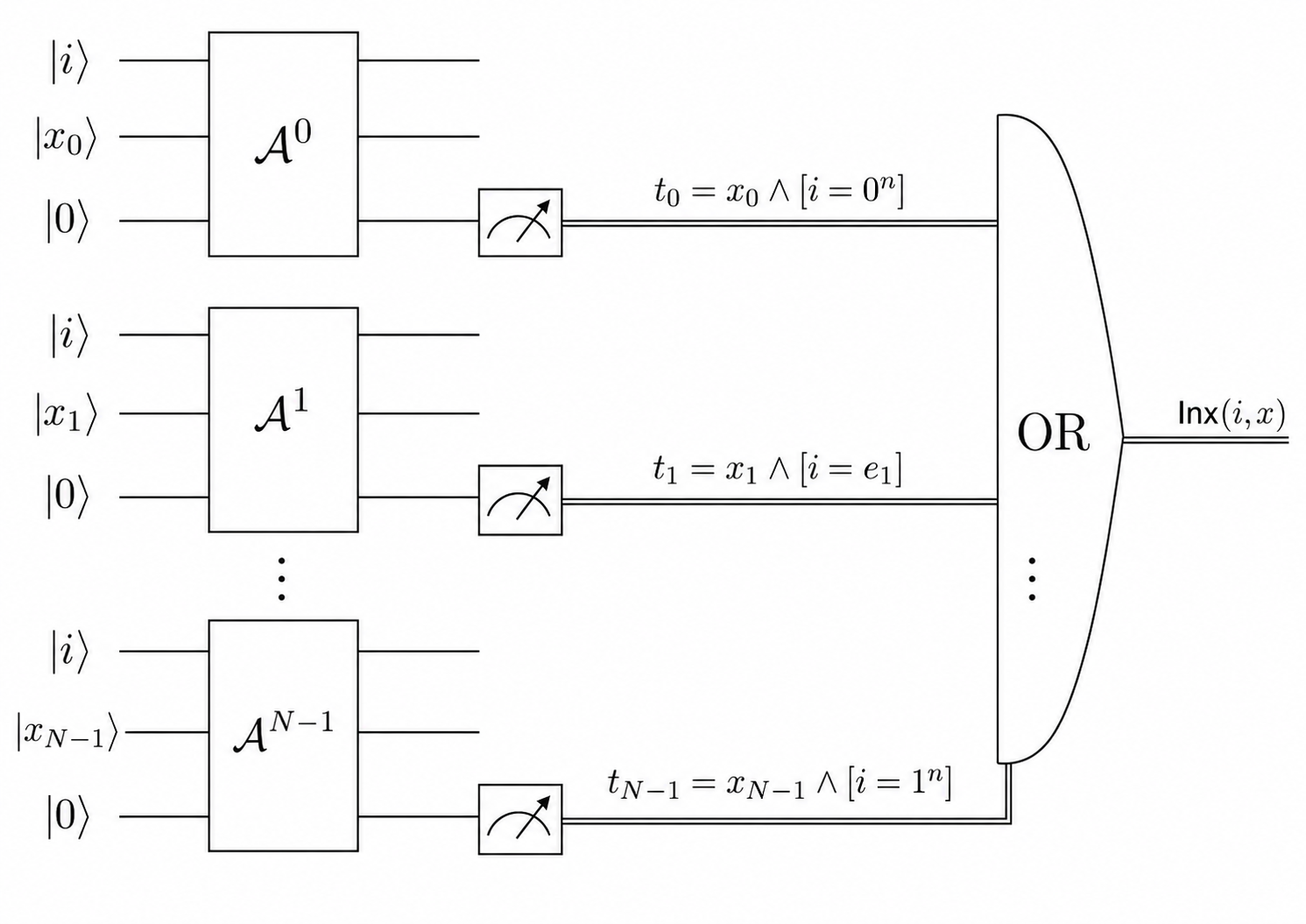}
\caption{Illustration of the circuit used for outputting $\textsf{Inx}(i,x)$ using copies of $\ket{i}$ and the circuit of $\calA^j, \forall j \in [N-1]_0$ from Figure~\ref{fig:circ_Aj_2}.}
\label{fig:circ_Inx}
\end{figure}
For each $\calA^j$, we used $O(n)$ many single-qubit gates applied in parallel to each other and one $(n+1)$-controlled Toffoli gate (Claim~\ref{claim:EQ_circ2}). The latter has a one- and two-qubit gate complexity of $O(n)$ and requires depth $O(\log n)$ (when given $O(n)$ clean ancillas). Summing over all $j \in \{0,1\}^n$ gives us a gate complexity of $O(N \log N)$. Noting that all $\calA^j$ are applied in parallel, the depth of the circuit is $O(\log n) = O(\log \log N)$. This completes the proof.
\end{proof}

\subsubsection{Multi-index function}
The natural next question for us is: can we remove the need for $n$ copies in the $\QACz$ upper bound? For this purpose, we can adopt a simple trick used in~\cite{QAC0tc0}. Consider the index function $\Index:\{0,1\}^n\times \{0,1\}^N\rightarrow \{0,1\}$, defined as $\Index(i,x)=x_i$. It is shown in \cite{QAC0tc0} that one can consider the \emph{multi-index} function $\MIndex:(\{0,1\}^n)^N\times \{0,1\}^N\rightarrow \{0,1\}$ defined as follows:
\begin{equation*}
\MIndex(i_1,\ldots,i_N, x)= [ i_1=\cdots =i_N] \cdot x_{i_1},
\end{equation*}
i.e., it first checks if all the $n$-bit length strings $i_1,\ldots,i_N$ are equal (say equal to $i\in \{0,1\}^n$) and, based on that, outputs $x_i$. The idea is now to say that the players in our protocol are each given $N$ copies of their input, and instead of solving single instances of $\Index$, they solve $\MIndex$ as above. Checking if $i_1=\cdots =i_n$ is doable in $\QACz$, so this does not add any significant complexity to the circuit. Since we also have these many copies ``for free'' by definition, one can also compute $x_i$.

\begin{claim}\label{claim:multiindex_no_copies}
Let $n, N \in \mathbb{N}$ such that $N = 2^n$. Given $\{i_k\}_{k \in [N]}$ and one copy of $x \in \{0,1\}^N$, the function $\MIndex(i_1,\ldots,i_N,x)$ can be computed by a quantum circuit with gate complexity $O(N \log N)$ followed by an $\textsf{AND}$ gate\footnote{Throughout this paper, we assume without loss of generality that $\textsf{NOT}$ gates are free, simply by increasing the size of the circuits by at most a constant factor (using De Morgan's laws)}. Moreover, the problem is in $\land \circ \QNCz[{\log \log N}]$.
\end{claim}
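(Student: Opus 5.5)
The plan is to express $\MIndex$ as a single large conjunction of $O(N)$ ``local'' predicates. Each predicate depends on only $O(n)$ input bits, so it can be computed by an $O(\log n)=O(\log\log N)$-depth $\QNCz$ circuit built as in Claim~\ref{claim:EQ_circ2}. The final classical gate is then a single $\textsf{AND}$ over the measured outcomes. The key identity I would use is
\begin{equation*}
\MIndex(i_1,\ldots,i_N,x)\;=\;\bigwedge_{k=1}^{N-1}[i_k=i_{k+1}]\;\wedge\;\bigwedge_{j\in\{0,1\}^n}\Big(\neg[i_j=j]\vee x_j\Big),
\end{equation*}
where $i_j$ denotes the address copy indexed by (the integer representation of) $j$.

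To check the identity, suppose all the $i_k$ equal some $i$. Then the implication terms are vacuous except for $j=i$, where the term reduces to $x_i$. If instead some consecutive pair differs, the first conjunction is $0$. So the right-hand side equals $[i_1=\cdots=i_N]\cdot x_{i_1}$.

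The point of using implications $[i_j=j]\to x_j$, instead of the disjunction $\bigvee_j [i=j]\wedge x_j$ from Claim~\ref{claim:index_with_copies}, is that the index lookup becomes an $\textsf{AND}$ rather than an $\textsf{OR}$. It therefore merges with the equality checks into one final gate, with no $\vee$ layer.

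The circuit steps, in order, are as follows.
\begin{enumerate}
\item Each register $\ket{i_k}$ participates in at most three predicates: the comparisons with $i_{k-1}$ and $i_{k+1}$, and its own implication term. I will fan each register out into $O(1)$ fresh copies using bitwise $\CNOT$s into $\ket{0}^n$ ancillas, which costs depth $O(1)$. This is legitimate because the inputs are classical basis states.
\item For each pair $(k,k+1)$, I apply $\CNOT$s from a copy of $i_k$ onto a copy of $i_{k+1}$, producing $i_k\oplus i_{k+1}$. I then apply $X$ on every qubit and an $(n+1)$-qubit Toffoli onto a target, which yields $[i_k=i_{k+1}]$.
\item For each $j$, I use the circuit $\calA^j$ of Figure~\ref{fig:circ_Aj_2} on a copy of $i_j$, but with $X$ applied to $x_j$ before and after the Toffoli. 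This gives $[i_j=j]\wedge\neg x_j$ on the target, and a final $X$ on the target gives the implication bit. (Alternatively, one can absorb the negation into the final gate, since $\textsf{NOT}$s are free.)
\item Each $(n+1)$-qubit Toffoli is decomposed into a binary tree of ordinary Toffolis using $O(n)$ clean ancillas. This gives depth $O(\log n)$ and $O(n)$ bounded fan-in gates each.
\item I measure all $2N-1$ target qubits in the computational basis and feed the outcomes into one $\textsf{AND}$ gate.
\end{enumerate}

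Complexity accounting: there are $O(N)$ subcircuits, each with $O(n)=O(\log N)$ gates, so the gate complexity is $O(N\log N)$. All subcircuits act on disjoint registers after the constant-depth copying step, so they run in parallel, and the total depth is $O(1)+O(\log n)=O(\log\log N)$. This places $\MIndex$ in $\land\circ\QNCz[\log\log N]$.

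The step I expect to need the most care is not depth but the logical structure. The obvious route, taking the index circuit of Claim~\ref{claim:index_with_copies} and conjoining it with the equality checks, produces an $\wedge\circ\vee$ structure. Rewriting the lookup as a conjunction of implications, which is valid only under the equality promise, is what removes the $\vee$. I would double-check the edge behaviour of the identity when the addresses are not all equal: some implication terms may then be $0$, but this is harmless because the equality part is already $0$.
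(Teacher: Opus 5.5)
Your proof is correct, and it departs from the paper's proof exactly at the step you flag. The paper uses the circuits $\calA^j$ to compute $t_j = x_j\wedge[i_j=j]$. It also measures the $Nn$ bitwise comparison bits $o_{k,\ell}=[i_{k,\ell}=i_{k+1,\ell}]$, obtained from two layers of $\CNOT$s taken cyclically in $k$. It then forms $s\wedge\bigvee_j t_j$, where $s$ is the $\AND$ of the $o_{k,\ell}$, and asserts that De Morgan's laws collapse this into a single $\AND$ with negated inputs. That assertion does not hold for those outcome bits. A single $\AND$ of literals accepts a subcube of outcome strings. But even restricted to reachable outcomes, the required function must accept $(o,t)=(\mathbf 1,e_1)$ and $(\mathbf 1,e_2)$ while rejecting $(\mathbf 1,\mathbf 0)$, and no subcube does that. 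De Morgan only converts the $\vee$ into $\AND\circ\textsf{NOT}\circ\AND$, which is still two levels. Your encoding of the lookup as the implications $\neg([i_j=j]\wedge\neg x_j)$ is what genuinely yields one final $\AND$. It is valid because, under the equality condition, exactly one $j$ satisfies $[i_j=j]=1$. This is the same device the paper itself uses later for $C_{\tau,b}$ in Theorem~\ref{thm:iter-index_with_copies}, so your route is the one that actually supports the $\land\circ\QNCz[\log\log N]$ conclusion of this claim. The remaining differences are cosmetic. The paper checks equality bit by bit and lets the final $\AND$ absorb the comparison bits, which gives constant quantum depth for that part. You instead copy the registers and use an $(n+1)$-qubit Toffoli tree per adjacent pair. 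Either way, the $O(\log n)=O(\log\log N)$ depth and the $O(N\log N)$ gate count are dominated by the lookup circuits.
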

\begin{proof}
Recall that the multi-index function is:
\begin{equation*}
\MIndex(i_1,\ldots,i_N, x)= [ i_1=\cdots =i_N] \cdot x_{i_1}.
\end{equation*}
We now describe a quantum circuit used to compute the multi-index function. We are given the input $(i_1,i_2,\ldots,i_N)$ and $x$. We first define a quantum circuit $U_1$ to check the indicator  $[ i_1=\cdots =i_N]$. Let us introduce $Nn$ ancilla qubits $\{o_{k,\ell}\}_{k \in [N], \ell \in [n]}$ with each initialized to $\ket{0}$ and an ancilla qubit $s$ initialized to $\ket{0}$. We first apply a layer of $Nn$ $\CNOT$s in parallel with a $\CNOT$ gate from each $i_{k,\ell}$ to $o_{k,\ell} \, $ for all $ k \in [N], \ell \in [n]$. This has the effect of setting the state of $o_{k,\ell}$ to that of $i_{k,\ell}$. We then apply another layer of $\CNOT$s in parallel with a $\CNOT$ gate from each $i_{k+1,\ell}$ to $o_{k,\ell}, $ for all $ k \in [N], \ell \in [n]$\footnote{We actually mean from $i_{(k+1)\,\text{mod} N, \ell}$ to $o_{k,\ell}$ so that the bits of $i_1$ are chosen for $k=N$.} At this point, the state of each $\ket{o_{k,\ell}}$ is $\ket{i_{k,\ell} \oplus i_{k+1,\ell}}$. Applying a single-qubit $X$ gate on each $o_{k,\ell}$ then sets the state to be the indicator $[i_{k,\ell} = i_{k+1,\ell}]$, i.e., $\ket{o_{k,\ell}} = \ket{[i_{k,\ell} = i_{k+1,\ell}]}$. We could now measure all the qubits over $\{o_{k,\ell}\}$ in the computational basis and apply an $\AND$ gate over the measurement outcomes to obtain the value of $[i_1 = \ldots = i_N]$. However, we will postpone this step after describing how to obtain $x_{i_1}$ in case $[i_1=\cdots =i_N] = 1$.

To obtain $x_{i_1}$, we proceed similarly as we had in Claim~\ref{claim:index_with_copies}. 
 For each $j \in \{0,1\}^n$, we consider the quantum circuit $\calA^j$ from Claim~\ref{claim:EQ_circ2} (with instantiations of $y$ as $i_j$, $z$ as $j$, $x$ as $x$ and $j$ as the integer corresponding to the given binary representation here) taking as input $\ket{i_j}$, $\ket{x_j}$, and an additional ancilla qubit initialized to $0$, which has the following action:
\begin{equation*}
\ket{i_j}\ket{x_j}\ket{0} \xrightarrow{\calA^j} \ket{i}\ket{x_j}\ket{x_j \land [i_j = j]}.
\end{equation*}
Let us denote that last qubit as $\ket{t_j}$ for a given $j$. To compute the multi-index function, we now apply the circuit $U$ composed of the smaller circuits $\calA^j$ for $j \in \{0,1\}^n$ in parallel as follows:
\begin{align*}
&\left(\ket{i_1}\ket{x_0}\ket{0}\right) \ldots \left(\ket{i_{N}}\ket{x_{N-1}}\ket{0}\right) \ket{0}^{Nn} 
\\
&\qquad \xrightarrow{(\calA^{0} \otimes \ldots \otimes \calA^{N-1})U_1} \\
&\left(\ket{i_1}\ket{x_0}\ket{t_0}\right) \ldots \left(\ket{i_N}\ket{x_{N-1}}\ket{t_{N-1}}\right) \left(\otimes_{k \in [N], \ell \in [n]} \ket{o_{k,\ell}} \right),
\end{align*}
where we have used the equivalent integer representation of the index $j$ when denoting $x_j$ and $\calA^j$. We note that $\ket{t_j} = \ket{x_j \land [i_j = j]}$ for each $j$. We now simply measure the qubits over all $\ket{t_j}$ in the computational basis and obtain the measurement outcome $t_j = x_j \land [i = j]$. We also measure the qubits over all $\ket{o_{k,\ell}}$ in the computational basis to obtain the measurement outcomes $o_{k,\ell}$. Applying an $\AND$ gate over all the measurement outcomes $\{o_{k,\ell}\}_{k \in [N],\ell \in [n]}$ gives us $s$, which has the value:
\begin{equation*}
s = \prod_{k \in [N], \ell \in [n]} [i_{k,\ell} = i_{k+1,\ell}] = [ i_1=\cdots =i_N].
\end{equation*}
Applying an \textsf{OR} gate over all ${t_j}$ measurement outcomes gives us $t$, which has the value:
\begin{equation*}
t = \bigvee_{j\in \{0,1\}^n} t_j  = \bigvee_{j\in \{0,1\}^n} x_j \land [i_j=j].
\end{equation*}
Applying a two-bit $\textsf{AND}$ gate on $s$ and $t$ gives us:
\begin{equation*}
st = [ i_1=\cdots =i_N] \cdot \left(\bigvee_{j\in \{0,1\}^n} x_j \land [i_j=j]\right) = [ i_1=\cdots =i_N] \cdot \left(\bigvee_{j\in \{0,1\}^n} x_j \land [i_1=j]\right), 
\end{equation*}
where we noted in the second equality that $st$ takes a non-zero value only if $[ i_1=\cdots =i_N] = 1$, in which case, we can replace $i_j$ with $i_1$ in the second term. This would then compute the $\MIndex$ function. We note that we can replace the above classical processing with just one $\AND$ gate over the measurement outcomes of $\{t_j\}$ and $\{o_{k,\ell}\}$ with interleaved $\textsf{NOT}$ gates by using DeMorgan's laws. The circuit $U_1$ uses $O(N \log N)$ many $\CNOT$ gates and has depth $2$. For each $\calA^j$, we used $O(n)$ many single-qubit gates applied in parallel to each other and one $(n+1)$-controlled Toffoli gate. The latter has a one- and two-qubit gate complexity of $O(n)$ and requires depth $O(\log n)$ (when given $O(n)$ clean ancillas). Summing over the gates in $\calA^{j} $ for all $j \in \{0,1\}^n$ and $U_1$ gives us a gate complexity of $O(N \log N)$. Noting that all $\calA^j$ are applied in parallel, the depth of the overall circuit $U$ is $O(\log n) = O(\log \log N)$. This completes the proof.
\end{proof}

\subsection{Final candidate function}

\subsubsection{Iterated index function}
\label{sec:itreindx}

\paragraph{Parameter choices.} Fix the transformer parameters $H,d,p,L$, i.e., the number of heads, model dimension, precision and number of layers. We use the same parameter setting as in~\cite{chen2024theoretical}: we will first define $T=Hdp$, which is the \emph{bandwidth} of the players. With this, the remaining parameters are defined as follows:
\begin{align*}
&K(T,L)=(TL)^8,\quad m(T,L)=K^{\sum_{\ell=0}^{L-1}8^{\ell}+1},\\  &n_\ell(T, L)= K^{4 \cdot 8^{L-\ell-1}}\text{ for } \ell \in [L-1],\quad N_\ell(T, L) =m \cdot \prod_{j=1}^{\ell} n_j \text{ for }\ell \in \{0, \ldots, L-1\}.
\end{align*}
We will sometimes drop the $(T,L)$ dependence when it is clear from the context. 

\paragraph{Iterated index function.}
For the iterated index function as defined by~\cite{chen2024theoretical}, the input is a tuple
$(w, z_0, z_1, \ldots, z_L)$, where:
\begin{enumerate}
  \item  $w = (w_1, \ldots, w_{L-1})$ has $L-1$ components, with
  $w_\ell \in [n_\ell]$;
  \item the first argument $z_0 \in [m]$ is referred to as the \emph{seed};
  \item each $\{z_\ell\}_{\ell \in [L]}$  is a function satisfying $z_\ell:[N_{\ell-1}]\rightarrow [N_{\ell-1}]$. We identity the input $[N_{\ell-1}]$ with $[n_{\ell-1}] \times [N_{\ell-2}]$  since $N_{\ell-1} = n_{\ell-1} \cdot N_{\ell-2}$ (so we may
  write $z_\ell(w_{\ell-1}, i_{\ell-1})$ for $\ell \geq 2$). 
\end{enumerate}
With this notation, the input to player $\ell$ has size:
\begin{equation*}
A_{-1}=\prod_{\ell=1}^{L-1}[n_\ell],\quad A_0=[m],\quad  A_\ell:=[N_{\ell-1}]^{N_{\ell-1}}.
\end{equation*}
The goal of the iterated index problem is to simply compute the following function: on input $(w, z_0, z_1, \ldots, z_L) \in A_{-1} \times A_0 \times A_1 \times \cdots \times A_L$, output $\IIndex(w, z_0, \ldots, z_L) := i_L$, where:
\begin{equation*}
i_0=z_0,\quad i_1=z_1(i_0), \quad i_2=z_2(w_1,i_1),\hspace{3mm}\cdots \hspace{3mm}, i_{\ell+1}=z_{\ell+1}(w_\ell,i_\ell) \quad \text{ for }\quad\ell\in [L-1].
\end{equation*}
\textbf{Intuition.} For some intuition, the reason we think of the expression above as an \emph{iterated index} is as follows. Consider player $1$, who receives the truth table of $z_1:[N_0]\rightarrow [N_0]$ (i.e., a message with $(N_0)^{\log N_0}$ bits), and needs to produce the $i_0$-th index in the truth table of $z_1$, which we denote $i_1$. Now, similarly, player $2$ receives the truth table of $z_2:[N_1]\rightarrow [N_1]$ and needs to output the $(w_1,i_1)$-th index of this truth table $z_2$ (which we denote as $z_2(w_1,i_1)$). This procedure iteratively continues for $L-1$ rounds before player $L$ player outputs $z_{L}(w_{L-1},i_{L-1})$ and passes this index to the final player, player $(L + 1)$. This final player $(L+1)$ outputs $z_{L+1}(w_L,i_L)$, which is the intended value of the function.
\begin{theorem}[{\cite[Theorem~4.1]{chen2024theoretical}}]
\label{thm:cpw-main}
There exists $\alpha_L = 2^{-O(L)}$ with the following property. Every deterministic autoregressive protocol  for $\IIndex$ 
on the parameters $(K, m, \{n_\ell\}, \{N_\ell\})$, with
$L$ epochs and bandwidth $B \leq T$  satisfies:
\begin{equation*}
B \;\geq\; n_{\textup{orig}}(T, L)^{\alpha_L},
\end{equation*}
where $n_{\textup{orig}}(T, L) := 1 + 1 + \sum_{\ell \in [L]} N_{\ell-1}
\sim K^{2^{O(L)}}$ is the total token length of the input.\footnote{We will prove this final upper bound expression on $n_{\textup{orig}}(T, L)$ shortly.}
\end{theorem}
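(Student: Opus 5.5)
The statement is imported from \cite[Theorem~4.1]{chen2024theoretical}, so my plan has two parts. First, I would reduce to the cited result and check that our parameters match theirs. Second, I would verify the footnoted size bound $n_{\textup{orig}}(T,L)\le K^{2^{O(L)}}$ directly. I would not reprove the communication lower bound. Instead, I would confirm that the autoregressive protocol model described above (memorylessness, local one-way-ness, reply size at most $B\cdot m_i$) coincides with theirs. I would also confirm that the parameter tuple $(K,m,\{n_\ell\},\{N_\ell\})$ is exactly their instantiation with $K=(TL)^8$. Then the conclusion $B\ge n_{\textup{orig}}^{\alpha_L}$ with $\alpha_L=2^{-O(L)}$ is their statement verbatim.

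For the size estimate, I would compute exponents of $K$. We have $m=K^{\sum_{\ell=0}^{L-1}8^\ell+1}=K^{(8^L-1)/7+1}$. Each $n_j=K^{4\cdot 8^{L-j-1}}$, so
\[
\prod_{j=1}^{L-1} n_j = K^{4\sum_{j=1}^{L-1}8^{L-j-1}} = K^{4(8^{L-1}-1)/7}.
\]
Hence every $N_{\ell-1}\le N_{L-1}\le K^{c\cdot 8^L}$ for an absolute constant $c$. Summing $L$ such terms plus $2$ gives $n_{\textup{orig}}\le (L+2)K^{c 8^L}\le K^{c' 8^L}=K^{2^{O(L)}}$, using $L\le K$. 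The matching lower bound $n_{\textup{orig}}\ge m\ge K^{8^{L-1}}$ shows the exponent is genuinely $2^{\Theta(L)}$.

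Because $\log n_{\textup{orig}}=2^{\Theta(L)}\log K$, the inequality $B\ge n_{\textup{orig}}^{\alpha_L}$ is equivalent to $B\ge K^{\Omega(1)}$ after adjusting constants inside $\alpha_L$. So if one preferred to re-derive the bound, the target is to show that a protocol with bandwidth $B\le T$ fails. Note that $K=(TL)^8\gg T$. The argument in \cite{chen2024theoretical} proceeds by induction over epochs, with a round-elimination/information argument on player $\ell$'s truth table $z_\ell$. That argument is the genuine obstacle, and I would defer to the citation for it.

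The only delicate bookkeeping point is the "$B\le T$" hypothesis. The reduction of Lemma~\ref{lem:cpw-reduction} yields bandwidth $2Hdp=2T$. I would handle this either by rescaling $T$ by a factor of $2$ in the parameter definitions, or by noting that constant factors are absorbed into $\alpha_L$. The resulting conclusion is the contrapositive: the protocol cannot exist unless $T\ge K^{\Omega(1)}=(TL)^{\Omega(1)}$, which is the form needed later.
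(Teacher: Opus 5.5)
Your plan is correct and is exactly what the paper does. The bound is imported from \cite{chen2024theoretical} without reproof, and the only in-paper check is the size estimate $N_\ell = K^{2^{O(L)}}$ (in Lemma~\ref{lem:tbound}), which your exact exponents for $m$ and $\prod_j n_j$ establish correctly and a bit more carefully.

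One small caution on your side remark. The factor $2$ in bandwidth coming from Lemma~\ref{lem:cpw-reduction} can only be handled by rescaling $T$ in the instance parameters, not by absorbing it into $\alpha_L$. This is because $B\le T$ is a hypothesis restricting which protocols the theorem covers, not a constant in its conclusion.
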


\subsubsection{Iterated multi-index function}
\label{sec:multiindex}
\paragraph{Parameter choices.} We retain all parameter choices from the iterated 
index problem above ($K, m, n_\ell, N_\ell$ as functions of $T = Hdp$ and $L$), 
and introduce one additional parameter, the \emph{copy parameter}:
\begin{equation}\label{eq:copy_parameter}
t(T,L) := \prod_{\ell=1}^{L} N_{\ell-1} = \prod_{\ell=0}^{L-1} N_\ell.    
\end{equation}
As before, we drop the $(T,L)$ dependence when the context is clear. 

\paragraph{Iterated multi-index function.}
The iterated multi-index function is the $t$-copy extension of the iterated index 
function. The input is a tuple 
$\left(\{w^{(s)}\}_{s \in [t]}, \{z_0^{(s)}\}_{s \in [t]}, \{z_1^{(s)}\}_{s \in [t]}, 
\ldots, \{z_L^{(s)}\}_{s \in [t]}\right)$, where, for each $s \in [t]$:
\begin{enumerate}
  \item $w^{(s)} = (w_1^{(s)}, \ldots, w_{L-1}^{(s)}) \in A_{-1}$, i.e., 
  $w_\ell^{(s)} \in [n_\ell]$ for $\ell \in [L-1]$;
  \item the seed $z_0^{(s)} \in A_0 = [m]$;
  \item each table $z_\ell^{(s)} \in A_\ell$ is a function 
  $z_\ell^{(s)} : [N_{\ell-1}] \to [N_{\ell-1}]$, with the same identification of 
  $[N_{\ell-1}]$ with $[n_{\ell-1}] \times [N_{\ell-2}]$ as in the iterated index 
  problem.
\end{enumerate}
The per-role input domain in the iterated multi-index problem is the $t$-fold product:
\begin{equation*}
A_{-1}^{\,t} = \left(\prod_{\ell=1}^{L-1}[n_\ell]\right)^t, 
\quad A_0^{\,t} = [m]^{\,t}, 
\quad A_\ell^{\,t} = \left([N_{\ell-1}]^{N_{\ell-1}}\right)^t \text{ for } \ell \in [L].
\end{equation*}
Define the \emph{consistency predicate} as:
\begin{equation*}
E\left(\{w^{(s)}\}, \{z_\ell^{(s)}\}\right) := 
\bigwedge_{s \in [t]} \left( w^{(s)} = w^{(1)} \;\land\; z_0^{(s)} = z_0^{(1)}
\;\land\; \bigwedge_{\ell \in [L]} z_\ell^{(s)} = z_\ell^{(1)} \right) \in \{0,1\},
\end{equation*}
i.e., $E = 1$ iff all $t$ copies of \emph{each} input component agree. The goal of the iterated multi-index problem is to compute the following function: on input 
$\left(\{w^{(s)}\}, \{z_0^{(s)}\}, \ldots, \{z_L^{(s)}\}\right)$, output:
\begin{equation*}
\IMIndex\left(\{w^{(s)}\}, \{z_\ell^{(s)}\}\right) := 
E\left(\{w^{(s)}\}, \{z_\ell^{(s)}\}\right) \cdot 
\IIndex\left(w^{(1)}, z_0^{(1)}, z_1^{(1)}, \ldots, z_L^{(1)}\right).
\end{equation*}
In other words: if every component is internally consistent across the $t$ copies,  output the iterated index value on those (common) inputs; otherwise output $0$. 

\textbf{Intuition.} We again provide some intuition for this candidate problem, and why we call it iterated multi-index. First, observe that if:
\begin{align}
\label{eq:conditionsforiterindex}
w_i^{s}=w_1, z_0^{(s)}=z_0^{(1)}, z_\ell^{(s)}=z_\ell^{(1)} \text{ for all } s\in [t],\ell\in [L],
\end{align}
that means that the players are just get the same input string several times. Once that is the case, the players can just treat their input as $w^{(1)},z_i^{(1)}$ for $i\in \{0,\ldots,L\}$ and just run the usual index protocol. But recall that in order to compute even a \emph{single} instance of the index problem, the $\QNCz$ circuit requires copies of the index, which is not available for free. Having $t$ copies of the individual input has the effect of providing these copies ``for free'', and part of the input for the $\QNCz$ circuit we will construct. It is clear that having just one copy would be equivalent to the iterated index problem. Indeed, the ``diagonal'' version of the problem above is exactly the iterated index problem (whose lower bound will be crucial in proving our lower bound). For the off-diagonal elements, we simply enforce the condition that if any of the equalities in Eq.~\eqref{eq:conditionsforiterindex} are not met, then the overall function just evaluates to $0$.

\subsection{Quantum upper bound}
\paragraph{Iterated index function.}
We have the following main result regarding the computation of the iterated index function.

\begin{theorem}\label{thm:iter-index_with_copies}
Let $L \in \mathbb{N}, L \geq 2$, $n_{\ell}, N_{\ell} \in \mathbb{N}$  for all $\ell \in [L-1]$ such that $N_\ell = 2^{n_\ell}, $ for all $ \ell \in [L-1]$. Given $K^{O(L 8^L)}$ copies of each of $\{w_\ell\}_{\ell \in [L-1]}$, $i_0$, and $\{z_{\ell}\}_{\ell \in [L]}$, the iterated-index function $\IIndex$ can be computed by a quantum circuit with gate complexity $O(K^{L 8^L} \log K)$ followed by an $\AND$ gate. Moreover, the problem is in $\land \circ \QNCz[{\log \log K}]$.
\end{theorem}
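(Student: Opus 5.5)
The plan is to write $\IIndex$ as one large conjunction of Boolean predicates, each checkable by a shallow quantum circuit, and to pay for the classical fan-out with the supplied copies. The approach generalizes Claim~\ref{claim:index_with_copies} and Claim~\ref{claim:multiindex_no_copies}.

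The starting point is a guess-and-verify formula. For a target output bit (or output value $v$), the predicate $[\IIndex(w,z_0,\ldots,z_L)=v]$ can be written as
\begin{equation*}
\bigvee_{(j_0,\ldots,j_{L-1})} \Big( [z_0=j_0]\wedge [z_1(j_0)=j_1]\wedge \bigwedge_{\ell=1}^{L-2}[z_{\ell+1}(w_\ell,j_\ell)=j_{\ell+1}] \wedge [z_L(w_{L-1},j_{L-1})=v]\Big).
\end{equation*}
Here the disjunction ranges over all candidate intermediate indices $j_\ell\in[N_\ell]$. Since the $N_\ell$ are determined by the $n_\ell$, the number of paths is $\prod_\ell N_\ell\le K^{O(L8^L)}$. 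This is exactly where the copy count in the statement comes from.

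For each candidate path we use a fresh copy of the inputs. The conjunction along the path is a conjunction of equality tests $[y=z]$ between a register of the input (a table entry $z_{\ell+1}(w_\ell,j_\ell)$ selected by a fixed index, or $w_\ell$ itself) and a classical constant. Each such test is implemented by the circuit of Claim~\ref{claim:EQ_circ2}, with $X$ gates for the constant followed by a multi-controlled Toffoli. Because the relevant table entry $z_{\ell+1}(w_\ell,j_\ell)$ depends on $w_\ell$, we additionally guess $w_\ell$ as part of the path and check $[w_\ell=\text{guess}]$ on a copy of $w_\ell$. The path then touches only fixed positions of the tables, so no quantum addressing is needed. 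Each Toffoli has $O(\log N_\ell)=O(8^L\log K)$ controls. The whole path conjunction is a single Toffoli on $O(L\,8^L\log K)$ controls, implementable in depth $O(\log(L 8^L\log K))$, which is $O(\log\log K)$ for constant $L$. All paths act on disjoint copies and run in parallel, giving total gate count $K^{O(L8^L)}\cdot O(\log K)$, which matches the stated bound up to the exponent.

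Measuring the path qubits gives bits $t_{\vec j}$, and the output is $\bigvee_{\vec j} t_{\vec j}$. We then need to pass from an $\textsf{OR}$ to a single $\AND$. We use De Morgan's law, $\bigvee t=\neg\bigwedge\neg t$, absorbing the negations into the quantum circuit (an $X$ on each target) and the final negation as a free \textsf{NOT}, as in Claim~\ref{claim:multiindex_no_copies}. Alternatively we check that the unique consistent path exists, which is an $\AND$ over path-wise implications. If multi-bit output is needed, each output bit is handled by a separate such circuit; the paper's convention of a single output bit makes this unnecessary.

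The main obstacle is the dependency of each stage's address on the previous stage's output together with $w_\ell$. A naive sequential evaluation would need quantum addressing of depth $L\cdot O(\log\log K)$ and would require fan-out of intermediate results, which $\QNCz$ lacks. The guess-all-paths approach handles this by trading it for the exponential number of copies, which the hypothesis provides. The remaining care is bookkeeping. We must verify that the number of copies of each component needed, one per path, is within $K^{O(L8^L)}$. We must also verify that the Toffoli width is $\mathrm{polylog}$ in $K$, so the depth stays $O(\log\log K)$.
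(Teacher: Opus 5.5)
Your construction matches the paper's. You enumerate candidate histories $\tau$, including guesses $r_\ell$ for $w_\ell$ so that every table address is a fixed constant. You give each history its own copy of the input and test consistency with the fixed-constant equality circuit of Claim~\ref{claim:EQ_circ2}. All histories run in parallel, which gives depth $O(L+\log\log K)$ with $K^{O(L8^L)}$ copies.

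The one step to drop is the De Morgan route $\bigvee_\tau t_\tau=\neg\bigwedge_\tau\neg t_\tau$. Negations on the \emph{inputs} of the final $\AND$ can be absorbed as $X$ gates before measurement. The negation on its \emph{output} cannot, so this route only gives the $\lor\circ\QNCz$ form of Claim~\ref{claim:index_with_copies}. If that output negation were free, $\lor\circ\QNCz$ and $\land\circ\QNCz$ would coincide, and the paper's insistence on ending with a single $\AND$ would be pointless.

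Your stated alternative is the correct one, and it is exactly what the paper does. Every input has exactly one consistent history $\tau^*$, because the $z_\ell$ are total functions. Hence
\[
\IIndex(X)_b=\bigwedge_{\tau}\bigl(\textsf{Const}_\tau(X)\Rightarrow z_L(r_{L-1},a_{L-1})_b\bigr)=\bigwedge_\tau \neg\bigl(\textsf{Const}_\tau(X)\wedge\neg z_L(r_{L-1},a_{L-1})_b\bigr),
\]
and each conjunct costs one extra Toffoli plus $X$ gates on that history's copy.

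This $\AND$ does not ``check that the unique consistent path exists''; with no consistent path it would vacuously output $1$. It \emph{uses} existence and uniqueness to collapse to the value at $\tau^*$. Equivalently, your De Morgan idea works with no final negation if you apply it to the OR-formula for $\IIndex(X)_b=0$ rather than for $\IIndex(X)_b=1$.
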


\begin{proof}
Denote the input to the algorithm as:
\begin{equation*}
X=(w,z_0,z_1,\ldots,z_L), \text{ where } w=(w_1,\ldots,w_{L-1}).
\end{equation*}
Recall that the iterated-index computation is then the following sequence of computations starting from the initial seed $z_0$:
\begin{equation*}
i_0=z_0,
\quad
i_1=z_1(i_0), \quad i_{\ell+1}=z_{\ell+1}(w_\ell,i_\ell) \quad \text{ for every } \quad \ell \in [L-1].
\end{equation*}
Explicitly, the intermediate computations are:
\begin{equation*}
i_2=z_2(w_1,i_1),\quad
i_3=z_3(w_2,i_2),\quad \ldots,\quad
i_L=z_L(w_{L-1},i_{L-1}).
\end{equation*}
One approach would of course be to sequentially and recursively prepare the intermediate indices
$i_1,i_2,\ldots,i_{L-1}$ on quantum registers, which are the inputs to the truth tables $z_2,z_3,\ldots,z_L$. However, this might increase depth. So, instead, we verify all possible computation histories \emph{in parallel}. 

Note that the input to the final truth table $z_L$ is of size $N_{L-1}$. A candidate computation history up to $z_L$ is a tuple:
\begin{align}\label{eq:defntau}
\tau=(r_1,\ldots,r_{L-1},a_0,a_1,\ldots,a_{L-1}),
\end{align}
where:
\begin{equation*}
a_0\in[N_0],
\quad
a_\ell\in[N_{\ell-1}],\quad r_\ell\in[n_\ell] \quad  \text{ for  } \quad 1\leq \ell\leq L-1.
\end{equation*}
For a fixed $\tau$, let $\textsf{Const}_\tau(X)$ denote the predicate that all of the
following conditions hold:
\begin{align}
&z_0=a_0, \label{eq:Const_tauX_start}\\
&w_\ell=r_\ell
\qquad\text{for every }1\leq \ell\leq L-1,\\
&z_1(a_0)=a_1,\\
&z_{\ell+1}(r_\ell,a_\ell)=a_{\ell+1}
\qquad\text{for every }1\leq \ell\leq L-2.
\label{eq:Const_tauX_end}
\end{align}
In particular, for every $\tau$, $\textsf{Const}_\tau(X)$ checks if $X$ is a valid iterated index transcript. Note that exactly one $\tau^*$ satisfies $\textsf{Const}_{\tau^*}(X)=1$. For this $\tau^*$ and hence when $\textsf{Const}_\tau(X)=1$, we need to compute the output of $z_L(w_{L-1},a_{L-1})$. To this end, for every $b \in [\log(N_{L-1})]$,  define:
\begin{equation}\label{def:C_tau_b}
C_{\tau,b}(X) := \neg\left(\textsf{Const}_\tau(X) \land \neg z_L(r_{L-1},a_{L-1})_b \right),    
\end{equation}
where the subscript $b$ indicates the bit position of $z_L(r_{L-1},a_{L-1})$ (which is viewed as a vector in $\{0,1\}^{\log N_{L-1}}$). For every invalid history $\tau \neq \tau^*$, we have $\textsf{Const}_\tau(X) = 0$, and hence $C_{\tau,b}(X) = 1$. For the unique valid history $\tau^*$, we have $\textsf{Const}_{\tau^*}(X) = 1$, and hence: 
\begin{equation*}
C_{\tau^*,b}(X) = \neg\left(1 \land \neg z_L(r_{L-1}^*,a_{L-1}^*)_b \right) = z_L(w_{L-1},i_{L-1})_b,
\end{equation*}
where we have used the fact that $\textsf{Const}_{\tau^*}(X) = 1$ indicates $r_{L-1}^* = w_{L-1},a_{L-1}^*=i_{L-1}$ in the second equality. Taking an $\textsf{AND}$ over all histories then gives us:
\begin{equation}\label{eq:and_C_taub}
\bigwedge_{\tau} C_{\tau,b}(X) = \left( \bigwedge_{\tau \neq \tau^*} 1 \right) \land C_{\tau^*,b}(X) = C_{\tau^*,b}(X) = z_L(w_{L-1},i_{L-1})_b. 
\end{equation}
Doing this over all $b \in [\log (N_{L-1})]$ in parallel would then yield the desired result of $z_L(w_{L-1},i_{L-1})$.

We now describe the corresponding circuit and analyze the corresponding copy complexity, gate complexity and depth. Let $\mathcal{T}$ denote the set of all candidate histories. There are:
\begin{equation}\label{eq:number_of_taus}
|\mathcal{T}| = N_0 \left(\prod_{\ell=1}^{L-1} N_{\ell-1}\right) \left(\prod_{\ell=1}^{L-1} n_\ell\right)=\prod_{\ell=0}^{L-1}N_\ell    
\end{equation}
choices for $\tau$ and hence the size of $\mathcal{T}$, where we used the definition of $N_{L-1}=N_0\prod_{\ell=1}^{L-1} n_\ell$.
Thus, the entire construction will use $|\mathcal{T}|$ many copies of the input $X$ with one copy for each candidate history $\tau \in \mathcal{T}$. 

For a fixed $\tau \in \mathcal{T}$, we now describe a circuit to compute $C_{\tau,b}(X), \, \forall b \in [\log N_{L-1}]$ (Eq.~\eqref{def:C_tau_b}). Let us define $\ket{X^{(\tau)}}$ as a portion of the copy of the input $X$ assigned to the computation history $\tau = (r_1,\ldots,r_L,a_0,a_1,\ldots,a_{L-1})$ (Eq.~\eqref{eq:defntau}):
\begin{equation}\label{eq:input_assigned_to_tau}
X^{(\tau)} := (w, z_0, z_1(a_0), z_2(r_1,a_1), \ldots, z_\ell(r_\ell,a_\ell), \ldots, z_{L-1}(r_{L-2},a_{L-2}))    
\end{equation}
Once $\tau$ is fixed, all the involved addresses in $z_{\ell+1}(r_{\ell},a_{\ell}), \forall \ell \in [L-2]$ are fixed. We will now describe a circuit $U_\tau$ with the following action on inputs of $\ket{X^{(\tau)}}$, $\ket{z_L(r_{L-1},a_{L-1})}$ along with $2 \log N_{L-1}$ ancilla qubits:
\begin{align}
&\ket{X^{(\tau)}}\ket{0}^{\otimes \log_{N_{L-1}}} \ket{z_L(r_{L-1}(r_{L-1},a_{L-1})}\ket{0}^{\otimes \log_{N_{L-1}}} \nonumber \\
\xrightarrow{U_\tau} & \ket{X^{(\tau)}} \ket{\textsf{Const}_{\tau}(X)}^{\otimes \log_{N_{L-1}}} \ket{z_L(r_{L-1}(r_{L-1},a_{L-1})} \left( \otimes_{b \in [\log N_{L-1}]} \ket{C_{\tau,b}} \right).
\label{eq:def_U_tau}
\end{align}

For the equality check of $\textsf{Const}_\tau(X) := [X^{\tau}=\tau]$, we use the quantum circuit from Claim~\ref{claim:EQ_circ2} (with instantiations of $y$ as $X^{(\tau)}$, $z$ as $\tau$) which we denote by $\calA_\tau$ and has the following action on input $\ket{X^{(\tau)}}$ and an additional ancilla qubit initialized to $\ket{0}$:
\begin{equation}\label{eq:output_const_tau_X}
\ket{X^{(\tau)}} \ket{0} \xrightarrow{\calA_{\tau}} \ket{X^{(\tau)}} \ket{\textsf{Const}_\tau(X)}.    
\end{equation}
Thus, $\calA_{\tau}$ checks whether the involved truth tables are equal to the fixed strings specified by $\tau$. Adding $(\log(N_{L-1}) - 1)$ additional ancilla qubits initialized to $\ket{0}$, we can copy over the value of $\textsf{Const}_\tau(X)$ to these ancilla qubits using a $\CNOT$-tree of depth $O(\log \log N_{L-1})$ and gate complexity $O(\log N_{L-1})$. We denote this subcircuit by $\CNOT_{\mathrm{tree}}$. We have the following so far:
\begin{equation*}
\ket{X^{(\tau)}} \ket{0}^{\log N_{L-1}} \xrightarrow{\calA_{\tau}} \ket{X^{(\tau)}} \ket{\textsf{Const}_\tau(X)} \ket{0}^{\log N_{L-1} - 1} \xrightarrow{\CNOT_{\mathrm{tree}}} \ket{X^{(\tau)}} \ket{\textsf{Const}_\tau(X)}^{\log N_{L-1}}.
\end{equation*}
Recalling the definition of $C_{\tau,b}$ from Eq.~\eqref{def:C_tau_b}, we now proceed as follows. Consider the input of the state $\ket{z_L(r_{L-1},a_{L-1})}$ and $\log N_{L-1}$ many additional ancilla qubits initialized to $\ket{0}$ with each one corresponding to $b \in [\log N_{L-1}]$ and denoted by $t_{\tau,b}$. Let us also denote the qubit corresponding to $z_L(r_{L-1},a_{L-1})_b$ as $s_b$ for all $b \in [\log N_{L-1]}]$. For each $b \in [\log N_{L-1}]$, we then apply an $X$ gate on qubit $s_b$ followed by a $3$-qubit Toffoli gate with controls on $s_b$ and $b$-th copy of $\ket{\textsf{Const}_{\tau}(X)}$ with the target being $t_{\tau,b}$. We then apply another $X$ gate on the target qubit $t_{\tau,b}$ and an $X$ gate on qubit $s_b$ to reset it. We will denote the above sequence of gates as $B_{\tau,b}$ for each $b \in [\log N_{L-1}]$. Noting that we can apply all the subcircuits $B_{\tau,b}$ in parallel, we can define the overall circuit as $B_{\tau} := \otimes_{b \in [\log N_{L-1}]} B_{\tau,b}$ since each of the qubits $s_b$ for $b \in [\log N_{L-1}]$ is unique and can be individually addressed, and we have a copy of $\ket{\textsf{Const}_{\tau}(X)}$ for each such $b$. The circuit $B_\tau$ then has the following action:
{\small
\begin{align*}
& \ket{X^{(\tau)}} \ket{\textsf{Const}_\tau(X)}^{\log N_{L-1}} \ket{z_L(r_{L-1}(r_{L-1},a_{L-1})} \ket{0}^{\otimes \log N_{L-1}} \\
\xrightarrow{B_{\tau}} & \ket{X^{(\tau)}}\ket{\textsf{Const}_{\tau}(X)}^{\otimes \log N_{L-1}} \ket{z_L(r_{L-1}(r_{L-1},a_{L-1})} \left( \mathop{\otimes}_{b \in [\log N_{L-1}]} \ket{\underbrace{\neg\left(\textsf{Const}_\tau(X) \land \neg z_L(r_{L-1},a_{L-1})_b \right)}_{:= C_{\tau,b}(X)}}\right).    
\end{align*}}
The desired circuit $U_\tau$ in Eq.~\eqref{eq:def_U_tau} is then $U_\tau := B_\tau (\CNOT_{\mathrm{tree}} \otimes \id)(\calA_\tau \otimes \id)$ applied to the input states as described in Eq.~\eqref{eq:def_U_tau}.

Applying the circuits $U_\tau$ in parallel for all $\tau \in \calT$ would then simply produce:
\begin{align*}
& \otimes_{\tau \in \calT} \left(\ket{X^{(\tau)}} \ket{0}^{\otimes \log N_{L-1}} \ket{z_L(r_{L-1},a_{L-1})} \ket{0}^{\otimes \log N_{L-1}} \right) \\
\xrightarrow{\otimes_{\tau \in \calT} U_\tau } & \otimes_{\tau \in \calT} \left[ \ket{X^{(\tau)}} \ket{\textsf{Const}_{\tau}(X)}^{\otimes \log N_{L-1}} \ket{z_L(r_{L-1},a_{L-1})} \left( \otimes_{b \in [\log N_{L-1}]} \ket{C_{\tau,b}(X)} \right) \right].
\end{align*}
Measuring the qubits $t_{\tau,b}$ which hold the state $\ket{C_{\tau,b}}$ in the computational basis for all $\tau \in \calT, b \in [\log N_{L-1}]$ and then applying the $\AND$ logical gate over the outcomes corresponding to $\tau \in \calT$ for each $b \in [\log N_{L-1}]$ then gives us the desired result of Eq.~\eqref{eq:and_C_taub}. \footnote{Note for a specified bit $b$, we only require one $\AND$ gate at the end. Alternately if the goal is to output $[\beta = z_L(w_{L-1},i_{L-1})]$, we can add $\beta$ on to $\log N_{L-1}$ qubits, measure these in the computational basis and then perform a large $\AND$ gate over all the measurement outcomes from before along with these.}
% This overall circuit is summarized in Figure~\ref{fig:circ_iterindex_avec_copies}.

We now analyze the number of qubits required and the gate complexity of the overall circuit. For each $\tau \in \calT$, the total number of input qubits involved in defining $\ket{X^{(\tau)}}$ (Eq.~\eqref{eq:input_assigned_to_tau}), $\ket{z_L(r_{L-1},a_{L-1})}$, and the ancilla qubits is:
\begin{equation*}
Q_\tau =
O\left(\log N_0 + \sum_{\ell=1}^{L-1}\log n_\ell + \sum_{\ell=0}^{L-1}\log N_\ell \right)=O\left(\sum_{\ell=0}^{L-1}\log N_\ell\right),
\end{equation*}
where we have noted that the size of the input and output to each $z_\ell$ requires $\ceil{\log N_{\ell-1}}$ bits to specify.
Using the fact that $N_\ell=m \cdot \prod_{j=1}^{\ell} n_j$, $n_\ell(T, L)= K^{4 \cdot 8^{L-\ell-1}}$, $m=K^{8^L}$ and that $L$ is constant from the parameter choices described in Section~\ref{sec:itreindx}, we then have that:
\begin{equation*}
Q_\tau = O\left(L 2^{3L} \log K \right).
\end{equation*}
Taking into account the size of $\calT$ from Eq.~\eqref{eq:number_of_taus} and our choice of the parameters, the total number of qubits used is:
\begin{equation*}
|\calT| Q_\tau = K^{O(L 8^L)}.
\end{equation*}
Moreover, the total number of copies of the input $X$ used to define $\ket{X^{(\tau)}}$ and $\ket{z_L(r_{L-1},a_{L-1}}$ for each $\tau \in \calT$ goes as $|\calT|$ and hence is $K^{O(L8^L)}$.

For each $\tau \in \calT$, the contribution to gate complexity of $U_\tau$ is due to $\calA_\tau$, which requires:
\begin{equation*}
O\left(\sum_{\ell = 0}^{L-1} \log N_\ell\right) = O(L 2^{3L} \log K)
\end{equation*}
one- and two-qubit gates (Claim~\ref{claim:EQ_circ2}) (i.e., $Q_\tau$) due to the size of $X^{(\tau)}$, $\CNOT_{\mathrm{tree}}$ which has gate complexity $O(\log N_{L-1})$, and $B_\tau$ which has gate complexity $O(\log N_{L-1})$. Overall, the gate complexity is then: 
\begin{equation*}
|\calT| O(L 2^{3L} \log K) = O\left(L 2^{3L} K^{L 8^L} \log K\right).
\end{equation*}
Since all the $U_\tau$s are applied in parallel, the depth of the quantum circuit before the application of the $\AND$ gate is the same as the depth of any particular $U_\tau$. The contribution to the depth of $U_\tau$ is $O(L + \log L + \log \log K)$ from $\calA_\tau$ or $O(\log Q_\tau)$ (Claim~\ref{claim:EQ_circ2}), an additional $O(\log \log N_{L-1})$ due to $\CNOT_{\mathrm{tree}}$, and finally an additional $O(1)$ depth due to $B_\tau$. Overall, the depth of the quantum circuit is then:
\begin{equation*}
O\left(L + \log L + \log \log K\right).
\end{equation*}
This places $\IIndex$ in $\land \circ \QNCz[\log \log K]$. This completes the proof.
\end{proof}
\paragraph{Iterated multi-index function.}
We have the following main result regarding the computation of the iterated multi-index function.
\begin{theorem}\label{thm:iter-multiindex_no_copies}
Let $L \in \mathbb{N}, L \geq 2$, $n_{\ell}, N_{\ell} \in \mathbb{N}$ for all $\ell \in [L-1]$ such that $N_\ell = 2^{n_\ell}, $ for all $ \ell \in [L-1]$, and fix $\beta \in [N_{L-1}]$. Set $t=K^{O(L 8^L)}$. Given  $\{w_\ell^{(s)}\}_{\ell \in [L-1]}$, $i_0^{(s)}$, and $\{z_{\ell}^{(s)}\}_{\ell \in [L]}$ for all $s \in [t]$, the output of $[\beta = \IMIndex(\{w^{(s)}_\ell\}_\ell, i_0^{(s)}, \{z^{(s)}_\ell\}_\ell)]$
can be computed in $\land \circ \QNCz[{\log \log K}]$ with total gate complexity $O(K^{L 8^L} \log K)$.
\end{theorem}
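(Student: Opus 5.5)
The plan is to combine the parallel-history construction from Theorem~\ref{thm:iter-index_with_copies} with the cyclic equality check from Claim~\ref{claim:multiindex_no_copies}. We set $t=|\calT|=\prod_{\ell=0}^{L-1}N_\ell$, as in Eq.~\eqref{eq:copy_parameter}. We fix a bijection $s\leftrightarrow\tau$ between copy indices and candidate histories. Copy $s(\tau)$ of the input then plays the role of the copy $X^{(\tau)}$, together with $z_L(r_{L-1},a_{L-1})$, assigned to $\tau$ in the proof of Theorem~\ref{thm:iter-index_with_copies}. In that proof the copies were assumed identical. Here they are arbitrary, so the circuit must additionally certify the consistency predicate $E$.

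\textbf{Step 1 (consistency gadget).} Following $U_1$ in Claim~\ref{claim:multiindex_no_copies}, for every input bit position $p$ and every $s\in[t]$ we prepare an ancilla in the state $\ket{\neg(y^{(s)}_p\oplus y^{(s+1 \bmod t)}_p)}$. Here $y^{(s)}$ denotes the full $s$-th copy of $(w,z_0,\ldots,z_L)$. This takes two parallel layers of $\CNOT$s followed by an $X$ layer, so it has depth $O(1)$. The $\AND$ of all these outcomes equals $E$. One detail needs care. The cyclic comparison reads each input qubit twice, as a control of two $\CNOT$s. We schedule these reads in separate layers, and we perform them before the $\calA_\tau$ gadgets, which also use the qubits as controls. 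Since all these gates are controls on the input register and leave it unchanged as computational-basis information, the constant-depth bound is unaffected.

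\textbf{Step 2 (history gadgets).} For each $\tau$ we apply $U_\tau$ from Eq.~\eqref{eq:def_U_tau} to the qubits of copy $s(\tau)$. This produces $C_{\tau,b}$ computed on $y^{(s(\tau))}$. To output $[\beta=\IMIndex]$ instead of a single bit, we use the footnote variant. Let $\beta_b$ denote the $b$-th bit of $\beta$. We replace $C_{\tau,b}$ by $\neg(\textsf{Const}_\tau\wedge [z_L(\cdot)_b\ne\beta_b])$, which is the same Toffoli gadget with $X$ gates inserted according to $\beta_b$.

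\textbf{Step 3 (correctness).} Let $F$ be the single $\AND$ over all Step 1 outcomes and all Step 2 outcomes $\{C_{\tau,b}\}_{\tau,b}$. If $E=1$, every copy equals $y^{(1)}$. Exactly one $\tau^*$ is consistent, and by Eq.~\eqref{eq:and_C_taub} the $\AND$ equals $[\beta=\IIndex(y^{(1)})]=[\beta=\IMIndex]$.

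If $E=0$, the Step 1 part already forces $F=0$. However $\IMIndex=0$ in this case, so $[\beta=\IMIndex]=[\beta=0]$. This means a pure $\AND$ gives the right answer only when $\beta\neq 0$. I expect this to be the main subtlety. I would handle it by a case split on the fixed $\beta$. If $\beta\ne0$, $F$ is correct as stated. If $\beta=0$, the target is $\neg E\vee[\IIndex=0]$, and I would compute its negation $E\wedge[\IIndex\ne 0]$ and then use the convention of Claim~\ref{claim:multiindex_no_copies} that $\textsf{NOT}$ gates are free. The difficulty is that $[\IIndex\neq 0]$ is an $\textsf{OR}$ over bits, not an $\AND$. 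I would encode it instead as $\bigwedge_\tau\neg(\textsf{Const}_\tau\wedge[z_L(\cdot)=0])$. The inner predicate is a single equality-with-constant check, so Claim~\ref{claim:EQ_circ2} computes it with one large Toffoli per $\tau$. The $\AND$ over $\tau$ then combines with the $E$ bits into one $\AND$, followed by a final free negation.

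\textbf{Step 4 (complexity).} The number of copies $t$ and the gate count match those of Theorem~\ref{thm:iter-index_with_copies}. Step 1 adds only $O(t\cdot Q_\tau)=O(K^{L8^L}\log K)$ $\CNOT$s and depth $O(1)$. The depth is therefore dominated by the $\calA_\tau$ Toffolis on $Q_\tau=O(L2^{3L}\log K)$ qubits, which gives $O(\log\log K)$ for constant $L$. This places the computation in $\land\circ\QNCz[\log\log K]$.
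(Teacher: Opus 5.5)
Your proposal is correct and is essentially the paper's proof. Like the paper, you use one history gadget $U_\tau$ per copy (which works because $|\mathcal{T}|=t$), a depth-$O(1)$ $\CNOT$ comparison chain whose outcomes certify $E$, and a single $\AND$ over all measured bits. Your folding of $\beta_b$ into the Toffoli gadget, with an $X$ on the $z_L(\cdot)_b$ qubit exactly when $\beta_b=1$, is in fact a more careful way to realize $[\beta=\IMIndex]$ as one $\AND$. The paper's footnote only ANDs in the measured bits of $\beta$, which does not by itself test equality.

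Two small remarks. First, your $\beta=0$ branch is not needed. Since $\beta\in[N_{L-1}]=\{1,\ldots,N_{L-1}\}$, we have $[\beta=\IMIndex]=0$ whenever $E=0$, so the pure $\AND$ is already correct. This is fortunate, because the final negation in that branch would turn the output gate into a NAND, i.e.\ an $\textsf{OR}$ of negated outcomes, rather than an $\AND$. Second, Step~1 compares every bit of each copy, full truth tables included. It therefore costs $O(tR)$ gates with $R=\Theta(\sum_{\ell}N_{\ell-1}\log N_{\ell-1})$, not $O(tQ_\tau)$. This is still $O(K^{L8^L}\log K)$ because of the slack in that exponent.
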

\begin{proof}
The proof proceeds similarly to that of Theorem \ref{thm:iter-index_with_copies}. For every $s \in [t]$, denote the $s$-th input as:
\begin{equation}\label{eq:input_itermultinx}
X^{(s)} = (w^{(s)}, z_0^{(s)}, z_1^{(s)}, \ldots, z_L^{(s)}), \quad \text{ where } w^{(s)} = (w_1^{(s)}, \ldots, w_{L-1}^{(s)}).
\end{equation}
Let $\mathbf{X} = \{X^{(s)}\}_{s \in [t]}$ denote the entire input and define:
\begin{equation}\label{eq:equality_check_across_Xs}
E(\mathbf{X}) := \bigwedge_{s=1}^t [X^{(s)} = X^{(1)}]    
\end{equation}
as the consistency predicate across all $X$s. By definition, we have:
\begin{equation}\label{eq:inter_expression_IMIndex}
\IMIndex\left(\{X^{(s)}\}_{s \in [t]}\right) = E(\mathbf{X}) \IIndex(X^{(1)}).    
\end{equation}
As in Theorem~\ref{thm:iter-index_with_copies}, we will proceed by comparing the input against all possible candidate computation histories up to computation of $z_L^{(1)}(w_{L-1}^{(1)}, i_{L-1}^{(1)})$. We will adopt the notation from there by letting $\mathcal T$ denote the set of candidate computation histories:
\begin{equation*}
\tau=(a_0,a_1,\ldots,a_{L-1},r_1,\ldots,r_{L-1}),
\end{equation*}
where:
\begin{equation*}
a_0\in[N_0],
\qquad
a_\ell\in[N_{\ell-1}]\quad\text{for }1\leq \ell\leq L-1, \text{ and } r_\ell\in[n_\ell]\quad\text{for }1\leq \ell\leq L-1.
\end{equation*}
The number of such histories is:
\begin{equation*}
|\mathcal T| = N_0 \left(\prod_{\ell=1}^{L-1}N_{\ell-1}\right) \left(\prod_{\ell=1}^{L-1}n_\ell\right) \prod_{\ell=0}^{L-1}N_\ell = t,
\end{equation*}
which is precisely the copy parameter from before. Let us order the candidate histories and let $\tau^{(s)}$ denote the $s$th computation history in $\calT$ where $s \in [t]$. We can then denote the components of $\tau^{(s)}$ as $\tau^{(s)} = (a_0^{(s)}, a_1^{(s)}, \ldots, a_{L-1}^{(s)}, r_1^{(s)}, \ldots, r_{L-1}^{(s)})$.

For a fixed history $\tau^{(s)} \in\mathcal T$ and hence fixed $s \in [t]$, define $\textsf{Const}_{\tau^{(s)}}(X^{(s)})$ as the predicate that all of the following conditions hold:
\begin{align}
z_0^{(s)} &= a_0^{(s)}, \nonumber \\
w_\ell^{(s)}&=r_\ell^{(s)}
\qquad\text{for every }1\leq \ell\leq L-1, \nonumber \\
z_1^{(s)}(a_0^{(s)})&=a_1^{(s)}, \nonumber \\
z_{\ell+1}^{(s)}(r_\ell^{(s)},a_\ell^{(s)})&=a_{\ell+1}^{(s)}
\qquad\text{for every }1\leq \ell\leq L-2. \label{eq:Const_tau_s_b}
\end{align}
Thus $\textsf{Const}_{\tau^{(s)}}(X^{(s)})$ checks whether the copy $X^{(s)}$ is consistent with the history
$\tau^{(s)}$ up to the computation of $z_L^{(s)}, \forall s \in [t]$. Exactly as in the proof of Theorem~\ref{thm:iter-index_with_copies}, note that exactly one $\tau^* \in \calT$ satisfies $\textsf{Const}_{\tau^*}(X^{(s)})=1$. For this $\tau^*$ (and hence when $\textsf{Const}_{\tau^*}(X^{(s)})=1$), we need to compute the output of $z_L^{(s)}(w_{L-1}^*,a_{L-1}^*)$ (where we have decorated the components of $\tau^*$ with superscript $*$). To this end, for every $b \in [\log(N_{L-1})]$,  define:
\begin{equation}\label{def:C_tau_s_b}
C_{\tau^{(s)},b}(X^{(s)}) := \neg\left(\textsf{Const}_{\tau^{(s)}}(X^{(s)}) \land \neg z_L^{(s)}(r_{L-1}^{(s)},a_{L-1}^{(s)})_b \right),
\end{equation}
where the subscript $b$ indicates the bit position of $z_L^{(s)}(r_{L-1}^{(s)},a_{L-1}^{(s)})$ (which is viewed as a vector in $\{0,1\}^{\log N_{L-1}}$). For every invalid history $\tau^{(s)} \neq \tau^*$, we have $\textsf{Const}_{\tau^{(s)}}(X^{(s)}) = 0$, and hence $C_{\tau^{(s)},b}(X^{(s)}) = 1$. For the unique valid history $\tau^*$, we have $\textsf{Const}_{\tau^*}(X^{(s)}) = 1$, and hence:
\begin{equation}\label{eq:Ctaustar_b2}
C_{\tau^*,b}(X) = \neg\left(1 \land \neg z_L^{(s)}(r_{L-1}^*,a_{L-1}^*)_b \right) = z_L^{(s)}(w_{L-1}^{(s)},i_{L-1}^{(s)})_b,    
\end{equation}
where we have used the fact that $\textsf{Const}_{\tau^*}(X^{(s)}) = 1$ indicates $r_{L-1}^* = w_{L-1}^{(s)},a_{L-1}^*=i_{L-1}^{(s)}$ in the second equality. 

We next define our approach to evaluate $E(\mathbf{X})$ (Eq.~\eqref{eq:equality_check_across_Xs}). First, let $R$ denote the number of bits used to encode one copy $X^{(s)}$. For every $s\in [t-1]$ and bit position $j\in [R]$, define:
\begin{equation*}
D_{s,j}(\mathbf X)
:=
[X^{(s)}_j=X^{(s+1)}_j].
\end{equation*}
It is then immediate that:
\begin{equation}\label{eq:EX_as_DXs}
\bigwedge_{s=1}^{t-1}\bigwedge_{j=1}^{R}D_{s,j}(\mathbf X) = E(\mathbf{X}),    
\end{equation}
which evaluates to $1$ if and only if all copies $X^{(s)}, \, \forall s \in [t]$ are identical. For the $b$-th output bit, define:
\begin{equation}\label{eq:compute_IterMultInd}
F_b(\mathbf{X}) := \left(\bigwedge_{s=1}^{t-1}\bigwedge_{j=1}^{R}D_{s,j}\right)
\land \left(\bigwedge_{s \in [t]}C_{\tau^{(s)},b}(X^{(s)})\right).
\end{equation}
We claim that:
\begin{equation}\label{eq:func_Fb}
F_b(\mathbf{X}) = \IMIndex\left(\{X^{(s)}\}_{s \in [t]}\right)_b.    
\end{equation}
To see this, suppose first that the $t$ input copies are not all identical. Then, there exist $s \in [t-1]$ and $j \in [R]$ such that $D_{s,j} = 0$ and consequently $F_b(\mathbf{X})=0$, which is consistent with the evaluation of $\IMIndex$ on inconsistent copies. Now, suppose that all copies are identical i.e., $X^{(s)} = X^{(1)}, \, \forall s \in [t]$. The first factor in $F_b(\mathbf{X})$ (Eq.~\eqref{eq:compute_IterMultInd}), which is nothing but $E(\mathbf{X})$, is then $1$. Moreover, there is also a unique computation history $\tau^\star$ for which $\textsf{Const}_{\tau^*}(X^{(1)})=1$, as observed earlier. We then have:
\begin{equation*}
\bigwedge_{s \in [t]}C_{\tau^{(s)},b}(X^{(s)}) = \bigwedge_{s \in [t]}C_{\tau^{(s)},b}(X^{(1)}) = z_L^{(1)}(w_{L-1}^{(1)}, i_{L-1}^{(1)})_b,
\end{equation*}
where the second equality follows from the fact that all the input copies are equal (indeed, they are all $X^{(1)}$), and the final equality follows as in Eq.~\eqref{eq:and_C_taub} in the proof of Theorem~\ref{thm:iter-index_with_copies}. This proves the claimed identity of Eq.~\eqref{eq:func_Fb}.

We now describe a quantum circuit to compute $F_b(\mathbf{X})$. We first introduce a circuit to evaluate $D_{s,j}(\mathbf{X})$ and hence the expression in Eq.~\eqref{eq:EX_as_DXs}. We will describe $U_1$ that has the following action on the inputs $\{X^{(s)}\}_{s \in [t]}$ along with $(t-1)R$ many ancilla qubits initialized as $\ket{0}$:
\begin{equation}
\left( \otimes_{s \in [t]} \ket{X^{(s)}} \right) \ket{0}^{(t-1)R} \xrightarrow{U_1} \left( \otimes_{s \in [t]} \ket{X^{(s)}} \right) \left(\bigotimes_{s \in [t-1]} \bigotimes_{j \in [R]} \ket{D_{s,j}} \right)
\end{equation}
For this, we will re-use the circuit introduced in the proof of Claim~\ref{claim:multiindex_no_copies} for checking equality of different copies of indices. Let us denote the $(t-1)R$ ancilla qubits as $\{o_{s,j}\}_{s \in [t-1], j \in [R]}$. We first apply a layer of $(t-1)R$ $\CNOT$s in parallel with a $\CNOT$ gate from each $\ket{X^{(s)}_{b}}$ to $o_{s,j}$ for all $s \in [t-1], j \in [R]$. This has the effect of setting the state of $o_{s,j}$ to that of $X^{(s)}_{j}$. We then apply another layer of $\CNOT$s in parallel with a $\CNOT$ gate from each $X^{(s+1)}_j$ to $o_{s,j}$ for all $s \in [t-1], j \in [R]$. At this point, the state of each qubit $o_{s,j}$ is $\ket{X^{(s)}_j \oplus X^{(s+1)}_j}$. Applying a single-qubit $X$ gate on each $o_{s,j}$ then sets of the state to be the indicator $[X^{(s)}_{j} = X^{(s+1)}_j]$ i.e., $\ket{o_{s,j}} = \ket{[X^{(s)}_j = X^{(s+1)}_j]} = \ket{D_{s,j}(\mathbf{X})}, \, \forall s \in [t-1], j \in [R]$. At this point, we could measure all the qubits over $\{o_{s,j}\}$ in the computational basis and apply an $\AND$ gate over the measurement outcomes to obtain the value of $E(\mathbf{X})$ (Eq.~\eqref{eq:EX_as_DXs}). However, we will postpone this step until after describing how to obtain $z^{(1)}_L(w_{L-1}^{(1)}, i_{L-1}^{(1)})$ in the case $E(\mathbf{X}) = 1$.

For a fixed $s \in [t]$ and hence a fixed $\tau^{(s)} \in \mathcal{T}$, we now describe a circuit to compute $C_{\tau^{(s)},b}(X)$ for all $b \in [\log N_{L-1}]$ (Eq.~\eqref{def:C_tau_s_b}). Define $\ket{Y^{(s)}}$ as a portion of the copy of the input $X^{(s)}$ assigned to the computation history $\tau^{(s)} = (r_1^{(s)},\ldots,r_L^{(s)},a_0^{(s)},a_1^{(s)},\ldots,a_{L-1}^{(s)})$:
\begin{equation}\label{eq:input_assigned_to_tau_s}
Y^{(s)} := (w^{(s)}, z_0^{(s)}, z_1^{(s)}(a_0^{(s)}), z_2^{(s)}(r_1^{(s)},a_1^{(s)}), \ldots, z_\ell^{(s)}(r_\ell^{(s)},a_\ell^{(s)}), \ldots, z_{L-1}^{(s)}(r_{L-2}^{(s)},a_{L-2}^{(s)})).  
\end{equation}
Once $\tau^{(s)}$ is fixed, all the addresses involved in $z_{\ell+1}^{(s)}(r_{\ell}^{(s)},a_{\ell}^{(s)})$ for all $\ell \in [L-2]$ are fixed. We can then use the circuit $U_\tau$ from the proof of Theorem~\ref{thm:iter-index_with_copies}, denoted here as $U_{\tau^{(s)}}$, to clarify which copy of the input we work with, and which has the following action inputs of $\ket{Y^{(s)}}$, $\ket{z_L^{(s)}(r_{L-1}^{(s)},a_{L-1}^{(s)})}$ along with $2 \log N_{L-1}$ ancilla qubits:
\begin{align}
&\ket{Y^{(s)}}\ket{0}^{\otimes \log_{N_{L-1}}} \ket{z_L^{(s)}(r_{L-1}^{(s)},a_{L-1}^{(s)})}\ket{0}^{\otimes \log_{N_{L-1}}} \nonumber \\
\xrightarrow{U_{\tau^{(s)}}} & \ket{Y^{(s)}} \ket{\textsf{Const}_{\tau^{(s)}}(X^{(s)})}^{\otimes \log_{N_{L-1}}} \ket{z_L^{(s)}(r_{L-1}^{(s)},a_{L-1}^{(s)})} \left( \otimes_{b \in [\log N_{L-1}]} \ket{C_{\tau^{(s)},b}(X^{(s)}} \right).
\label{eq:def_U_tau_s}
\end{align}

Applying the circuits $U_{\tau^{s}}$ in parallel for all $s \in [t]$ (and hence all $\tau^{(s)} \in \calT$) would then simply produce:
\begin{align*}
& \otimes_{s \in [t]} \left(\ket{Y^{(s)}} \ket{0}^{\otimes \log N_{L-1}} \ket{z_L^{(s)}(r_{L-1}^{(s)},a_{L-1}^{(s)})} \ket{0}^{\otimes \log N_{L-1}} \right) \\
\xrightarrow{\otimes_{s \in [t]} U_{\tau^{(s)}} } & \otimes_{s \in [t]} \left[ \ket{Y^{(s)}} \ket{\textsf{Const}_{\tau^{(s)}}(X^{(s)})}^{\otimes \log N_{L-1}} \ket{z_L^{(s)}(r_{L-1}^{(s)},a_{L-1}^{(s)})} \left( \otimes_{b \in [\log N_{L-1}]} \ket{C_{\tau^{(s)},b}(X^{(s)})} \right) \right].
\end{align*}
Denote the overall circuit as $U := (\otimes_{s \in [t]} U_{\tau^{(s)}})U_1$, which then has the following action:
\begin{align}
&\left( \otimes_{s \in [t]} \ket{X^{(s)}} \right) \ket{0}^{(t-1)R} \ket{0}^{2 \log N_{L-1}} \nonumber \\
\xrightarrow{U} & \left( \otimes_{s \in [t]} \ket{X^{(s)}} \right) \left(\bigotimes_{s \in [t-1]} \bigotimes_{j \in [R]} \ket{D_{s,j}} \right) \ket{\textsf{Const}_{\tau^{(s)}}(X^{(s)})}^{\otimes \log N_{L-1}} \left( \otimes_{b \in [\log N_{L-1}]} \ket{C_{\tau^{(s)},b}(X^{(s)})} \right) \label{eq:def_U_IterMultInd},
\end{align}
where we have noted that $\ket{Y^{(s)}}$, $\ket{z_L^{(s)}(r_{L-1}^{(s)}, a_{L-1}^{(s)}}$ are a part of the input $\ket{X^{(s)}}$, and collected all the ancilla qubits required for $U_1$ and $U_{\tau^{(s)}}$ in the first line. For a specified $b \in [\log N_{L-1}]$, measuring the qubits which hold the state $\ket{D_{s,j}}, \forall s \in [t-1], j \in [R]$ and $\ket{C_{\tau^{(s)},b}}, \forall s \in [t]$ in the computational basis and then applying the $\AND$ logical gate over the outcomes then gives us the desired result of Eq.~\eqref{eq:func_Fb}.\footnote{Alternately, if given a $\beta \in [N_{L-1}]$, we can also output $[\beta = \IMIndex(\mathbf{X})]$ by adding $\beta$ on $\log N_{L-1}$ ancilla qubits, measuring these in the computational basis and performing a classical $\AND$ gate on these along with all the measurement outcomes for each $b \in [\log N_{L-1}]$ from before.} 
% This overall circuit is summarized in Figure~\ref{fig:circ_iterindex_avec_copies}.

We can now once again analyze the qubit count, depth, and gate complexity of $U$. For each $s \in [t]$, the total number of input qubits involved in defining $\ket{X^{(s)}}$ is:
\begin{equation*}
R =
O\left(\log N_0 + \sum_{\ell=1}^{L-1}\log n_\ell + \sum_{\ell=0}^{L-1}\log N_\ell \right)=O\left(\sum_{\ell=0}^{L-1}\log N_\ell\right),
\end{equation*}
where we have used the fact that the size of the input and output to $z_\ell$ require $\ceil{\log N_{\ell-1}}$ bits each to specify. Using the fact that $N_\ell=m \cdot \prod_{j=1}^{\ell} n_j$, $n_\ell(T, L)= K^{4 \cdot 8^{L-\ell-1}}$, $m=K^{8^L}$ and that $L$ is assumed to be a constant, we then have that:
\begin{equation*}
R = O\left(L 2^{3L} \log K \right).
\end{equation*}
Taking into account the size of $t$ and our choices for the parameters, the total number of qubits used is:
\begin{equation*}
tR = K^{O(L 8^L)}.
\end{equation*}
The contribution to the gate complexity of $U$ is due to $U_1$, which has gate complexity $O(tR) = K^{O(L 8^L)}$, and $U_{\tau^{(s)}}$ for all $s \in [t]$, each requiring:
\begin{equation*}
O\left(\sum_{\ell = 0}^{L-1} \log N_\ell\right) = O(L 2^{3L} \log K)
\end{equation*}
one- and two-qubit gates (from the proof of Theorem~\ref{thm:iter-index_with_copies}). Overall, the gate complexity of $U$ is therefore: 
\begin{equation*}
O(t L 2^{3L} \log K + K^{L8^L}) = O\left(L 2^{3L} K^{L 8^L} \log K\right).
\end{equation*}
Since all the $U_s$'s are applied in parallel, the depth of $U$ before application of the $\AND$ gate is same as depth of any particular $U_s$, which is $O\left(L + \log L + \log \log K\right)$, in addition to the depth of $U_1$ which is $O(1)$. Overall, the depth of $U$ is therefore:
\begin{equation*}
O\left(L + \log L + \log \log K\right).
\end{equation*}
This places $\IMIndex$ in $\land \circ \QNCz[\log \log K]$, and completes the proof.
\end{proof}

\subsection{Classical lower bound}
\subsubsection{Communication problem and token count} We use the autoregressive model used in \cite{chen2024theoretical} to show unconditional lower bounds on transformer architectures, but adapt it to apply to the $\IMIndex$ function. Our autoregressive model consists of $L+2$ \emph{macro-players}, who are indexed $-1, 0, 1, \ldots, L$. The protocol takes place over $L$ \emph{communication epochs}, which are inherently directional. For each index $i$, macro-player $i$ receives as input the concatenation of $t$ \emph{sub-player} copies. In particular:
\begin{enumerate}
  \item the input of macro-player $-1$ is $(w^{(1)}, \ldots, w^{(t)}) \in A_{-1}^{\,t}$;
  \item the input of macro-player $0$ is $(z_0^{(1)}, \ldots, z_0^{(t)}) \in A_0^{\,t}$;
  \item for $\ell \in [L]$, the input of macro-player $\ell$ is $(z_\ell^{(1)}, \ldots, z_\ell^{(t)}) \in A_\ell^{\,t}$.
\end{enumerate}
At the end of the protocol, i.e., at the end of epoch $L$, macro-player $-1$ must output the value of $\IMIndex$ on the given instance.

\paragraph{Token counts.} We now estimate the total number of tokens taken up by each player in the communication protocol. For this, it will be helpful to use the parameters as counted in the original protocol used in \cite{chen2024theoretical}, which used the iterated index problem. Let $m_{\textup{orig}}^{(i)}$ denote the token count of player $i$ in the original iterated index problem, i.e.:
\begin{equation*}
m_{\textup{orig}}^{(i)} = \begin{cases} N_{i-1} & i \in [L], \\ 1 & i \in \{-1, 0\}. \end{cases}
\end{equation*}
In our problem, i.e., the iterated multi-index problem, on the other hand, macro-player $i$ holds $t$ concatenated copies of the same set of tokens. So now, his token count becomes $m_{\textup{new}}^{(i)} := t \cdot m_{\textup{orig}}^{(i)}$, which means the total input length (i.e., number of tokens) becomes:
\begin{equation}
\label{eq:nnewnorig}
n(T, L) := \sum_{i=-1}^{L} m_{\textup{new}}^{(i)} = t \cdot n_{\textup{orig}}(T, L),
\end{equation}
where $n_{\textup{orig}}(T, L) := \sum_{i=-1}^{L} m_{\textup{orig}}^{(i)}$ is the total (token) length of the original iterated index instance. 

Now, recall that the autoregressive communication model relies on a \emph{bandwidth} parameter $B \in \mathbb{N}$ (which relates it to the width of the transformer in Lemma~\ref{lem:cpw-reduction}). By definition, no response that player $i$ receives from player $j$ can exceed the bandwidth bit-count. Specifically, we must have, for each $\ell \in [L]$, and any pair of players $i$ and $j$:
\begin{equation}
\left| \Pi_{j,i}^{(\ell)} \right| \leq 2B \cdot m^{(i)}.
\label{eq:bandwidth}
\end{equation}
As we stated earlier, for the protocol to succeed, at the end of epoch $L$, player $-1$ must output a value that depends only on $X_{-1}^{(L)}$. We are now ready to prove our lower bound, after a couple of lemmas we state first.

\begin{lemma}
\label{lem:tbound}
The copy complexity $t = t(T, L)$, as defined in Section \ref{sec:multiindex}, satisfies the following~bound:
\begin{equation*}
t \leq C_L \cdot T^{c_L},
\text{ where } c_L = 2^{O(L)}, \text{ } C_L = 2^{2^{O(L)}}.
\end{equation*}
In particular, for every constant $L$, we have that $t$ is a fixed polynomial in $T$.
\end{lemma}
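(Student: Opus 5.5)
The bound is a direct computation: substitute the definitions from Section~\ref{sec:itreindx} into $t=\prod_{\ell=0}^{L-1}N_\ell$ and track the exponent of $K=(TL)^8$. First, every parameter is a power of $K$. We have $m=K^{e_m}$ with $e_m=\sum_{\ell=0}^{L-1}8^\ell+1=(8^L-1)/7+1\le 8^L$, and $n_j=K^{4\cdot 8^{L-j-1}}$ for $j\in[L-1]$. Hence
\begin{equation*}
N_\ell = m\prod_{j=1}^{\ell} n_j = K^{e_\ell},\qquad e_\ell := e_m + 4\sum_{j=1}^{\ell}8^{L-j-1}.
\end{equation*}
Bounding the geometric sum gives $4\sum_{j\ge1}8^{L-j-1}\le 4\cdot 8^{L-1}\cdot\tfrac{8}{7}\le 8^L$, so $e_\ell\le 2\cdot 8^L$ uniformly in $\ell$.

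Second, multiply over $\ell$:
\begin{equation*}
t=\prod_{\ell=0}^{L-1}N_\ell = K^{\sum_{\ell=0}^{L-1} e_\ell},\qquad \sum_{\ell=0}^{L-1} e_\ell\le 2L\cdot 8^L =: c_L' .
\end{equation*}
Then substitute $K=(TL)^8$ to get $t\le (TL)^{8c_L'} = L^{8c_L'}\cdot T^{8c_L'}$. Set $c_L:=8c_L'=16L\cdot 8^L=2^{O(L)}$ and $C_L:=L^{c_L}=2^{c_L\log L}=2^{2^{O(L)}}$, since $\log L\le L$ is absorbed into the $2^{O(L)}$ exponent. For fixed $L$ both $c_L$ and $C_L$ are constants, so $t$ is a fixed polynomial in $T$, which is the second assertion.

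There is no real obstacle here; the only care needed is bookkeeping. The paper's formula for $n_\ell$ is stated for $\ell\in[L-1]$, while $N_\ell$ ranges over $\ell\in\{0,\ldots,L-1\}$, so I will confirm that only $n_1,\ldots,n_{L-1}$ actually appear. Since the goal is an upper bound, loose geometric-series estimates are enough and the exact exponents need not be computed. Along the way I would also record that $n_{\mathrm{orig}}\le (L+2)\max_\ell N_\ell\le K^{2^{O(L)}}$, because this is the estimate promised in the footnote to Theorem~\ref{thm:cpw-main} and it follows from the same exponent bound.
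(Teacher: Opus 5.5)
Your proof is correct and follows the paper's route. The paper writes each $N_\ell$ as $K^{2^{O(L)}}$ using two geometric sums, multiplies the $L$ factors to get $t=K^{2^{O(L)}}$, and substitutes $K=(TL)^8$ to split off $C_L=L^{c_L}$; you do the same. The only difference is that you carry explicit constants (e.g.\ $c_L=16L\cdot 8^L$), where the paper hides them in $2^{O(L)}$.
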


\begin{proof}
We first bound each $N_\ell$ in terms of $K$. By definition, $N_\ell = m \cdot \prod_{j=1}^\ell n_j$, so the exponent of $K$ in $N_\ell$ is:
\begin{equation*}
\sum_{j=0}^{L-1} 8^{j+1} + \sum_{j=1}^{\ell} 4 \cdot 8^{L-j-1}.
\end{equation*}
Observe that this is a sum of two geometric series, each with ratio $8$. Clearly, both sums are $2^{O(L)}$, and so, for all $\ell$, we have $N_\ell = K^{2^{O(L)}}$. Therefore, we have:
\begin{equation*}
t = \prod_{\ell=0}^{L-1} N_\ell = \left( K^{2^{O(L)}} \right)^L = K^{L \cdot 2^{O(L)}} = K^{2^{O(L)}}.
\end{equation*}
Substituting $K = (TL)^8$, we obtain:
\begin{equation*}
t = \left( (TL)^8 \right)^{2^{O(L)}} = T^{8 \cdot 2^{O(L)}} \cdot L^{8 \cdot 2^{O(L)}} = C_L\cdot T^{c_L},
\end{equation*}
where we have set $C_L := L^{8 \cdot 2^{O(L)}}$ and $c_L := 8 \cdot 2^{O(L)}$ and absorbed all other implicit constants independent of $L$ into them appropriately. Note that both $C_L$ and $c_L$ only depend on $L$.
\end{proof}

\subsubsection{Transformer lower bound}
\begin{lemma}
\label{lem:reduction}
Fix $T \geq 1$. Suppose that there is a deterministic autoregressive protocol $\Pi_{\textup{multi}}$ that solves $\IMIndex$ on bandwidth $T$ in $L$ epochs for every instance. Then, there is a deterministic autoregressive protocol $\Pi_{\textup{orig}}$ that solves $\IIndex$ on bandwidth $T' := Tt$ in $L$ epochs for every instance.
\end{lemma}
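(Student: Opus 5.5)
The plan is a direct simulation. In $\Pi_{\textup{orig}}$ each player $i\in\{-1,0,\ldots,L\}$ holds one original input $x_i$ (so $x_{-1}=w$, $x_0=z_0$, $x_\ell=z_\ell$). Each player $i$ locally forms the diagonal multi-instance input $(x_i,x_i,\ldots,x_i)\in A_i^{\,t}$, i.e., $t$ identical copies. Since every macro-player's input in $\IMIndex$ depends only on that macro-player's own component, this construction needs no communication. On such a diagonal instance the consistency predicate $E$ equals $1$. Hence $\IMIndex(\{x^{(s)}\})=\IIndex(w,z_0,\ldots,z_L)$, and correctness of $\Pi_{\textup{multi}}$ on every instance gives correctness of the simulation.

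The players of $\Pi_{\textup{orig}}$ then run $\Pi_{\textup{multi}}$ verbatim on these virtual inputs, epoch by epoch. By induction on the epoch $\ell$, I will keep the invariant that the state $X_i^{(\ell)}$ of player $i$ in $\Pi_{\textup{orig}}$ determines the state $\widetilde X_i^{(\ell)}$ of macro-player $i$ in $\Pi_{\textup{multi}}$. Concretely, the former stores $x_i$ plus the same received messages, and the latter is obtained by replacing $x_i$ with $t$ copies. With this invariant, each query $\Gamma_{ij}^{(\ell)}$ that macro-player $i$ would send is computable from $X_i^{(\ell)}$. Likewise, each reply $\Pi_{ji}^{(\ell)}$ is a function only of $\Gamma_{ij}^{(\ell)}$ and $\widetilde X_j^{(\ell)}$, and hence of $X_j^{(\ell)}$. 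So the memoryless and locally one-way restrictions of the autoregressive model are preserved exactly. After epoch $L$, player $-1$ outputs whatever macro-player $-1$ outputs, which by the previous paragraph is the value of $\IIndex$.

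The only real point is the bandwidth accounting. In $\Pi_{\textup{multi}}$ the replies satisfy $|\Pi_{ji}^{(\ell)}|\le B\, m_{\textup{new}}^{(i)} = T\cdot t\cdot m_{\textup{orig}}^{(i)}$, using the bandwidth normalization of Eq.~\eqref{eq:bandwidth} (with the same constant factor on both sides). The bound is stated per token of the \emph{receiving} player, and in the original problem player $i$ has only $m_{\textup{orig}}^{(i)}$ tokens. So the identical message, viewed in $\Pi_{\textup{orig}}$, has length at most $(Tt)\cdot m_{\textup{orig}}^{(i)}$, which is bandwidth $T'=Tt$. This is where the factor $t$ is lost, and the step to check carefully is that the bandwidth constraint in the model is exactly linear in the receiver's token count. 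I expect no other obstacle; the queries $\Gamma$ are unconstrained in length in the model, so they need no accounting. Determinism and the number of epochs $L$ are preserved trivially.
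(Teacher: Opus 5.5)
Your proposal is correct and follows the paper's proof essentially step for step. Both use the diagonal $t$-fold embedding (on which $E=1$), a verbatim simulation of $\Pi_{\textup{multi}}$, and the observation that the reply bound $2T\cdot m_{\textup{new}}^{(i)} = 2(Tt)\cdot m_{\textup{orig}}^{(i)}$ is linear in the receiver's token count, which yields bandwidth $Tt$. Your explicit state invariant, that $X_i^{(\ell)}$ determines the macro-player's state, justifies the memoryless, one-way simulation a bit more carefully than the paper does, but the argument is the same.
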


\begin{proof}
Consider an arbitrary instance  $(w, z_0, z_1, \ldots, z_L)$ of $\IIndex$, where the tuple is distributed across players $-1, 0, 1, \ldots, L$ in the usual way (i.e., player $i$ holds the $i$-th component of the vector). Now, we can have the players in the protocol $\Pi_{\textup{orig}}$ simulate $\Pi_{\textup{multi}}$ on the \emph{diagonal} instance $\left(\{w^{(s)}\}_{s \in [t]}, \{z_0^{(s)}\}_{s \in [t]}, \{z_1^{(s)}\}_{s \in [t]}, 
\ldots, \{z_L^{(s)}\}_{s \in [t]}\right)$ of $\IMIndex$:
\begin{equation*}
w^{(s)} := w, \qquad z_0^{(s)} := z_0, \qquad z_\ell^{(s)} := z_\ell  
\end{equation*}
for all $s \in [t],\ell \in [L]$.  Concretely:
\begin{enumerate}
  \item For all $i \in \{-1, 0, 1, \ldots, L\}$, player $i$ \emph{locally} constructs the $t$-fold concatenation of its input, and treats the resulting string as the input to macro-player $i$ in $\Pi_{\textup{multi}}$. Note that this is purely local; no communication between players is needed.
  \item The players now run $\Pi_{\textup{multi}}$ verbatim, acting as the macro-players in that protocol. In particular, whenever a message $\Gamma_{i,j}^{(\ell)}$ or response $\Pi_{j,i}^{(\ell)}$ is sent in $\Pi_{\textup{multi}}$ between the macro-players, the same bits are passed in $\Pi_{\textup{orig}}$ between the same players.
  \item At the end of epoch $L$, player $-1$ in $\Pi_{\textup{orig}}$ outputs the output of macro-player $-1$ in $\Pi_{\textup{multi}}$.
\end{enumerate}

Note that the resulting protocol is correct: on the diagonal instance the consistency predicate $E = 1$ (i.e., all the $x$s given to individual players are the same), and so the output of $\Pi_{\textup{multi}}$ equals $\IIndex(w, z_0, z_1, \ldots, z_L)$, as required. Let us examine the bandwidth used by $\Pi_{\textup{orig}}$. Suppose player $i$ in $\Pi_{\textup{orig}}$ has $m_{\textup{orig}}^{(i)}$ tokens; he locally concatenates $t$ copies of this, creating $m_{\textup{new}}^{(i)} = t\cdot m_{\textup{orig}}^{(i)}$ tokens, and therefore each of his communicated strings is blown up by a factor of $t$. By the bandwidth constraint of the autoregressive model applied to $\Pi_{\textup{multi}}$, we have:
\begin{equation*}
\left| \Pi_{j,i}^{(\ell)} \right| \leq 2T \cdot m_{\textup{new}}^{(i)} = 2(Tt) \cdot m_{\textup{orig}}^{(i)},
\end{equation*}
which is precisely within the bandwidth-$(Tt)$ budget of $\Pi_{\textup{orig}}$. The number of epochs remains unchanged at $L$.
\end{proof}

We are now ready to state and prove the lower bound. The instance family is parameterized by the transformer's bandwidth itself: the parameters $K, m, \{n_\ell\}, \{N_\ell\}, t, n$ are defined exactly as above (as functions of $T = Hdp$ and $L$). This is the standard parameterization of~\cite{chen2024theoretical}, in which the hard instance is tailored to the candidate model size. To prove the theorem, we first observe that on top of the parameters involved in $T$-parameterized $\IMIndex$, our quantum upper bound also had an extra parameter $\beta \in [N_{\ell-1}]$ specified in the problem and the goal was to check if $\beta$ equals the output of the $T$-parameterized $\IMIndex$. Since we prove lower bounds against \emph{determiistic} transformer models (for which we showed the $\wedge\circ \QNCz[\log \log n]$ upper bound. However, note that checking this equality amounts to \emph{computing} the output of $T$-parameterized $\IMIndex$, and hence below we prove lower bounds on transformers computing this function.\footnote{Concretely, we remark that when reducing the transformer lower bound to the communication lower bound, we will provide this $\beta$ parameter to the $L+2$-th player, who needs to check if $\beta$ equals the output of $T$-parameterized $\IMIndex$ which depends on parameters defined for the previous $L-1$ players.}

\begin{theorem}
\label{thm:maintransformer}
Let $L \geq 2$ be constant. Let $\mathscr{T}$ be any $L$-layer decoder-only transformer (with $H$ heads, embedding dimension $d$, and precision $p$, with width $T := Hdp$) which solves every instance of the $T$-parameterized $\IMIndex$ family. Then, the width $T$ of $\mathscr{T}$ satisfies:
$$
T \geq n(T, L)^{2^{-O(L)}},
$$
where $n(T, L)$ is the number of tokens (i.e., context length) needed to specify the instance family.
\end{theorem}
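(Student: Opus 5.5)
The plan is to chain the transformer-to-protocol reduction (Lemma~\ref{lem:cpw-reduction}), the copy reduction (Lemma~\ref{lem:reduction}), and the lower bound of Chen et al.\ (Theorem~\ref{thm:cpw-main}). Then I will translate the resulting inequality from $n_{\textup{orig}}$ to the new context length $n(T,L)=t\cdot n_{\textup{orig}}(T,L)$ using Lemma~\ref{lem:tbound}.

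\textbf{Step 1.} Start from $\mathscr{T}$ solving every instance of the $T$-parameterized $\IMIndex$. Lemma~\ref{lem:cpw-reduction} is stated for $\IIndex$, but its proof only uses the decoder-only attention structure and the partition of the input among players. Running it verbatim with macro-player $i$ holding the $t$ concatenated copies $(z_i^{(1)},\ldots,z_i^{(t)})$ gives a deterministic autoregressive protocol $\Pi_{\textup{multi}}$ for $\IMIndex$ with $L$ epochs and bandwidth $2T$, measured relative to the macro-players' token counts $m^{(i)}_{\textup{new}}$. If a $\beta$ input is present, it goes to a final player as in the footnote, which does not change the accounting.

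\textbf{Step 2.} Apply Lemma~\ref{lem:reduction} to get a deterministic protocol $\Pi_{\textup{orig}}$ for $\IIndex$ with $L$ epochs and bandwidth $B=O(Tt)$. Here is the one point needing care: Theorem~\ref{thm:cpw-main} as stated assumes $B\le T$, while $B=O(Tt)$ is larger. So I would invoke it in the form actually proved in~\cite{chen2024theoretical}: any protocol with bandwidth $B$ on the instance family with parameters $(K,m,\{n_\ell\},\{N_\ell\})$ must satisfy $B\ge n_{\textup{orig}}^{\alpha_L}$. The alternative is to reparameterize the hard family by $T'=O(Tt)$, which is still polynomial in $T$ by Lemma~\ref{lem:tbound}.

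\textbf{Step 3.} Either way, I obtain
\[ Tt\gtrsim n_{\textup{orig}}(T,L)^{\alpha_L}. \]
By Lemma~\ref{lem:tbound}, $t\le C_L T^{c_L}$ with $c_L=2^{O(L)}$. Hence
\[ T^{1+c_L}C_L\gtrsim n_{\textup{orig}}^{\alpha_L}, \qquad\text{that is,}\qquad T\ge n_{\textup{orig}}^{\alpha_L/(1+c_L)}\cdot C_L^{-1/(1+c_L)}. \]
It remains to replace $n_{\textup{orig}}$ by $n=t\,n_{\textup{orig}}$. Since every $N_\ell$ divides into $n_{\textup{orig}}$, we have $t=\prod_\ell N_\ell\le n_{\textup{orig}}^{L}$, so $n\le n_{\textup{orig}}^{L+1}$ and thus $n_{\textup{orig}}\ge n^{1/(L+1)}$. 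Substituting gives
\[ T\ge n^{\alpha_L/((L+1)(1+c_L))}/C_L', \]
with exponent $2^{-O(L)}$. The constant $C_L'$ depends only on $L$. It is absorbed because $n$ grows as $K^{2^{O(L)}}$ with $K=(TL)^8\ge L^8$, so for constant $L$ it changes the exponent only by a $2^{-O(L)}$ factor. If needed, I would slightly lower the exponent to swallow it.

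\textbf{Main obstacle.} I expect the main difficulty to be Step 2's mismatch with the hypothesis $B\le T$ in Theorem~\ref{thm:cpw-main}. The instance family is tuned to $T$, but the simulated protocol has bandwidth $Tt\gg T$. Chen et al.'s bound holds for bandwidths up to roughly $n_{\textup{orig}}^{\alpha}$, and the desired conclusion is of exactly that form, so I would argue by contradiction. Suppose $T<n^{\gamma}$ for a suitably small $\gamma=2^{-O(L)}$. Then $Tt$ is still below the polynomial threshold $n_{\textup{orig}}^{\alpha_L}$, which lies within the regime where their counting/communication argument applies, and this yields the contradiction. Checking that their proof tolerates this larger bandwidth against the fixed family is where the exponent degrades, and it is why the bound is polynomially weaker than theirs.
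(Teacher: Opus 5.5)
\textbf{Same route as the paper, but Step~2 is a genuine gap that cannot be closed.} Your chain is exactly the paper's: Lemma~\ref{lem:cpw-reduction} adapted to $\IMIndex$, then Lemma~\ref{lem:reduction}, Theorem~\ref{thm:cpw-main}, and Lemma~\ref{lem:tbound}. Converting via $t\le n_{\textup{orig}}^{L}$ is a fine substitute for the paper's $n_{\textup{orig}}\ge n/(C_LT^{c_L})$. You are also right that Step~2 applies Theorem~\ref{thm:cpw-main} at bandwidth $Tt>T$, outside its hypothesis $B\le T$. The paper's proof does the same without comment.

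None of your repairs works, because at bandwidth $Tt$ the protocol problem is trivial. For $L\ge 2$ we have $t=\prod_{\ell=0}^{L-1}N_\ell\ge N_0N_{L-1}=mN_{L-1}$. This exceeds both $N_{L-1}\lceil\log N_{L-1}\rceil$, the largest player input, and $n_{\textup{orig}}\le 2+LN_{L-1}$. By Eq.~\eqref{eq:bandwidth} with $m^{(-1)}_{\textup{orig}}=1$, player $-1$ may receive $2Tt\ge t$ bits from each player per epoch. So every player can forward its entire input to player $-1$ in epoch~$1$, which solves $\IIndex$ on the $T$-family outright. Consequently:
\begin{itemize}
\item The claim in your last paragraph, that $Tt$ stays below $n_{\textup{orig}}^{\alpha_L}$, is false for every $T$.
\item Reparameterizing by $T'=Tt$ is circular. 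The copy parameter must equal the number of candidate histories $|\mathcal T|=\prod_\ell N_\ell$ for whatever $\IIndex$ parameters you use, since the quantum upper bound needs exactly that. So the blowup becomes $T\cdot t(T')\gg T'$.
\item The ``stronger form'' of~\cite{chen2024theoretical} that you invoke holds at $B=Tt$ only because $Tt\ge n_{\textup{orig}}\ge n_{\textup{orig}}^{\alpha_L}$ unconditionally.
\end{itemize}

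\textbf{What this means for the conclusion.} The inequality $Tt\gtrsim n_{\textup{orig}}^{\alpha_L}$, and everything you (and the paper) derive from it, holds for every $T$ whether or not $\mathscr{T}$ exists. The final bound follows from the parameterization alone: $n(T,L)=(TL)^{2^{O(L)}}$ already gives $T\ge n^{2^{-O(L)}}$, up to the constants you absorb. The hypothesis that $\mathscr{T}$ solves $\IMIndex$ is never used. So the displayed inequality, read literally, is true but vacuous, and no width lower bound has been shown.

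A real lower bound would need a reduction that does not inflate bandwidth by a factor of $t$, and there is reason to doubt one exists. The parallel verification of all $t$ histories behind Theorem~\ref{thm:iter-multiindex_no_copies} appears to transfer to a $2$-layer decoder-only transformer with $O(L)$ heads and $O(\log n)$ precision, i.e.\ width $n^{o(1)}$:
\begin{itemize}
\item the $s$-th copy of $w$ checks $\textsf{Const}_{\tau^{(s)}}$ by position-targeted attention to the fixed table entries it needs;
\item each token compares itself with its counterpart in the previous copy;
\item the last token averages these flags and attends to the unique verifier with $\textsf{Const}=1$.
\end{itemize}
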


\begin{proof}
Since $\mathscr{T}$ solves the $T$-parameterized $\IMIndex$ family on every instance, by Lemma~\ref{lem:cpw-reduction}, there is an $L$-epoch autoregressive protocol $\Pi_{\textup{multi}}$ of bandwidth $T$ solving the same task.\footnote{Actually, Lemma~\ref{lem:cpw-reduction} is about $\IIndex$ rather than $\IMIndex$, but we can adapt the same result for $\IMIndex$ as well with no change in the proof.} By Lemma~\ref{lem:reduction} (applied with bandwidth $T$ and copy parameter $t$), there is an $L$-epoch autoregressive protocol $\Pi_{\textup{orig}}$ of bandwidth $Tt$ solving $\IIndex$ on the same parameter $T$. Applying  Theorem~\ref{thm:cpw-main} to $\Pi_{\textup{orig}}$, we get:
\begin{equation}
Tt \geq n_{\textup{orig}}^{\alpha_L}, \text{ where } \alpha_L = 2^{-O(L)}.
\label{eq:cpw-applied}
\end{equation}
By Lemma~\ref{lem:tbound}, $t = C_L T^{c_L}$ with $c_L = 2^{O(L)}$, and $C_L = 2^{2^{O(L)}}$. Substituting into~\eqref{eq:cpw-applied}, we get:
\begin{align}
\label{eq:implicationofnewprotocol}
C_L \cdot T^{c_L + 1} \geq n_{\textup{orig}}^{\alpha_L} 
%\implies T \geq \left( n_{\textup{orig}}^{\alpha_L} / C_L \right)^{1/(c_L+1)}.,
\end{align}
By definition of $n$ and Eq.~\eqref{eq:nnewnorig}, we have $n = t \cdot n_{\textup{orig}} = C_L T^{c_L} \cdot n_{\textup{orig}}$. Hence, we get:
\begin{equation*}
n_{\textup{orig}} \geq \frac{n}{C_L T^{c_L}}.
\end{equation*}
Substituting this into~\eqref{eq:implicationofnewprotocol} yields:
$$
T^{c_L + 1} \geq \frac{1}{C_L} \cdot \left( \frac{n}{C_L T^{c_L}} \right)^{\alpha_L} = \frac{n^{\alpha_L}}{C_L^{1 + \alpha_L} \cdot T^{c_L \alpha_L}}\implies T^{c_L + 1 + c_L \alpha_L} \geq \frac{n^{\alpha_L}}{C_L^{1 + \alpha_L}}.
$$
The exponent on the left-hand side is $c_L(1 + \alpha_L) + 1 \leq 2 c_L + 1 = 2^{O(L)}$ (since $\alpha_L \leq 1$). For constant $L$, $C_L^{1+\alpha_L}$ is a constant independent of $n$. Taking roots and absorbing the constant into the asymptotic, we get:
\begin{equation*}
T \geq n^{\alpha_L / (c_L + 1 + c_L \alpha_L)} = n^{2^{-O(L)}}.
\end{equation*}
Since $T = Hdp$ by definition, the claimed bound is immediate.
\end{proof}

\subsection{Obstructions in further improvement}
In this section, we show that if a function $f$ is \emph{deterministically} computable by $\QNCz$ circuit of depth $\log\log n$ (given an advice state) and followed by an $\textsf{AND}$ gate, then this $f$ is already in $\ACz$. Note that $\ACz$ functions roughly capture constant-precision transformers (which are quite weak from an expressivity point of view); indeed, we wish to have lower bounds against at least log-precision transformers. However, since our main result already says that an $O(\log\log n)$-depth $\QNCz$ circuit with a single $\textsf{AND}$ gate computes functions outside of such log-precision transformers, this shows that even increasing the depth of our $(\log\log n)$-circuit by a constant multiple has an effect on its computational power. This implies that our separation is essentially optimal.\footnote{As we mentioned in the introduction, our result crucially uses that $f$ is exactly computed by $\land \circ \QNCz/\textsf{qpoly}$. If $f$ was only approximated by these circuits, then such a result is unclear.} 
Indeed, if we increase the depth from $\log \log n$ to  $c\cdot \log \log n$ for some constant $c>1$, then we break past $\ACz$ to a function that is hard for transformers, i.e., needs width $n^{\Omega(1)}$.
\begin{theorem}
Let $n\geq 1$ and $d\leq \log \log n +O(1)$. 
Every function $f:\{0,1\}^n\rightarrow \{0,1\}$ computable by  $\land \circ \QNCz[d]/\textsf{qpoly}$ can be computed in $\ACz$.
\end{theorem}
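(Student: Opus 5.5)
The plan is to exploit light cones. A $\QNCz$ circuit of depth $d\le \log\log n+O(1)$ built from one- and two-qubit gates has the following property: every output qubit lies in a backward light cone of at most $2^d=O(\log n)$ qubits, which may be input qubits or advice/ancilla qubits. The final gate is a single $\AND$ over measurement outcomes $y_1,\ldots,y_k$. Because $f$ is computed \emph{deterministically}, on every input $x$ with $f(x)=1$ the probability of the all-ones outcome must be $1$. On every input with $f(x)=0$, the all-ones outcome must have probability $0$.

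\textbf{Step 1.} We first show that deterministic correctness forces local certainty. If $f(x)=1$, then $\Pr[y_j=1]=1$ for every $j$. Now take $f(x)=0$. We claim there is some $j$ with $\Pr[y_j=1\mid x]<1$. Suppose instead that each $y_j$ individually equals $1$ with probability $1$. Each event $\{y_j=1\}$ then has probability one, and an intersection of finitely many probability-one events has probability one. So the $\AND$ would output $1$ with probability one, a contradiction. This is a simple union bound on the complements, and it needs no independence. Hence
\[
f(x)=\bigwedge_{j} g_j(x), \qquad g_j(x):=[\Pr[y_j=1\mid x]=1].
\]

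\textbf{Step 2.} Next we show each $g_j$ is a junta on $O(\log n)$ input bits. The reduced state of the light cone of $y_j$ depends only on the classical input bits inside that cone, of which there are at most $2^d=O(\log n)$. It also depends on the advice state restricted to the cone, but that restriction is a fixed density matrix, independent of $x$. So $g_j(x)=h_j(x|_{S_j})$ for some set $|S_j|\le c\log n$ and some fixed function $h_j$. The advice's non-uniformity is absorbed into the truth table of $h_j$, which is fine since $\ACz$ here is non-uniform.

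\textbf{Step 3.} Finally we build the $\ACz$ circuit. Any function of $c\log n$ bits can be written as a DNF with at most $2^{c\log n}=n^{c}$ terms of width $c\log n$, which is a polynomial-size depth-2 circuit. The number $k$ of measured qubits is polynomial, because the circuit acts on $\poly(n)$ qubits. Combining the $k$ DNFs under the top $\AND$ therefore gives a depth-3 polynomial-size circuit computing $f$.

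The main obstacle I expect is making Step 1 airtight under the paper's model. In particular, intermediate measurements and classical fan-out are not allowed inside the quantum part, and we must confirm that "computed by" means the output is correct with probability exactly $1$. If it only meant bounded error, the per-output certainty argument fails, because correlations among the $y_j$ matter. This matches the footnote's caveat. I would also check that $2^d$ with $d\le\log\log n+O(1)$ gives cone size $O(\log n)$ rather than $\poly\log n$; that is exactly where the constant $1$ in front of $\log\log n$ is used.
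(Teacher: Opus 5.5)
Your proposal is correct and takes essentially the same route as the paper. The paper phrases Step 1 through the commuting projectors $P_j = U^\dagger Q_j U$: acceptance probability $1$ holds iff $P_j|x,0,\psi_n\rangle = |x,0,\psi_n\rangle$ for every $j$. That is your union-bound argument on the joint distribution of the final measurement outcomes, and the paper then uses the same light-cone junta argument and the same DNFs under a top \textsf{AND}, with your explicit check that $2^d = O(\log n)$ when $d \le \log\log n + O(1)$ being the right way to read its $\mathrm{poly}(n, 2^{2^d})$ size bound.
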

\begin{proof}
Let $U$ be the depth-$d$ $\QNCz$ circuit, let $|\psi_n\rangle$ be the
advice state, and let the input state be $|x\rangle|0\rangle|\psi_n\rangle$.
Let $q_1,\ldots,q_m$ be the designated output qubits, where
$m=\textsf{poly}(n)$. For each $j$, let $Q_j$ denote the projector
onto the event that output qubit $q_j$ is measured as $1$, and define:
\begin{equation*}
P_j := U^\dagger Q_j U.
\end{equation*}
The projectors $P_j$ commute, since the corresponding output measurements commute. The final $\textsf{AND}$ gate accepts with probability:
\begin{equation*}
    \left\langle x,0,\psi_n \right|
        \prod_{j=1}^m P_j
    \left| x,0,\psi_n \right\rangle .
\end{equation*}
Because the computation is exact, this acceptance probability is either $0$ or $1$, and it equals $f_n(x)$. Since $\prod_j P_j$ is a projector, the acceptance probability is $1$ if and only if:
\begin{equation*}
    \prod_{j=1}^m P_j |x,0,\psi_n\rangle = |x,0,\psi_n\rangle .
\end{equation*}
For commuting projectors, this is equivalent to:
\begin{equation*}
    P_j |x,0,\psi_n\rangle = |x,0,\psi_n\rangle
    \qquad\text{for every }j\in[m].
\end{equation*}
Thus
$    f_n(x)=1
$ if and only if $\bigwedge_{j=1}^m g_j(x)=1$. The light cone of the output qubit $q_j$ contains at most $2^d$ input qubits, since the circuit has bounded fan-in and depth $d$. Hence $P_j$ acts
nontrivially only on those at most $2^d$ input qubits, together with some ancilla and advice qubits. The ancilla and advice states are fixed
independently of $x$. Therefore the truth value of $g_j(x)$ depends only on the input bits in this light cone. Thus each $g_j$ is a Boolean function of at most $2^d$ input bits. Since any Boolean function on $k$ bits can be written as a DNF of size at most $k2^k$, applying this fact with $k\le 2^d$ yields the fact that each $g_j$ has a depth-$2$
$\ACz$ circuit of size $2^{O(2^d)}$. Taking the $\textsf{AND}$ over all
$m=\textsf{poly}(n)$ such circuits gives an $\ACz$ circuit for $f_n$ of size $\poly(n,2^{2^d})$. So, if $d\leq \log \log n$, then $f_n\in\ACz$, as desired.
\end{proof}
\section{Distributional separation: quantum circuits versus DLMs}\label{sec:dlms}
In this section, we begin by describing our candidate problem for the distributional separation, and then prove our upper and lower bounds, with the latter being the bulk of the technical content.

\subsection{Main result}
\paragraph{Candidate problem.} We first define our candidate problem. 
For $B \in \mathbb{N}$, consider the set of length-$B$ strings which have even parity, and define the uniform distribution on this set as follows:
\begin{equation*}
 U_B^\oplus := \textsf{unif} \{x\in\{0,1\}^B : |x|=0\}.
\end{equation*}
For $n=MB$, define the \emph{block-parity} distribution:
\begin{equation*}
  D_{\blk}:=\bigotimes_{b=1}^M U_B^\oplus,
\end{equation*}
i.e., $D_{\blk}$ is uniform on the set of strings $\{x^1, \ldots, x^M\} \subseteq \{0,1\}^B$ such that $\bigoplus_{i \in [B]} \, x^j_i=0$ for all $j\in [M]$. Our proof will require us to keep careful track of these parameters, and we will use the following settings:
\begin{align}
\label{eq:parametersetting}
  B=n^{0.01},\qquad M=n^{0.99},\qquad s,R=n^{0.9},
\end{align}
where $s$ is the number of CoT/workspace tokens, and $R$ is the total number of output-token revision/remasking events. Note that by this definition, $D_\blk$ is defined on $n$ bits and all the other parameters such as CoT, remasking, revision will be $o(n)$. These parameter choices will become clear when discussing our lower bounds.\footnote{We remark that one could have also chosen $B,M$ such that $B\cdot M=n$ and $s,R = o(M)$, so we could have also handled $s,R=n^{1-\varepsilon}$ for arbitrarily small constant $\varepsilon$. We chose the parameters above for the sake of simplicity in presentation.} With these definitions we prove our main result.

\begin{theorem}
\label{thm:mainDLM}
 The distribution $D_{\textsf{blk}}$ satisfies the following:
\begin{enumerate}
\item $D_{\blk}$ can be sampled exactly by a $\QNCz$ circuit.
\item  No constant-round shallow  $\DLM(\GCz, \log n)$ with $s=n^{0.9}$ CoT tokens and $R=n^{0.9}$ output-token remasking/revision events can sample $D_{\blk}$ within constant total variation distance.
\end{enumerate}
\end{theorem}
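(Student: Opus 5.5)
The proof splits into the quantum upper bound (part 1) and the classical lower bound (part 2). The lower bound follows the win--win outline of Section~\ref{sec:dlms-overview}.

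\textbf{Part 1: the $\QNCz$ sampler.} Since $D_{\blk}$ is a product over blocks, it suffices to sample $U_B^\oplus$ on each block in parallel. For one block, I would prepare the GHZ-type state on $B$ qubits and measure in the $X$ basis; the outcome is uniform on even-parity strings. A GHZ state needs depth $O(\log B)$ if built directly, so instead I would use the standard constant-depth route. Prepare a line (or tree) of Bell pairs, measure adjacent pairs in the Bell basis, and obtain a GHZ state up to a Pauli correction determined by prefix parities of the measurement outcomes. The correction to the $X$-basis outcome of qubit $i$ is $Z$-type, and it flips that output bit by a parity of earlier outcomes. The key observation is that one need not undo the correction. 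The final string is $y\oplus c$, where $y$ is uniform on even parity and $c$ is a fixed function of the intermediate outcomes with even total parity. One can arrange this, for example by making each intermediate outcome enter exactly two coordinates of $c$, as in the Bene Watts--Parham poor-man's-cat construction. Then $y\oplus c$ is still uniform on even-parity strings, because independent uniform Bell outcomes re-randomize within the coset. So every qubit's output is just a measurement outcome plus constantly many XORs of neighboring outcomes, which keeps the whole sampler in $\QNCz$.

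\textbf{Part 2, steps 1--2: mixtures and block completions.} Suppose a constant-round $\DLM$ samples $D_{\blk}$ within TV distance $\eta$. First I would discard the at most $R$ blocks touched by remasking or revision; on the remaining $M-R=\Omega(M)$ blocks the execution is monotone. For each untouched block $b$, define its completion round and the number $r_b$ of fresh coordinates generated in that round. In a fixed round $t$, condition on the visible output transcript. Using the within-round conditional independence, the law of the fresh output coordinates is a mixture of at most $2^{O(s)}$ product distributions, indexed by the hidden workspace. By Lemma~\ref{lem:conditional-tv-average}, for a typical transcript this law is $O(\eta)$-close to $V=\bigotimes_b U^{a_b}_{r_b}$.

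\textbf{Part 2, step 3: the structural theorem.} I then need that if a mixture of $K$ product distributions is $\eta$-close to $V$, then $\log K\ge\Omega(\sum_b(r_b-1))$. My plan is a direct counting argument. Each product component $Q_h$ places mass at most $2^{-\Omega(\sum_b (r_b-1))}$ on the support of $V$ beyond what its overlap permits. Concretely, a product distribution on $r$ bits has mass at most $1/2+$ const on an affine parity coset unless it is nearly a point mass. Tensorizing over blocks gives $\sum_{x\in\supp V}\min(Q_h(x),V(x))\le 2^{-c\sum(r_b-1)}$ for each $h$. Summing over $h$ then shows that $1-\eta\le K\cdot 2^{-c\sum(r_b-1)}$, which is the bound. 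Combining with $K=2^{O(s)}$ over $O(1)$ rounds, at most $O(s)=o(M)$ blocks have $r_b\ge2$.

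\textbf{Part 2, step 4: reduction to parity, the main obstacle.} Most blocks are therefore completed with $r_b=1$, and there the final bit must equal the parity of the other $B-1$. Let $Y_b$ denote the pre-completion bits of block $b$. The chain rule $H(\text{output})\le\sum_b H(Y_b\mid\text{history})+\dots$, together with Lemma~\ref{lem:fannes} applied to the output law, which has entropy near $M(B-1)$, forces some block $b^\star$ with $H(Y_{b^\star}\mid\cdot)\ge B-1-o(1)$. By Lemma~\ref{lem:pinsker}, $Y_{b^\star}$ is then $o(1)$-close to uniform conditionally. I would plant uniform external bits in block $b^\star$, fix the other randomness by averaging, and hardwire the remaining workspace. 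This yields a $\GCz[\log n]$ circuit of constant depth, being a composition of $O(1)$ rounds of scheduler and denoiser circuits, that predicts parity on $B-1=n^{0.01}-1$ uniform bits with constant advantage. That contradicts \cite{kumar2023tight,grewal2024improved}, since $B$ is polynomial in $n$. The delicate point is ensuring the entropy deficit is spread thinly enough, which requires $s+R=o(M)$ and uses $n^{0.9}\ll n^{0.99}$. It also requires showing that the planted execution's completion round still depends only on the shallow circuit.
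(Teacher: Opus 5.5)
Your overall architecture matches the paper's win--win argument, but Step~2 has a decisive gap. It is not true that, for a typical visible transcript, the law of the fresh coordinates is $O(\eta)$-close to $V=\bigotimes_b U^{a_b}_{r_b}$. Lemma~\ref{lem:conditional-tv-average} cannot give this: it only transfers closeness between two joint laws, and in general no joint law of (transcript, fresh bits) that is close to the $\DLM$'s has conditionals equal to $V$. The reason is that the adaptive schedule can leak hidden workspace into the visible transcript. For example, generate a hidden uniform workspace bit $w$ in round~1. In round~2, let the scheduler reveal all coordinates of a partner block except coordinate $1+w$ (uniform values; the missing one is later set to the parity of the rest), and also the first $B-2$ coordinates of block $b$ (uniform values). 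In round~3, complete block $b$ with the two bits $(w,w\oplus a_b)$. The output is exactly $D_{\blk}$. Yet given the visible transcript, these two fresh bits are a point mass, at distance $1/2$ from $U^{a_b}_{2}$. Doing this on $s$ disjoint pairs of blocks puts every transcript at distance $1-2^{-s}$ from $V$, even though $\eta=0$. The correct comparison law is the paper's reweighted execution $\mathbf Q$: output law $\widehat D=D_{\blk}(\cdot\mid\supp P_Y)$, with the same execution law given the output. Its conditionals are supported on $\supp(V)$ but carry an entropy deficit. This is exactly why the paper needs the deficit-tolerant bound $\log K\ge c_1D-\kappa-c_2$ of Theorem~\ref{thm:approximable}, together with the ledger $\sum_t\E_{\mathbf Q}[\delta_t(\mathcal F_t)]\le s+2\eta$ of Lemma~\ref{lem:ledger}, which charges all such leakage to the workspace entropy.

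Three further steps fail as written.

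\emph{Step~3.} The per-component bound $\sum_{x\in\supp V}\min(Q_h(x),V(x))\le 2^{-c\sum_b(r_b-1)}$ is false: the uniform product distribution on $r$ bits has overlap exactly $1/2$ with $U^{a}_{r}$ for every $r$, and $\min$-overlaps do not tensorize. Against an exact $V$ this is repairable with the Bhattacharyya coefficient, which tensorizes and is at most $1/\sqrt2$ per block with $r_b\ge2$. That gives $1-\eta\le\sqrt{K}\,2^{-q/2}$ for $q$ multi-bit blocks, which is enough since only their number matters. It does not, however, address the leakage above.

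\emph{Step~4.} Lemma~\ref{lem:fannes} applied to $P_Y$ at constant distance $\eta$ only gives $H(P_Y)\ge n-M-2\eta n-1$, an average loss of $\Theta(\eta B)$ bits per block. This loss is attained (mix $D_{\blk}$ with a fixed string), so under $\mathbf P$ no block need have $o(1)$ deficit. The paper again works under $\mathbf Q$. Every atom of $\widehat D$ has mass at most $2^{-(n-M)}/(1-\eta)$, so $H(\widehat D)\ge n-M-2\eta$ and the total deficit is at most $s+2\eta$. Corollary~\ref{cor:blockdeficit} then yields $b^\star$ with $\mathrm{def}_{b^\star}\le 4(s+2\eta)/M$. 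Lemma~\ref{lem:planting} plants uniform bits round by round, changing only the block-$b^\star$ marginal in non-completing rounds, at an additive cost $6T\eta$ for passing from $\mathbf Q$ back to $\mathbf P$.

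\emph{Part~1.} This is garbled. $X$-type (prefix-parity) corrections do not affect an $X$-basis measurement. An even-weight $Z$-correction is only a global phase on $\ket{z}+\ket{\bar z}$. The odd-weight phase errors that genuine Bell measurements produce must be cancelled by XORing each outcome into one output bit, not two. The paper avoids all this: measuring $Z_iZ_{i+1}$ on $\ket{+}^{\otimes B}$ leaves $(\ket{z}+\ket{\bar z})/\sqrt2$ with no phase error, and its Hadamard-basis outcome is exactly $U_B^\oplus$ for every $z$.
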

We now make a few remarks about our candidate problem and the lower bound. The $\DLM$  samplers we consider have no input; on any (vacuous) input, their goal is just to produce an output that is a uniform sample from the distribution $D_{\blk}$. The $\DLM$ that we consider here is the same one described in Section~\ref{sec:intro}, wherein each token generated within the same round is generated in a conditionally independent manner, given the current full transcript. The constraints on this $\DLM$ are as follows:
\begin{enumerate}[$(i)$]
\item We allow only $s =n^{0.9}$ hidden transcript bits. 
\item The output schedule may be \emph{fully adaptive} and may depend on hidden token values.
%\item The first lower bound (Theorem~\ref{thm:no-revision}) forbids remasking/revision. 
\item For each transcript cell $\tau$,  the corresponding token is computed by a polynomial-size $\GCz(k)$ circuit with depth $O(1)$, where $k = O(\log n)$ (i.e., consists of $\textsf{GC}_k$ gates for $k=O(\log n)$).
\end{enumerate}

\subsection{Quantum upper bound}\label{sec:qub_sample}
We first show that the block-parity distribution is easy for shallow quantum circuits. The key ingredient is the poor man's cat state construction of Watts-Kothari-Schaeffer-Tal~\cite{watts2019exponential}, which can be prepared in $\QNCz$.

\begin{lemma}\label{lem:qub_sample_block_parity}
For every $B$, the distribution $U_B^\oplus$ can be sampled exactly by a $\QNCz$ circuit. Consequently, $D_{\blk}$ can be sampled exactly by a $\QNCz$ circuit.
\end{lemma}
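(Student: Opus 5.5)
The sampler measures a \emph{poor man's cat state} in the $X$ basis: this yields a uniformly random string whose parity is a known function of side outcomes, and a single layer of classical-control Pauli corrections then fixes the parity.

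\emph{Step 1 (sampling from an exact cat state).} For the $B$-qubit GHZ state $\frac{1}{\sqrt 2}(\ket{0^B}+\ket{1^B})$, apply $H^{\otimes B}$. The resulting state is $2^{-(B-1)/2}\sum_{|y|\equiv 0 \pmod 2}\ket{y}$, so a computational-basis measurement samples exactly from $U_B^\oplus$. The only obstacle is that an exact GHZ state needs depth $\Omega(\log B)$.

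\emph{Step 2 (replacing the cat state).} Instead I would use the construction of~\cite{watts2019exponential}.
\begin{itemize}
\item Place the $B$ qubits on a path (or a binary tree).
\item Prepare $\ket{+}$ on each vertex and apply $\CNOT$s along the edges in constant depth; this is a standard graph-state-like preparation.
\item Measure ancilla edge qubits in the computational basis, obtaining outcomes $e\in\{0,1\}^{B-1}$.
\end{itemize}
Conditioned on $e$, the remaining state is $\frac{1}{\sqrt2}(\ket{c_e}+\ket{\bar c_e})$, where $c_e\in\{0,1\}^B$ is determined by prefix XORs of $e$, and $e$ is uniform. This is $X^{c_e}$ applied to the GHZ state. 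Applying $H^{\otimes B}$ turns $X^{c_e}$ into $Z^{c_e}$. Since $Z$ acting on a computational-basis state only contributes a global phase, measuring $H^{\otimes B}X^{c_e}\ket{\mathrm{GHZ}}$ still yields a uniform even-parity string.

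At this point the key check is that the Hadamard-basis output is in fact independent of $e$, so no fix-up is needed at all. I expect this to be the main point to verify carefully. If the construction instead leaves a relative phase, for example $\ket{c}+(-1)^{\phi(e)}\ket{\bar c}$, the output is uniform over strings of parity $\phi(e)$. I would then fix this by flipping the first output bit conditioned on $\phi(e)$. Because $\phi(e)$ is a parity of many bits, this cannot be done in a single classical post-processing step. The way out is to sample jointly with the random bits used for the phase, exactly as in~\cite{watts2019exponential}: the sampled distribution is defined over the output register only, and one chooses which qubits to output so that the marginal is exactly $U_B^\oplus$. Concretely, I would arrange the construction so that the measured phase bits are themselves among the $B$ output coordinates. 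The total parity of the output register is then $0$ by construction, and the marginal is uniform on even-parity strings by a symmetry argument (flipping any two coordinates is a measure-preserving bijection).

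\emph{Step 3 (block product).} $D_{\blk}=\bigotimes_{b=1}^M U_B^\oplus$ is sampled by running $M$ independent copies of the constant-depth circuit in parallel on disjoint qubits. The depth stays constant and the size is $O(MB)=O(n)$, so $D_{\blk}$ is sampled exactly in $\QNCz$.
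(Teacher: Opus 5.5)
Your main argument is correct and is essentially the paper's proof. The steps are the same: prepare $\ket{+}^{\otimes B}$, measure the adjacent parities $Z_iZ_{i+1}$ through edge ancillas to obtain $(\ket{z}+\ket{\bar z})/\sqrt{2}$, apply $H^{\otimes B}$ and measure, then run $M$ copies in parallel. Your identity $H^{\otimes B}X^{c_e}=Z^{c_e}H^{\otimes B}$ is a cleaner way of writing the paper's amplitude computation $(-1)^{y\cdot z}(1+(-1)^{|y|})$.

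You should drop the hedged fallback about a relative phase $(-1)^{\phi(e)}$. It is unnecessary, and as written it is not a rigorous construction. Projecting $\ket{+}^{\otimes B}$, whose amplitudes are all equal and positive, onto the span of $\ket{c_e},\ket{\bar c_e}$ can never produce a relative sign.
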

\begin{proof}
We first describe how to prepare the poor man's cat state in $\QNCz$. We start with $B$ qubits initialized to
$\ket{0}^{\otimes B}$. Applying Hadamard to all qubits gives
$\ket{+}^{\otimes B}
=
\frac{1}{\sqrt{2^B}}\sum_{x\in\{0,1\}^B}\ket{x}$. Next, we measure the commuting parity observables
$Z_iZ_{i+1}$, for $
i=1,\dots,B-1$.\footnote{By this we mean, apply the two-qubit operator $\Pi_{s_i}=\ketbra{00}{00}+\ketbra{11}{11}$ and $\Pi_{1+s_i}=\ketbra{01}{01}+\ketbra{10}{10}$ onto qubits $i,i+1$.}
Let the measurement outcomes be
$s_i\in\{0,1\}$, 
where $s_i=0$ corresponds to eigenvalue $+1$ (i.e.,
$x_i\oplus x_{i+1}=0$), 
and $s_i=1$ corresponds to eigenvalue $-1$ (i.e.,
$x_i\oplus x_{i+1}=1$). These $B-1$ parity constraints determine the string uniquely up to its bitwise complement. More precisely, there is some string
$z\in\{0,1\}^B$, such that $z$ and $\bar z$ are the \emph{only} two solutions to all constraints (in fact, they would be of the form $z_1=0,z_2=s_1,z_3=s_1+s_2,\ldots$ and $z_1=1,z_2=1+s_1,z_3=1+s_1+s_2,\ldots$ where all the additions are modulo $2$). Therefore, after applying these $B-1$ projective measurements, the post-measurement state is:
\begin{equation}\label{eq:post_meas_state}
\frac{\ket z+\ket{\bar z}}{\sqrt2},
\end{equation}
which is referred to as the \emph{poor man's cat state}. The value of $z$ depends on the measurement outcomes $(s_1,\dots,s_{B-1})$, but this will not matter for the sampling task. Now, finally, we apply Hadamard to all $B$ qubits, and measure in the computational basis. For any outcome $y\in\{0,1\}^B$, the amplitude equals:
\begin{equation*}
\left\langle y\middle|
\textsf{Had}^{\otimes B}
\frac{\ket z+\ket{\bar z}}{\sqrt2}
\right\rangle
\propto
(-1)^{y\cdot z}
+
(-1)^{y\cdot\bar z}.
\end{equation*}
Since
$y\cdot\bar z
=
y\cdot z+\sum_{i=1}^B y_i
\pmod 2$, 
we obtain:
\begin{equation*}
(-1)^{y\cdot z}
+
(-1)^{y\cdot\bar z}
=
(-1)^{y\cdot z}
\left(
1+(-1)^{|y|}
\right).
\end{equation*}
Therefore, the amplitude vanishes whenever $|y|$ is odd, and has equal magnitude for every even-parity string $y$. Hence, the measurement outcome is distributed exactly as
$y\sim U_B^\oplus$. 

Let us quickly describe the corresponding quantum circuit. Consider the input of $\ket{+}^B$ on the $B$ system qubits, which we will denote by $q_i$. Let us add $B-1$ many ancilla qubits, which we denote by $a_i, \forall i \in [B-1]$, each initialized to $\ket{0}$. To carry out the measurement corresponding to $Z_i Z_{i+1}, \forall i \in [B-1]$, we will first apply a layer of $\CNOT$s with control over $q_i$ to the target $a_i$ for all $i \in [B-1]$ followed by another layer of $\CNOT$s with control over $q_{i+1}$ to the target $a_i$ for all $i \in [B-1]$. At this point, the state over the system and ancilla qubits is $\frac{1}{\sqrt{2^B}} \sum_x \ket{x} \Big(\otimes_{i \in [B-1]} \ket{x_i \oplus x_{i+1}} \Big)$. Measuring the qubits $a_i, \forall i \in [B-1]$ in the computational basis then results in the post-measurement state of Eq.~\eqref{eq:post_meas_state} over qubits $q_i, \forall i \in [B]$. Finally, applying Hadamards to the qubits $\{q_i\}$ and then measuring all $q_i$ in the computational basis yields outcomes $y \sim U_B^{\oplus}$ as shown above. This circuit is illustrated in Figure~\ref{fig:poor_man_cat_state}.
\begin{figure}[ht]
    \centering
    \[
\Qcircuit @C=0.75em @R=0.55em {
\lstick{q_1:\ket{0}}
    & \gate{H}
    & \ctrl{1}
    & \qw
    & \qw
    & \gate{H}
    & \meter
    & \cw
    & \rstick{y_1}
\\
\lstick{a_1:\ket{0}}
    & \qw
    & \targ
    & \targ
    & \meter
    & \cw
    & \cw
    & \cw
    & \rstick{s_1}
\\
\lstick{q_2:\ket{0}}
    & \gate{H}
    & \ctrl{1}
    & \ctrl{-1}
    & \qw
    & \gate{H}
    & \meter
    & \cw
    & \rstick{y_2}
\\
\lstick{a_2:\ket{0}}
    & \qw
    & \targ
    & \targ
    & \meter
    & \cw
    & \cw
    & \cw
    & \rstick{s_2}
\\
\lstick{q_3:\ket{0}}
    & \gate{H}
    & \ctrl{1}
    & \ctrl{-1}
    & \qw
    & \gate{H}
    & \meter
    & \cw
    & \rstick{y_3}
\\
\lstick{a_3:\ket{0}}
    & \qw
    & \targ
    & \targ
    & \meter
    & \cw
    & \cw
    & \cw
    & \rstick{s_3}
\\
\lstick{q_4:\ket{0}}
    & \gate{H}
    & \ctrl{1}
    & \ctrl{-1}
    & \qw
    & \gate{H}
    & \meter
    & \cw
    & \rstick{y_4}
\\
\lstick{a_4:\ket{0}}
    & \qw
    & \targ
    & \targ
    & \meter
    & \cw
    & \cw
    & \cw
    & \rstick{s_4}
\\
\lstick{q_5:\ket{0}}
    & \gate{H}
    & \qw
    & \ctrl{-1}
    & \qw
    & \gate{H}
    & \meter
    & \cw
    & \rstick{y_5}
}
\]
\caption{Illustration of circuit for preparing the poor man's cat state and sampling from $U_5^{\oplus}$.}
\label{fig:poor_man_cat_state}
\end{figure}
We observe that the poor man's cat state (Eq.~\eqref{eq:post_meas_state}) can be prepared in depth $4$, and overall this gives an exact $\QNCz$ sampler for $U_B^\oplus$. Running this sampler independently and in parallel on all $M$ blocks gives an exact $\QNCz$ sampler for $D_{\blk}$, proving the lemma statement. The overall circuit has depth $4$, gate complexity $O(BM)$, and is over $O(MB)$ qubits.
\end{proof}

\subsection{Lower bound: important lemmas}
For the lower bound, we will use as a black box the following theorem of Kumar~\cite{kumar2023tight}, that shows that $\GCz[\log m]$ circuits cannot approximate the parity function with constant bias, extending the well-known result of Hastad~\cite{haastad2014correlation} that $\ACz$ functions cannot approximate parity with constant~bias. 

\begin{theorem}[\cite{kumar2023tight}]\label{thm:parity-bound}
If $C \colon \{0, 1\}^m \to \{0, 1\}$ is a polynomial-size constant  depth $\GCz[\log m]$  circuit, then:
\begin{equation*}
\Pr_{x \sim \{0, 1\}^m}\bigl[C(x) = x_1 \oplus \cdots \oplus x_m\bigr] \le \frac{1}{2} + 2^{-\Omega(m/\poly\log(m))}. 
\end{equation*}
\end{theorem}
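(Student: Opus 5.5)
The plan is to reduce the statement to Håstad's correlation bound for $\ACz$ by showing that every $\textsf{G}_k$ gate with $k = O(\log m)$ can be replaced by an $\ACz$ gadget of quasipolynomial size and constant depth. The bound for $\GCz[\log m]$ then follows from the $\ACz$ bound with the size parameter inflated to quasipolynomial. This works because Håstad's bound depends on the size only through $\log s$.

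\textbf{Step 1 (simulating one $\textsf{G}_k$ gate).} Let $g$ be a $\textsf{G}_k$ gate with inputs $y_1,\ldots,y_r$, where $r \le \poly(m)$. Each $y_i$ is the output of some subcircuit. Let $c\in\{0,1\}$ be the constant that $g$ outputs when $|y|>k$. Then
\begin{equation*}
g(y) \;=\; \Bigl(c \wedge [|y|>k]\Bigr) \;\vee\; \bigvee_{S\subseteq[r],\ |S|\le k,\ g(1_S)=1} \Bigl(\bigwedge_{i\in S} y_i \wedge \bigwedge_{i\notin S}\neg y_i\Bigr).
\end{equation*}
The threshold indicator is $[|y|>k]=\bigvee_{|S|=k+1}\bigwedge_{i\in S}y_i$. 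Both pieces are therefore depth-$2$ circuits (OR of ANDs) of size at most $r^{k+1}+\sum_{j\le k}\binom{r}{j}\cdot r = r^{O(k)} = m^{O(\log m)} = 2^{O(\log^2 m)}$. We use that $\textsf{NOT}$ gates are free, as noted earlier in the paper.

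\textbf{Step 2 (whole-circuit simulation).} Replace every $\textsf{G}_k$ gate in the depth-$d$, size-$\poly(m)$ circuit $C$ by the gadget from Step 1. Each replacement adds only $O(1)$ depth. Since there are $\poly(m)$ gates, the resulting $\ACz$ circuit $C'$ computes the same function. It has depth $d'=O(d)$, which is still a constant, and size $s'=\poly(m)\cdot 2^{O(\log^2 m)} = 2^{O(\log^2 m)}$. Adjacent layers of the same gate type may be merged to control depth, but this is not essential.

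\textbf{Step 3 (apply Håstad).} Håstad's correlation bound~\cite{haastad2014correlation} says that any depth-$d'$, size-$s'$ $\ACz$ circuit agrees with parity on a uniformly random input with probability at most $\frac12+2^{-\Omega(m/(\log s')^{d'-1})}$. Plugging in $\log s' = O(\log^2 m)$ gives $(\log s')^{d'-1}=\poly\log(m)$, and therefore the bound $\frac12+2^{-\Omega(m/\poly\log m)}$, as claimed.

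\textbf{Main obstacle.} The delicate point is Step 1: we must make sure the size blowup is only $r^{O(k)}$, i.e.\ quasipolynomial, and not exponential in $r$. This is exactly why the restriction $k=O(\log m)$ is needed. One also has to check carefully that the quantitative form of Håstad's theorem is the one with $\log s$ raised to the power $d-1$ in the exponent. That form is what lets the quasipolynomial size cost only a polylogarithmic factor. If a sharper bound is wanted, one could instead redo the switching-lemma argument directly. Under a random restriction, a $\textsf{G}_k$ gate of large fan-in becomes constant with high probability, because more than $k$ of its inputs get fixed to $1$. For the stated polylogarithmic loss, however, the simulation route suffices.
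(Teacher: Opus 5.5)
Your proof is correct, and it takes a different and more elementary route than the source the paper relies on. The paper gives no proof of this statement; it imports it as a black box from Kumar~\cite{kumar2023tight}, whose argument works directly with $\GCz(k)$ circuits through a multi-switching lemma tailored to $\textsf{G}_k$ gates together with a new depth-reduction lemma. You instead reduce to $\ACz$.

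\begin{itemize}
\item \textbf{The gadget is right.} If $|y|>k$, no minterm with $|S|\le k$ fires and the threshold term returns $c$. If $|y|\le k$, exactly the minterm with $S=\supp(y)$ fires.
\item \textbf{The size and depth bookkeeping is right.} With fan-in $r\le \poly(m)$ and $k=O(\log m)$, each gate becomes a DNF of size $2^{O(\log^2 m)}$. Pushing negations to the leaves at most doubles the size, so you get an $\ACz$ circuit of depth $O(d)$ and quasipolynomial size.
\item \textbf{The final step goes through.} H\aa{}stad's $2^{-\Omega(m/(\log s)^{d-1})}$ bound~\cite{haastad2014correlation} applies, and the quasipolynomial size costs only a polylogarithmic factor.
\end{itemize}

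The cost of your route is in the polylogarithmic exponent. The depth roughly doubles and $\log s$ becomes $O(\log^2 m)$, so you get something like $2^{-\Omega(m/(\log m)^{4d-2})}$. Kumar's direct switching argument recovers essentially the bound $2^{-\Omega(m/(\log m)^{d-1})}$ known for polynomial-size depth-$d$ $\ACz$. It also remains meaningful for $k$ polynomially large in $m$, where your $r^{O(k)}$ blowup makes H\aa{}stad's bound useless.

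What your route buys is a short, self-contained derivation from H\aa{}stad's theorem alone. The weaker exponent is fully sufficient in both places it matters. The statement as written fixes constant depth, so any polylogarithm is allowed. The paper's only use of the theorem, in Lemma~\ref{lem:simulation}, is to rule out a $\GCz[\log B]$ circuit predicting $B$-bit parity with probability $4/5$.
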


\subsubsection{$\DLM$ run as a combination of distributions}
We now show the core structural fact underlying the lower bound. Recall that, by the model definition, a $\DLM$ follows a sequence of denoising steps before the $n$ output tokens (which we will view as $M$ blocks of $B$ bits) need to produce a string distributed according to $D_\blk$.  Consider a $\DLM$ round $t$, and consider the parity blocks that are completed for the first time in this round. For such a block $b\in [M]$, let $r_b$ denote the number of coordinates in block $b$ that are newly generated in the $t$-th round; equivalently, $r_b$ is the number of block-$b$ coordinates that were still masked in each of the first $t-1$ rounds. Now, observe that  since $D_\blk$ is uniform on every block $b\in [M]$, the newly generated coordinates in the completion round should closely emulate the uniform distribution over an affine parity coset. In particular, the target conditional law for the block $b$ is the distribution:\footnote{Throughout this section, we interchangeably use $\oplus x_i$ and $|x|=\sum_i x_i$ but emphasize that we using the latter, the summation is modulo $2$.}
\begin{equation*}
U_{a_b}^{r_b}
:=
\textsf{unif}\Big\{
x\in\{0,1\}^{r_b}
:
\bigoplus_{i=1}^{r_b} x_i = a_b
\Big\},
\end{equation*}
where $a_b\in \{0,1\}$ is set according to the parity value of the previously revealed coordinates of the block in the first $t-1$ rounds. Now, suppose there are $q\leq M$ such  blocks completed in the $t$-th round. Then, the overall \emph{target} conditional law is the product distribution:
\begin{align}
\label{eq:defnofV}
V=\bigotimes_{b=1}^q U_{a_b}^{r_b}.
\end{align}
In other words, $V$ is a product distribution across the simultaneously completed blocks, but each of the factors $U^{r_b}_{a_b}$ contains an internal parity constraint among the coordinates of that block (determined by $a_b$). Let 
$
D:=\sum_{b=1}^q (r_b-1)
$
be the total dimension of the parity-coset distribution $V$. Consequently, we have:
\begin{equation*}
|\operatorname{supp}(V)|=\prod_{b=1}^q 2^{r_b-1}=2^D.
\end{equation*}
Let $N:=\sum_{b=1}^q r_b$ denote the total number of generated coordinates. The following theorem precisely
quantifies the obstruction created by these ``inter-block'' constraints; roughly speaking, even though the $q$ blocks in $V$ are independent of one another, a distribution with many parity-constrained factors cannot be approximated by a small mixture of fully coordinate-wise product distributions.\footnote{Although this does not seem \emph{a priori} related to $\DLM$s as stated, every round of a $\DLM$ in fact generates product distributions, which is the connection that we will utilize later.}

\begin{theorem}
\label{thm:approximable}
Let $q \leq M \in \mathbb{N}$ be the number of blocks completed in the $t$-th round. For $r_b\geq 2, a_b\in \{0, 1\}$,  define:
\begin{equation*}
  U^{r_b}_{a_b}=\textsf{unif}\{x\in\{0, 1\}^{r_b}:|x|=a_b \pmod 2\}. 
\end{equation*}
Define $V$ be as in Eq.~\eqref{eq:defnofV},  $N = \sum_{b=1}^q r_b$, and $D = N - q$. Suppose $W$ is a distribution on $\{0,1\}^N$ with $\supp(W)\subseteq\supp(V)$ and $H(W)\geq D-\kappa$ for some $\kappa\geq 0$.\footnote{ Note that the choice $W=V$ corresponds to $\kappa=0$.}  Then, there exists a constant $\eta>0$ such that whenever $P$ is a mixture of $K$ product distributions (i.e., 
$P=\sum_{h=1}^K\lambda_h Q_h$ for product distributions $Q_h$) satisfying $\textsf{TV}(P, {W})\le\eta$, then there exist constants $ c_1,c_2$ satisfying:
\begin{equation*}
  \log K\ge c_1 D -\,\kappa\,-c_2. 
\end{equation*}
% Equivalently,  if $K\le2^s$,  then
% $$
%   D\le C_\eta(s+\log N). 
% $$
\end{theorem}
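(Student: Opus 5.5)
The plan is an entropy-counting argument. A mixture of $K$ product distributions has entropy at most $\log K$ plus the average entropy of its components. Any product component that mostly respects the block parities must be nearly deterministic on those blocks, so its entropy there is small. On the other side, $W$ has entropy at least $D-\kappa$, and Lemma~\ref{lem:fannes} transfers this to $P$ at a cost of $O(\eta N)$. Since $r_b\ge 2$ for every block, $N=D+q\le 2D$, so this cost is $O(\eta D)$, which is absorbable once $\eta$ is a small enough constant. The target inequality is
\begin{equation*}
D-\kappa-4\eta D-1\;\le\; H(P)\;\le\; \log K+\sum_h\lambda_h H(Q_h)\;\le\;\log K+(1-c)D+O(1).
\end{equation*}
Rearranging gives $\log K\ge (c-4\eta)D-\kappa-O(1)$, which is the statement with $c_1=c-4\eta>0$.

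\textbf{Step 1 (entropy bounds).} Lemma~\ref{lem:fannes} gives $H(P)\ge H(W)-2\eta N-1\ge D-\kappa-4\eta D-1$. The mixture bound $H(P)\le H(h)+H(X\mid h)\le \log K+\sum_h\lambda_h H(Q_h)$ is standard.

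\textbf{Step 2 (parities are nearly forced).} Let $\pi:\{0,1\}^N\to\{0,1\}^q$ map a string to its vector of block parities. Because $\supp(W)\subseteq\supp(V)$, $\pi(W)$ is the point mass at $a=(a_b)_b$. By data processing, $\Pr_P[\pi(x)=a]\ge 1-\eta$. For each component, set
\begin{equation*}
\gamma_{h,b}:=\Pr_{Q_h}\big[\textstyle\bigoplus_{i\in b}x_i\neq a_b\big],
\end{equation*}
the probability that $Q_h$ violates the parity of block $b$. The union bound only runs the wrong way, so instead I use that $Q_h$ is a product across blocks: $\Pr_{Q_h}[\pi=a]=\prod_b(1-\gamma_{h,b})\le \exp(-\sum_b\gamma_{h,b})$. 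Averaging over $h$ and applying Markov, components carrying mass at least $1-2\eta$ satisfy $\prod_b(1-\gamma_{h,b})\ge 1/2$. For these components $\sum_b\gamma_{h,b}\le O(1)$, and in particular all but $O(1)$ blocks have $\gamma_{h,b}\le 1/4$. The remaining components have total weight at most $2\eta$ and entropy at most $N\le 2D$, contributing at most $4\eta D$.

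\textbf{Step 3 (per-block lemma, the main step).} The claim is: if $Q$ is a product of Bernoulli$(p_i)$ on $r$ bits and $\Pr_Q[\oplus x_i\ne a]=\gamma<1/2$, then $H(Q)\le C\log\frac{1}{1-2\gamma}$. The proof uses $|1-2\gamma|=\prod_i|1-2p_i|$, so $\sum_i -\log|1-2p_i|=\log\frac1{1-2\gamma}$, together with the elementary inequality $h(p)\le C\cdot(-\log|1-2p|)$. This inequality holds because near $p=1/2$ both sides are $\Theta((1-2p)^2)$ up to constants, and near $p\in\{0,1\}$ the right side is $\Theta(p)$ while $h(p)=O(p\log\frac1p)$.

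\emph{This is where I expect the main obstacle.} The bound gives $O(1)$ per good block, but I must avoid dependence on $r_b$. Under the product structure $\sum_b\log\frac{1}{1-2\gamma_{h,b}}\le O(\sum_b\gamma_{h,b})=O(1)$ on the good blocks. So the total entropy of a good component on its good blocks is $O(1)$, uniformly in $r_b$ and $q$.

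The $O(1)$ bad blocks per good component (those with $\gamma_{h,b}>1/4$) are the real danger: a single huge block can carry nearly all of $D$, and a uniform product is then only TV about $1/2$ from it. Their entropy cannot be controlled through $\gamma$ alone. For these blocks I would not use Step 2, but instead use the TV closeness to $W$ directly. The refined version of Step 2 applies data processing to the pair (block-$b$ parity, all other blocks). If a component of non-negligible weight had a bad block of size $r_b$, then for that block $\Pr[\text{parity}\ne a_b]$ would be bounded below, forcing $\Pr_P[\pi\neq a]$ to be large unless the mass of such components is $O(\eta)$. Pushing $\eta$ below a fixed constant (e.g.\ $\eta<1/16$) should make bad blocks appear only in components of total weight $O(\eta)$. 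Those components are again charged at most $O(\eta)N$.

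\textbf{Step 4 (assembly).} Combining the steps, $\sum_h\lambda_hH(Q_h)\le O(1)+O(\eta)D$. This is the stated $(1-c)D+O(1)$ bound with $c$ close to $1$, and plugging into the chain in the first paragraph yields $\log K\ge c_1D-\kappa-c_2$ for a suitable small constant $\eta$.
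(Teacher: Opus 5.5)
Your skeleton matches the paper's: Fannes to get $H(P)\ge D-\kappa-O(\eta D)-1$, the mixture bound $H(P)\le\log K+\sum_h\lambda_hH(Q_h)$, Markov on the violation probabilities $\alpha_h:=\Pr_{Q_h}[\pi(x)\neq a]$ (whose $\lambda$-average is at most $\eta$), and control of the coordinate biases through $\prod_i(1-2p_i)$. Your ``bad blocks'' detour is not needed. If you threshold directly at a small constant $\tau$, as the paper does, then any component with some $\gamma_{h,b}>\tau$ has $\alpha_h>\tau$. Such components carry total weight at most $\eta/\tau$, and every surviving component has all $\gamma_{h,b}\le\tau$.

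The genuine gap is Step~3. The inequality $h(p)\le C\cdot(-\log|1-2p|)$ is false. As $p\to 0$ the right side is $2p\log e+O(p^2)=\Theta(p)$, whereas $h(p)\approx p\log(1/p)$, and $p\log(1/p)$ is not $O(p)$. Near $p=1/2$ the inequality does hold, but only because $h\le 1$ while the right side blows up; it is not true that both sides are $\Theta((1-2p)^2)$. Hence the per-block lemma $H(Q)\le C\log\frac{1}{1-2\gamma}$ fails. Take $r$ i.i.d.\ Bernoulli$(\gamma/r)$ bits: they violate parity with probability at most $\gamma$, yet their entropy is $r\,h(\gamma/r)\approx\gamma\log(er/\gamma)$, which is unbounded in $r$. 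This is not an artifact of your proof. For a single block of size $r$, mix over all even $x$ the product law that flips each bit of $x$ independently with probability $\eta/r$. The resulting $P$ has $\TV(P,V)\le\eta$, yet every component has entropy $\approx\eta\log(er/\eta)$. So the claim that a good component has entropy $O(1)$ ``uniformly in $r_b$ and $q$'' is false, and Step~4 does not follow as written.

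The $r_b$-dependence cannot be avoided, but it is only logarithmic, so it is harmless. Set $t_i=\min\{p_i,1-p_i\}$. From $1-2\gamma_{h,b}=\prod_{i\in b}(1-2t_i)$ and $\gamma_{h,b}\le\tau\le 1/4$ you get $\sum_{i\in b}t_i\le 2\gamma_{h,b}$. Hence $S:=\sum_i t_i\le 2\sum_b\gamma_{h,b}\le 4\alpha_h$. Using $h(t)\le t\log(e/t)$ and concavity, $H(Q_h)\le S\log(eN/S)=O(\tau\log(eN/\tau))$ for each good component. Since $N\le 2D$, this is $O(\log D)=o(D)$. It can be absorbed into the linear term by shrinking $c_1$ slightly and enlarging $c_2$, which is exactly how the paper concludes.

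Alternatively, keep the sharper bound $H(Q_h)\le 4\alpha_h\log(eN/(4\alpha_h))$. This map is concave, vanishes at $0$, and is increasing on the relevant range, so Jensen with $\sum_h\lambda_h\alpha_h\le\eta$ bounds the good part by $O(\eta\log N)+O(1)\le O(\eta D)+O(1)$. That recovers precisely the bound you wanted in Step~4. With either repair your argument goes through.
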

\begin{proof}
Let $\Pi_b$ denote the set of fresh coordinates of block $b$ revealed in the
current round, so $|\Pi_b|=r_b$.  In Eq.~\eqref{eq:defnofV}, the $b$-th
tensor factor $U_{a_b}^{r_b}$ is indexed by these coordinates, after choosing
an arbitrary ordering of $\Pi_b$.  Thus, identifying
$\{0,1\}^{r_b}$ with $\{0,1\}^{\Pi_b}$, we have:
\begin{equation*}
\supp(V)
=
\left\{
x\in\{0,1\}^{\Pi}:
\bigoplus_{i\in \Pi_b}x_i=a_b
\text{ for every }b
\right\},
\qquad
\Pi:=\bigcup_{b=1}^q \Pi_b.
\end{equation*}
Then $V$ is uniform on $S$, and $|S|=\prod_b 2^{r_b-1}=2^D$ (since each linear constraint on $x\in \{0, 1\}^{r_b}$ is satisfied by exactly half the variables). 
Since $V$ is supported on $S$,  we have $V(S)=1$ and $V(S^c)=0$. Since $\supp(W)\subseteq S$, we also have $W(S)=1$ and $W(S^c)=0$. Therefore, we have:
\begin{align}
\label{eq:upperboundonPSc}
P(S^c)
=
|P(S^c)-{W}(S^c)|
\le
\textsf{TV}(P, {W})
\le
\eta. 
\end{align}
Next,  since $P$ and ${W}$ are distributions on $\{0, 1\}^N$, Lemma~\ref{lem:fannes} implies:
\begin{equation*}
|H(P)-H({W})|
\le
\eta\cdot\log(2^N-1)+h_2(\eta)
<
\eta N+h_2(\eta), 
\end{equation*}
where $h_2(\cdot)$ denotes the binary entropy function. Therefore, we have:
\begin{align}
\label{eq:lowerboundonH(P)}   
H(P)
\ge
H({W})-O(\eta N)-O(1){\;\geq\; D-\kappa-O(\eta N)-O(1)}\geq (1-c_1\eta)D{\,-\,\kappa}-O(1), 
\end{align}
where the second inequality uses the fact that $H(W)\geq D-\kappa$ (an equality when $W=V$, since $V$ is uniform on a set $S$ of size $2^D$), and the last inequality uses the fact that $N=\sum_b r_b\leq 2\sum_b(r_b-1)=2D$. On the other hand, we can sample from $P$ by first choosing $h\in [K]$ and then sampling $Q_h$; therefore, we have:
\begin{align}
\label{eq:firstupperboundonHP}
  H(P)\le \log K+\E_h H(Q_h). 
\end{align}
Let $\alpha_h:=Q_h(S^c)$,  
so that $\alpha_h$ is the probability that the product distribution $Q_h$ violates at least one
of the parity constraints defining $S$.  Since $P=\sum_{h=1}^K \lambda_h Q_h$, 
we have:
\begin{align}
    \label{eq:expectationonh}
\eta\geq P(S^c)
=
\sum_h \lambda_h \alpha_h
=
\E_h[\alpha_h], 
\end{align}
where the first inequality used Eq.~\eqref{eq:upperboundonPSc}.  Hence  $\E_h[\alpha_h]\le \eta$.  Fix parameters
$\tau=10^{-3}$ and $\eta=1/20(c_1+2\cdot 10^3)$, where $c_1$ is the constant that appears in Eq.~\eqref{eq:lowerboundonH(P)}.  
We now split the mixture components into ``good'' and ``bad'' components according to whether $\alpha_h\le \tau$ or $\alpha_h\geq \tau$.  

\paragraph{Bad components. } By Markov's inequality on Eq.~\eqref{eq:expectationonh}, first observe that:
\begin{equation*}
\Pr_h[\alpha_h>\tau]
\le
\E_h[\alpha_h]/{\tau}
\le
\eta/\tau. 
\end{equation*}
So, the total weight over all possible $h$ on the bad components is at most $\eta/\tau$.  Now every distribution on $\{0, 1\}^N$ has entropy at most $N$. Hence, the entropy contributed by each of the bad components to the overall average entropy $\E_h[H(Q_h)]$ is at most $(\eta/\tau)\cdot N \leq 2\eta/\tau\cdot D$ (where we have used the fact that $N\leq 2D$, established earlier).  

\paragraph{Good components. } It remains to bound the entropy of the good components, namely, those satisfying
$
Q(S^c)\le \tau. 
$
Fix such a product distribution
$
Q=\prod_i Q_i,
$
and for each block $b$, let $\varepsilon_b$ denote the probability that $Q$ violates the parity
constraint in block $b$.  Equivalently:
$$
1-\varepsilon_b
=
Q\!\left[
\bigoplus_{i\in \Pi^b}x_i=a_b
\right]. 
$$
Since $Q$ is a product across the coordinates, and the blocks $\Pi^b$ are disjoint, this implies the parity events across different blocks are actually independent. Therefore, we have:
\begin{equation*}
Q(S)
=
\prod_b (1-\varepsilon_b). 
\end{equation*}
Since $Q(S^c)\le\tau$, we have $Q(S)\ge 1-\tau$, i.e., $\prod_b(1-\varepsilon_b)\ge 1-\tau$. Again, by taking logarithms and using the identity $-\log(1-u)\geq u$ for $u\in [0, 1)$,  we get that:
\begin{equation*}
\sum_b \varepsilon_b
\le
-\log(1-\tau)\leq 2\tau, 
\end{equation*}
using the fact that $\tau$ is a sufficiently small constant (e.g., our choice of $10^{-3}$).  We now analyze a single block $\Pi^b$. 
Write $p_i:=Q_i(1)$, and $t_i:=\min\{p_i, 1-p_i\}$.
Recall that for independent Bernoulli variables, $X_i\sim\mathrm{Bernoulli}(p_i)$, we have:
\begin{equation*}
\Pr\!\left[\bigoplus_{i\in \Pi_b}X_i=a_b\right]
=
\frac12\left(1+(-1)^{a_b}\prod_{i\in \Pi_b}(1-2p_i)\right).
\end{equation*}
which follows from the fact that:
\begin{equation*}
\E\left[(-1)^{\oplus_{i\in \Pi_b}X_i}\right]
=
\prod_{i\in \Pi_b}\E[(-1)^{X_i}]
=
\prod_{i\in \Pi_b}(1-2p_i).
\end{equation*}
Since the left hand side is precisely $1-\varepsilon_b$, we have that
\begin{equation*}
1-\varepsilon_b = \prod_{i \in \Pi_b}(1-2t_i) \geq 1 - 2 \sum_{i \in \Pi_b} t_i,
\end{equation*}
where the second inequality uses the identity
$\prod_i(1-a_i)\ge 1-\sum_i a_i$ for $a_i\in[0,1]$. Taking logarithms on both sides and using the identity $- \log(1-u) \geq u$ for $u \in [0,1)$ again as before, we obtain
\begin{equation*}
\sum_{i\in \Pi^b} t_i
\le
C\varepsilon_b
\end{equation*}
for some absolute constant $C$ provided $\varepsilon_b\le1/10$.  Hence,  the entropy of the block equals:
\begin{equation*}
\sum_b H(Q_{\Pi^b})
=
\sum_b\sum_{i\in \Pi^b} h_2(p_i)
=
\sum_b\sum_{i\in \Pi^b} h_2(t_i)\leq \sum_b\sum_{i\in \Pi^b} t_i\log\frac et_i\leq C\sum_b\varepsilon_b\log\frac{er_b}{\varepsilon_b}\leq C\cdot \tau \log eN/\tau, 
\end{equation*}
using $h_2(u)=h_2(1-u)$ (by definition of binary entropy), and also the concavity of the map $u\mapsto u\log(e/u)$, and then finally the fact that $\sum_b e_b\leq \tau$. Therefore, every good component satisfies $H(Q_h)\le C\tau\log\frac{eN}{\tau}$.  

Combining the good and bad components, we obtain:
\begin{equation*}
\E_h[H(Q_h)]
\le
C\tau\log ({eN}/{\tau})
+
{2\eta D}/{\tau}. 
\end{equation*}
Finally,  using Eq.~\eqref{eq:firstupperboundonHP} and Eq.~\eqref{eq:lowerboundonH(P)},  we get:
\begin{align*}
(1-c\eta)D{\,-\,\kappa}-O(1)\leq H(P)&\le \log K+\E_h[H(Q_h)] \leq \log K+ C\tau\log\frac{eN}{\tau}
+\frac{2\eta}{\tau}D,
\end{align*}
which lets us conclude that:
\begin{align}
\log K
&\ge
\left(
1-c_1\eta-{2\eta }/{\tau}
\right)D
{\,-\,\kappa}
-
C\tau\log({eN}/{\tau})
-
O(1)\\
&\geq \left(
1-c_1\eta-{2\eta }/{\tau}
\right)D
{\,-\,\kappa}
-
C\tau\log({2eD}/{\tau})
-
O(1)\\
&\geq c_3 D
-\kappa -c_4,
\end{align}
using the fact that $\eta$ and $\tau$ were chosen to be constants, $c_3$ and $c_4$ are new constants, and $D$ is sufficiently large. The theorem follows by relabeling the constants $c_3,c_4$ as $c_1,c_2$ respectively.
\end{proof}

\subsubsection{$\DLM$s for $D_\blk$ induce high entropy distributions}

\paragraph{Notation.} We first define some necessary and convenient notation. Let $\eta:=\TV(P_Y,D_{\blk})$.
\begin{itemize}
\item $P_Y$ is the \emph{output} distribution of the $\DLM$: the law of the final $n$-bit string, ignoring the workspace. By assumption,
$\TV(P_Y,D_{\blk})\le\eta$.
\item $\mathbf{P}$ is the distribution of the \emph{entire run}: every revealed token
(workspace and output), in order. 
\item Let $\widehat{D}=D_{\blk}(\cdot\mid A)$ for $A=\supp(P_Y)$, i.e., the induced distribution of $D_{\blk}$ conditioned on the bits in $A$ having been revealed by $P_Y$ already. %One can $\TV(\widehat{D},D_{\blk})\le\eta$.
\item Let $Y$ denote the final output string of the $\DLM$. Define a 
distribution $\mathbf Q$ over executions\footnote{We remark that this distribution is not physically motivated but will be useful for our calculations.} by first sampling
$y\sim \widehat D$, then sampling an execution from the original law
$\mathbf P$ conditioned on~$Y=y$:
\begin{align}
\label{eq:defnofQintermsofhatD}
\mathbf Q(\cdot)
:=
\sum_y \widehat D(y)\,\mathbf P(\cdot\mid Y=y).
\end{align}
\end{itemize}
Furthermore, suppose the final $n$ coordinates of the $\DLM$ run were $a$ and the remaining tokens were $y$ (i.e., including the revealed output coordinates and their assignments). Then, we have:
\begin{align}
\label{eq:defnofPandQ}
\mathbf{P}(\text{output=a,rest=y})&=P_Y(a)\cdot \mathbf{P}(\text{rest}=y|\text{output}=a)\\
\mathbf{Q}(\text{output=a,rest=y})&=\widehat{D}(a)\cdot \mathbf{P}(\text{rest}=y|\text{output}=a)
\end{align}
by the definitions above. Observe that $\mathbf{Q}\leq \mathbf{P}$.{\footnote{By this we mean every event of
$\mathbf{P}$-probability zero also has $\mathbf{Q}$-probability zero: $\mathbf{Q}$
never produces a trajectory the $\DLM$ could not.}} We work under the reweighted law $\mathbf Q$ rather than with $\mathbf P$ directly, specifically in order to make the target constraints exact. Namely, $\mathbf Q$ preserves the $\DLM$'s conditional execution law given the final output, while replacing the output marginal by $\widehat D$, so that each completed block has the exact parity structure of $D_{\blk}$.

An \emph{execution} $\mathcal{E}=(V_1,\dots,V_t)$ is the ordered record of token
values revealed during the run (workspace and output). Since the schedule is a
deterministic function of the current state, the positions of these values are determined by the values themselves, and so $\mathcal{E}$ determines the run completely. Furthermore, we define the \emph{history} of the run to be $\mathcal{F}_t:=(V_1,\dots,V_{t-1})$ as the $\DLM$ configuration at the start of round $t$. Throughout this section, we define a \emph{realizable history} $F$ as one wherein  $\mathbf Q(\mathcal F_t=F)>0$.  Given $\mathcal{F}_t=F$ (note that $F$ will often be suppressed when it is unambiguous from the context), we define the following: 
\begin{enumerate}
\item We write $\Pi_t(F)\subseteq[n]$ for the \emph{output} coordinates revealed in round $t$; sometimes we also write $\Pi^b_t$ to denote the output coordinates of block $b$ revealed in round $t$; furthermore, define  $r_b:=|\Pi^b_t|$.
\item We write $C_t(F)$ for the blocks \emph{completed} in round $t$ (i.e., coordinates revealed during round $t$ that had not been revealed at the end of round $t-1$).
\item Let $c_t:=|C_t(F)|$ be the number of blocks $b$ completed during round $t$.
 \item We write
$
O_t\in\{0,1\}^{\Pi_t(F)}
$
for the vector of output bits generated on these coordinates in round $t$.
Thus, $O_t$ consists of the output components of the entire round record $V_t$.
\item  We write $a_b(F)\in\{0,1\}$
for the parity of its values revealed already; and define:
\begin{equation*}
S_{F}:=\Big\{z\in\{0,1\}^{\Pi_t}:\ \bigoplus_{i\in \Pi^b}z_i=a_b(F)
\ \text{ for every } b\in C_t\Big\},
\qquad
\log|S_{F}|=|\Pi_t|-c_t.
\end{equation*}
\end{enumerate} 
In the basic model (Steps 1--3) every output coordinate is revealed exactly once (we will discuss the remasking/revision later, in Section~\ref{sec:revisionremaskingLB}). 

\paragraph{Entropy counting in $\DLM$s.} The following observation is the basic entropy accounting behind the proof. Fix a round $t$ and condition on a realized history $F$ before this round. Then, the schedule fixes the set $\Pi_t(F)$ of output coordinates generated before round $t$ (since it was deterministic by definition). Thus, the only remaining randomness in this round is the vector
$
O_t\in\{0,1\}^{\Pi_t(F)}
$
of newly generated output bits during round $t$. Now consider a block $b$ that is completed in this round, e.g.,
$b\in C_t(F)$. All coordinates of block $b$ outside $\Pi_t^b(F)$ have already been revealed in the history $F$. Since the output marginal of $\mathbf Q$ is $\widehat D$ (by Eq.~\eqref{eq:defnofQintermsofhatD}), and $\widehat D$ is supported on valid $D_{\blk}$ outputs, the final block must satisfy the corresponding parity constraint. Therefore, the newly generated
bits in $\Pi_t^b(F)$ cannot be arbitrary: they must satisfy an affine parity constraint determined by the bits of block $b$ already present in $F$.

For each completed block $b\in C_t(F)$, the parity condition in block $b$
imposes one affine constraint on the newly generated coordinates
$\Pi_t^b(F)$:
\begin{equation*}
\bigoplus_{i\in \Pi_t^b(F)} Y_{t,i}
=
\bigoplus_{i\in B_b\setminus \Pi_t^b(F)} y_i,
\end{equation*}
where the right-hand side is fixed by the history $F$. Since distinct blocks
involve disjoint sets of coordinates, these constraints are independent.
Thus, the possible values of $O_t$ are contained in an affine subspace:
\begin{equation*}
S_F\subseteq\{0,1\}^{\Pi_t(F)}
\end{equation*}
of codimension $c_t(F)=|C_t(F)|$, and therefore:
\begin{equation*}
|S_F|=2^{|\Pi_t(F)|-c_t(F)}.    
\end{equation*}
It follows that since $O_t\in S_F$, we have:
\begin{equation*}
H_{\mathbf Q}(O_t\mid \mathcal F_t=F)
\le
\log |S_F|
=
|\Pi_t(F)|-c_t(F),
\end{equation*}
where $H_{\mathbf Q}(O_t\mid \mathcal F_t=F)$ denotes the Shannon entropy of the newly revealed coordinates in round $t$ under the conditional distribution
$\mathbf Q(\,\cdot\mid \mathcal F_t=F)$. We therefore define the \emph{entropy deficit} in round $t$ on history $F$ as:
\begin{align}
\label{eq:defnofdelta}
\delta_t(F)
:=
\big(|\Pi_t(F)|-c_t(F)\big)
-
H_{\mathbf Q}(O_t\mid \mathcal F_t=F).
\end{align}
The preceding discussion shows that $\delta_t(F)\ge 0$. Intuitively, $\delta_t(F)$ measures exactly how much entropy the actual conditional law of the newly generated bits loses beyond the entropy loss forced by the parity constraints of the blocks completed in this round. We now formally state a lemma about this entropy deficit.  We use calligraphic letters for random objects and plain lowercase letters for
their realizations.  For notational convenience, we use $\mathcal F_t$ to denote the random history before
round $t$, while $F$ denotes a fixed realized value of this history.   Similarly,  $\mathcal E$ is the random execution, while $e$ is a realized execution.
\begin{lemma}
\label{lem:ledger}
Suppose that $\TV(P_Y,D_{\blk})\le\eta\le1/2$. Then, we have:
\begin{enumerate}[$(a)$]
\item $\TV(\mathbf{P},\mathbf{Q})\le2\eta$;
\item $\displaystyle\sum_{t=1}^{T}\E_{\mathbf{Q}}
\big[\delta_t(\mathcal{F}_t)\big]\ \le\ s+2\eta$.
\end{enumerate}
\end{lemma}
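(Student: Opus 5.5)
The two parts are proved separately. Part $(a)$ is a coupling-free comparison of output marginals. Part $(b)$ is a chain-rule entropy ledger under $\mathbf Q$, which we compare against the entropy of the final output $Y$.

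\textbf{Part (a).} By Eq.~\eqref{eq:defnofPandQ}, $\mathbf P$ and $\mathbf Q$ have the same conditional law of the execution given the output. They differ only in the output marginal, namely $P_Y$ versus $\widehat D$. Writing the total variation as a sum over pairs (output, rest) and factoring out the common conditional gives
\[
\TV(\mathbf P,\mathbf Q)=\TV(P_Y,\widehat D).
\]
This is well defined, because $\widehat D$ is supported on $A=\supp(P_Y)$, so $\mathbf P(\cdot\mid Y=y)$ is defined whenever $\widehat D(y)>0$.

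We then use the triangle inequality through $D_{\blk}$. Since $P_Y(A^c)=0$, we have
\[
D_{\blk}(A^c)=|D_{\blk}(A^c)-P_Y(A^c)|\le \eta .
\]
Conditioning a distribution on an event $A$ moves it by exactly the mass of the complement, so
\[
\TV(\widehat D,D_{\blk})=D_{\blk}(A^c)\le\eta .
\]
Combining, $\TV(P_Y,\widehat D)\le \TV(P_Y,D_{\blk})+\TV(D_{\blk},\widehat D)\le 2\eta$.

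\textbf{Part (b), upper bound on $H_{\mathbf Q}(\mathcal E)$.} The schedule is a deterministic function of the current state, so the execution $\mathcal E=(V_1,\dots,V_T)$ determines the whole run, including which positions are revealed. Split each round record as $V_t=(O_t,W_t)$, where $W_t$ is the workspace part. The chain rule gives
\[
H_{\mathbf Q}(\mathcal E)=\sum_t \Big(H_{\mathbf Q}(O_t\mid\mathcal F_t)+H_{\mathbf Q}(W_t\mid \mathcal F_t,O_t)\Big).
\]
We bound the two families of terms separately.
\begin{itemize}
\item By Eq.~\eqref{eq:defnofdelta}, $H_{\mathbf Q}(O_t\mid\mathcal F_t)=\E_{\mathbf Q}\big[|\Pi_t|-c_t-\delta_t\big]$. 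In the basic model every output coordinate is revealed exactly once and every block is completed exactly once. Hence $\sum_t|\Pi_t|=n$ and $\sum_t c_t=M$ pointwise along every execution.
\item The workspace terms together are at most the entropy of the full workspace content. Since the workspace consists of at most $s$ binary tokens, this is at most $s$. Here we use that workspace cells are written once; a variable schedule only shrinks the support.
\end{itemize}
Therefore
\[
H_{\mathbf Q}(\mathcal E)\le n-M-\sum_t\E_{\mathbf Q}[\delta_t]+s .
\]

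\textbf{Part (b), lower bound on $H_{\mathbf Q}(\mathcal E)$.} The output $Y$ is a function of $\mathcal E$, so $H_{\mathbf Q}(\mathcal E)\ge H_{\mathbf Q}(Y)=H(\widehat D)$. Now $\widehat D$ is uniform on $\supp(D_{\blk})\cap A$, and this set has size
\[
D_{\blk}(A)\cdot 2^{n-M}\ge(1-\eta)2^{n-M}.
\]
Hence $H(\widehat D)\ge n-M+\log(1-\eta)\ge n-M-2\eta$, using $\log_2(1-\eta)\ge-2\eta$ for $\eta\le 1/2$. Comparing the upper and lower bounds yields $\sum_t\E_{\mathbf Q}[\delta_t(\mathcal F_t)]\le s+2\eta$.

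\textbf{Main obstacle.} The delicate step is the bookkeeping identity $\sum_t|\Pi_t|=n$, $\sum_t c_t=M$, together with the claim that $\delta_t\ge0$ holds pathwise under $\mathbf Q$. Both rely on $\mathbf Q$ being supported on executions whose final output is a valid $D_{\blk}$ string. That is guaranteed because $\widehat D$ is supported on $\supp(D_{\blk})$, and because $\mathbf Q\ll\mathbf P$ rules out any trajectory the $\DLM$ could not produce.
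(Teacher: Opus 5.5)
Your proposal is correct and follows the paper's proof step for step. Part $(a)$ uses $\TV(\mathbf P,\mathbf Q)=\TV(P_Y,\widehat D)$ plus the triangle inequality through $D_{\blk}$, and part $(b)$ uses the chain rule under $\mathbf Q$, the identity $\sum_t(|\Pi_t|-c_t)=n-M$ together with $H(\widehat D)\ge n-M-2\eta$; you usefully make explicit that this identity holds $\mathbf Q$-a.s.\ because $\widehat D$ is supported on valid strings, which the paper leaves implicit. The one step you state tersely, bounding $\sum_t H_{\mathbf Q}(W_t\mid\mathcal F_t,O_t)$ by the entropy of the final workspace, is true but is most cleanly justified by the per-round count the paper uses: given $(\mathcal F_t,O_t)$ the workspace positions written in round $t$ are fixed, so round $t$ contributes at most $w_t(\mathcal F_t)$ bits, and $\sum_t w_t\le s$ pathwise because each cell is written once (this also avoids worrying about workspace cells left masked).
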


\begin{proof}
% Absolute continuity and (b) are immediate from the definition:
% $$\mathbf{Q}(Y=y)=\sum_{e:Y(e)=y}\mathbf{P}(e)\,\widehat{D}(y)/P_Y(y)=\widehat{D}(y),
% $$
% and $\supp(\widehat{D})\subseteq\supp(D_{\blk})$.
$(a)$  Since $A=\supp(P_Y)$, we have $P_Y(A^c)=0$, and therefore:
\begin{equation*}
\alpha:=D_{\blk}(A^c)=\big|D_{\blk}(A^c)-P_Y(A^c)\big|\le\TV(P_Y,D_{\blk})\le\eta.
\end{equation*}
Next, $\widehat{D}(y)=D_{\blk}(y)/(1-\alpha)$ for $y\in A$ and $\widehat{D}(y)=0$
otherwise, so:
\begin{equation*}
\TV(\widehat{D},D_{\blk})
=\frac12\sum_{y\in A}D_{\blk}(y)\Big(\frac{1}{1-\alpha}-1\Big)
+\frac12\sum_{y\notin A}D_{\blk}(y)
=\frac12\Big[(1-\alpha)\cdot\frac{\alpha}{1-\alpha}+\alpha\Big]
=\alpha.
\end{equation*}
We now claim that, in fact, $\TV(\mathbf{P},\mathbf{Q})=\TV(P_Y,\widehat{D})$. Every execution $e$ with $\mathbf{P}(e)>0$ satisfies $Y(e)\in A$, and for each such $e$, we may factorize as follows:
\begin{equation*}
\mathbf{P}(e)=P_Y(Y(e))\cdot\mathbf{P}\big(e\mid Y=Y(e)\big),
\qquad
\mathbf{Q}(e)=\widehat{D}(Y(e))\cdot\mathbf{P}\big(e\mid Y=Y(e)\big),
\end{equation*}
i.e., the two laws share the same conditional distribution of the execution given the output, and differ only in the distribution of the output itself. Noting that both laws vanish on executions with
$Y(e)\notin A$, we can now grouping
executions according to their output:
\begin{align*}
2\,\TV(\mathbf{P},\mathbf{Q})
=\sum_{e}\big|\mathbf{P}(e)-\mathbf{Q}(e)\big|
&=\sum_{y\in A}\,\big|P_Y(y)-\widehat{D}(y)\big|
\sum_{e:\,Y(e)=y}\mathbf{P}\big(e\mid Y=y\big)\\
&=\sum_{y\in A}\big|P_Y(y)-\widehat{D}(y)\big|\\
&=2\,\TV(P_Y,\widehat{D}),
\end{align*}
using $\sum_{e:Y(e)=y}\mathbf{P}(e\mid Y=y)=1$ for each $y\in A$. By the triangle
inequality, we now get:
\begin{equation*}
\TV(\mathbf{P},\mathbf{Q})
=\TV(P_Y,\widehat{D})
\le\TV(P_Y,D_{\blk})+\TV(D_{\blk},\widehat{D})
\le\eta+\alpha\le2\eta.
\end{equation*}
$(b)$ Recall that an execution is the tuple $\mathcal E=(V_1,\ldots,V_T)$, where
$V_t$ records all values generated in round $t$. Since the execution is specified by the round records $(V_1,\ldots,V_T)$,
applying the chain rule gives us:
\begin{equation*}
H_{\mathbf Q}(V_1,\ldots,V_T)
=
\sum_{t=1}^T H_{\mathbf Q}(V_t\mid V_{<t})
=
\sum_{t=1}^T H_{\mathbf Q}(V_t\mid \mathcal F_t),
\end{equation*}
since the pre-round history $\mathcal F_t$ is precisely the transcript determined by
$V_{<t}$.

We first lower bound this entropy. Suppose the final output of the $\DLM$ after $T$ rounds is $Y$, which is a deterministic
function of the execution $\mathcal E$. By the definition of the reweighted execution distribution $\mathbf Q$, if $\mathcal E\sim\mathbf Q$, then $Y$ follows distribution $\widehat D$. Since the final output $Y$ is a deterministic function of the execution, the full execution must contain at least as much randomness as $Y$ does. Indeed, under $\mathbf Q$, the output distribution is $\widehat D$, and since entropy cannot increase under a deterministic function, we get:
\begin{equation}
\label{eq:lowerboundusedlater}
H_{\mathbf Q}(V_1,\ldots,V_T)
\ge
H_{\mathbf Q}(Y)
=
H(\widehat D).
\end{equation}
Writing $\alpha:=D_{\blk}(A^c)$, every atom of $\widehat D=D_{\blk}(\cdot\mid A)$
has mass:
\begin{equation*}
\widehat D(y)=\frac{D_{\blk}(y)}{1-\alpha}\le \frac{2^{-(n-M)}}{1-\alpha}.
\end{equation*}
Thus, we get our desired lower bound:
\begin{align}
\label{eq:lowerboundonhatD}
H(\widehat D)
\ge
(n-M)-\log\frac{1}{1-\alpha}
\ge
(n-M)-\log\frac{1}{1-\eta}
\ge
(n-M)-2\eta,
\end{align}
where we used $\alpha\le\eta\le 1/2$.

We now upper bound the entropy. Fix a round $t$ and a realizable history $F$. The scheduling rule fixes the
coordinates revealed in this round. Let $W_t$ be the workspace part of $V_t$, and let $O_t$ be the output part, supported on the coordinate set $\Pi_t(F)$.
If $w_t(F)$ workspace coordinates are generated in this round, then we have:
\begin{equation*}
H_{\mathbf Q}(W_t\mid \mathcal F_t=F)\le w_t(F).
\end{equation*}
On the other hand, by the definition of the round-$t$ deficit (Eq.~\ref{eq:defnofdelta}), we also have:
\begin{equation*}
H_{\mathbf Q}(O_t\mid \mathcal F_t=F)= 
|\Pi_t(F)|-c_t(F)-\delta_t(F).
\end{equation*}
By subadditivity, we have:
\begin{equation*}
H_{\mathbf Q}(V_t\mid \mathcal F_t=F)=H_{\mathbf Q}((W_t,O_t)\mid \mathcal F_t=F)
\le
w_t(F)
+
|\Pi_t(F)|-c_t(F)-\delta_t(F).
\end{equation*}
Taking expectations and summing over the rounds, the chain rule gives us:
\begin{equation*}
H_{\mathbf Q}(V_1,\ldots,V_T)
=
\sum_{t=1}^T H_{\mathbf Q}(V_t\mid \mathcal F_t).
\end{equation*}
For a fixed history $F$, the round-$t$ record consists of the workspace update,
using at most $w_t(F)$ bits of entropy, together with the newly revealed output
values $O_t$. Hence:
\begin{equation*}
H_{\mathbf Q}(V_t\mid \mathcal F_t=F)
\le
w_t(F)+H_{\mathbf Q}(O_t\mid \mathcal F_t=F).
\end{equation*}
By the definition of the round deficit,
summing Eq.~\eqref{eq:defnofdelta} over the rounds, we therefore have:
\begin{equation*}
H_{\mathbf Q}(V_1,\ldots,V_T)
\le
\E_{\mathbf Q}\sum_{t=1}^T w_t(\mathcal F_t)
+
\E_{\mathbf Q}\sum_{t=1}^T
\big(|\Pi_t(\mathcal F_t)|-c_t(\mathcal F_t)\big)
-
\sum_{t=1}^T\E_{\mathbf Q}\delta_t(\mathcal F_t).
\end{equation*}
On the other hand, by Eq.~\eqref{eq:lowerboundusedlater}, we have:
%since the final output is a deterministic function of the execution and has distribution $\widehat D$ under $\mathbf Q$, we also have:
\begin{equation*}
H_{\mathbf Q}(V_1,\ldots,V_T)
\ge
H(\widehat D).
\end{equation*}
Combining the two displays and using the lower bound on $H(\widehat D)$ gives us:
\begin{equation*}
\sum_{t=1}^T\E_{\mathbf Q}[\delta_t(\mathcal F_t)]
\le
\E_{\mathbf Q}[\sum_{t=1}^T w_t(\mathcal F_t)]+2\eta.
\end{equation*}
Finally, since the workspace has at most $|\Sigma|^s = 2^s$ possible states, we also have
$
\E_{\mathbf Q}\sum_{t=1}^T w_t(\mathcal F_t)
\le
s,
$
and hence:
\begin{equation*}
\sum_{t=1}^T\E_{\mathbf Q}[\delta_t(\mathcal F_t)]
\le
s+2\eta,
\end{equation*}
as claimed.
\end{proof}
Overall, the lemma above gives us a global upper bound of $s$ on the entropy deficit. The next goal is to show a similar statement for some individual block as well, which we wish to eventually use to embed our hard parity function computation. To this end, for a block $b$, define the quantity $\mathrm{def}_b$ to measure how much entropy block $b$ loses over the whole run, beyond the one bit of entropy that must be lost when the block is completed because of its parity constraint. Using the lemma above, we will show that the extra entropy loss beyond the unavoidable parity constraints is small on average \emph{across} blocks; this will be used later to show the existence of a block on which the $\DLM$ performed a last-bit parity computation.

\begin{corollary}
\label{cor:blockdeficit}
Fix $b\in[M]$.  For a round $t$ and a realized pre-round history $F$,
let $n_{b,t}(F)$ denote the number of block-$b$ output coordinates revealed
in round $t$, let $Y_t^b$ denote their values, and define:
\begin{equation*}
d_{b,t}(F)
:=
\underbrace{n_{b,t}(F)-\mathbf 1[b\in C_t(F)]}_{\text{entropy block $b$ is entitled to this round}}
-
\underbrace{H_{\mathbf Q}\bigl(Y_t^b\mid \mathcal F_t=F\bigr)}_{\text{entropy it delivers}}
\ge 0.
\end{equation*}
Writing:
\begin{equation*}
    \mathrm{def}_b
    :=
    \sum_{t=1}^{T}
    \E_{\mathbf Q}\bigl[d_{b,t}(\mathcal F_t)\bigr]
\end{equation*}
for the total deficit of block $b$ over the run, we have
$  \sum_{b=1}^{M}\mathrm{def}_b\le s+2\eta.$
\end{corollary}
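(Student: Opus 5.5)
The plan is to reduce Corollary~\ref{cor:blockdeficit} to Lemma~\ref{lem:ledger}$(b)$ in two moves: decompose the round deficit $\delta_t(F)$ into per-block pieces, and check that each piece is nonnegative.

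\textbf{Step 1: the round deficit dominates the sum of block deficits.} Fix a round $t$ and a realizable history $F$. Conditioned on $\mathcal F_t=F$, the round-$t$ output vector $O_t$ is exactly the concatenation of the block vectors $(Y_t^b)_{b\in[M]}$, since the blocks partition the output coordinates. Therefore
\begin{equation*}
|\Pi_t(F)|=\sum_b n_{b,t}(F),
\qquad
c_t(F)=\sum_b \mathbf 1[b\in C_t(F)].
\end{equation*}
By subadditivity of entropy under the conditional law $\mathbf Q(\cdot\mid\mathcal F_t=F)$,
\begin{equation*}
H_{\mathbf Q}(O_t\mid\mathcal F_t=F)\le\sum_b H_{\mathbf Q}(Y_t^b\mid\mathcal F_t=F).
\end{equation*}
Substituting into the definition of $\delta_t(F)$ in Eq.~\eqref{eq:defnofdelta} gives
\begin{equation*}
\delta_t(F)\ge\sum_b d_{b,t}(F).
\end{equation*}

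\textbf{Step 2: each block deficit is nonnegative.} We need $d_{b,t}(F)\ge 0$, which justifies the statement's claim and also guarantees that no block term can cancel another. There are two cases.
\begin{itemize}
\item If $b\notin C_t(F)$, the bound $H_{\mathbf Q}(Y_t^b\mid\mathcal F_t=F)\le n_{b,t}(F)$ is trivial, since $Y_t^b$ takes at most $2^{n_{b,t}(F)}$ values.
\item If $b\in C_t(F)$, then under $\mathbf Q$ the output marginal is $\widehat D$, which is supported on $\supp(D_{\blk})$. So $Y_t^b$ must satisfy the affine parity constraint fixed by the bits of block $b$ already revealed in $F$. This confines $Y_t^b$ to a set of size $2^{n_{b,t}(F)-1}$, exactly as in the entropy-counting paragraph preceding Lemma~\ref{lem:ledger}.
\end{itemize}

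\textbf{Step 3: sum and apply the lemma.} Take $\E_{\mathbf Q}$ over $\mathcal F_t$, sum over $t\in[T]$, and swap the finite sums over $b$ and $t$:
\begin{equation*}
\sum_b\mathrm{def}_b=\sum_t\E_{\mathbf Q}\Bigl[\sum_b d_{b,t}(\mathcal F_t)\Bigr]\le\sum_t\E_{\mathbf Q}[\delta_t(\mathcal F_t)]\le s+2\eta,
\end{equation*}
where the last inequality is Lemma~\ref{lem:ledger}$(b)$.

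\textbf{Where care is needed.} The only delicate point is Step~2 in the completed-block case. One must verify that the constraint holds almost surely under $\mathbf Q$ given $F$, not merely under $D_{\blk}$. This follows because $\mathbf Q$ never produces an output outside $\supp(\widehat D)\subseteq\supp(D_{\blk})$, and because in the basic model with no revision the bits of $b$ revealed in $F$ are final.
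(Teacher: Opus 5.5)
Your proposal is correct and follows essentially the same route as the paper. Subadditivity across blocks gives $\sum_b d_{b,t}(F)\le\delta_t(F)$; nonnegativity comes from the trivial bound and the affine-coset bound; summing over rounds and applying Lemma~\ref{lem:ledger}$(b)$ finishes the proof. (In the completed-block case you correctly get $d_{b,t}(F)\ge 0$, with equality when the law is uniform on the coset; the paper's text says $\ge 1$ there, which is a slip.)
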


\begin{proof}
Fix a round $t$ and a history $F=\mathcal F_t$. We first prove nonnegativity. If $b\notin C_t(F)$, then the newly revealed coordinates of
block $b$ in round $t$ form an $n_{b,t}(F)$-bit string. Hence:
\begin{equation*}
H_{\mathbf Q}(O_t^b\mid \mathcal F_t=F)\le n_{b,t}(F),
\end{equation*}
and so, $d_{b,t}(F)\ge 0$ by definition. On the other hand, if $b\in C_t(F)$, then round $t$ completes block $b$. The previously revealed
coordinates of block $b$ are already fixed by the history $F$, and the final
block must satisfy its parity constraint. 
Thus:
\[
H_{\mathbf Q}(O_t^b\mid\mathcal F_t=F)\le n_{b,t}(F)-1,
\]
and therefore
$d_{b,t}(F)\ge 1$,  
with equality whenever the newly revealed coordinates are uniformly distributed over the corresponding affine parity coset.
% Therefore the newly revealed string
% $O_t^b$ lies in an affine parity coset of size
% $
% 2^{n_{b,t}(F)-1}.
% $ Thus:
% \begin{equation*}
% H_{\mathbf Q}(O_t^b\mid \mathcal F_t=F)\le n_{b,t}(F)-1,
% \end{equation*}
% and again $d_{b,t}(F)\ge 0$ (indeed, it is at least $1$ 
 For fixed $F$, we have:
\begin{equation*}
\sum_b \left(n_{b,t}(F)-\mathbf 1[b\in C_t(F)]\right)
=
|\Pi_t(F)|-c_t(F).
\end{equation*}
Additionally, since $O_t$ is the collection of the blockwise strings $(O_t^b)_b$, subadditivity yields:
\begin{align*}
\sum_b d_{b,t}(F)
=
|\Pi_t(F)|-c_t(F)
-
\sum_b H_{\mathbf Q}(O_t^b\mid \mathcal F_t=F)\le
|\Pi_t(F)|-c_t(F)
-
H_{\mathbf Q}(O_t\mid \mathcal F_t=F)=
\delta_t(F).
\end{align*}
Taking expectations over $F=\mathcal F_t$, summing over $t$, and applying
Lemma~\ref{lem:ledger} gives us:
\begin{equation*}
\sum_b \mathrm{def}_b
=
\sum_{t=1}^T
\E_{\mathbf Q}\sum_b d_{b,t}(\mathcal F_t)
\le
\sum_{t=1}^T
\E_{\mathbf Q}\delta_t(\mathcal F_t)
\le
s +2\eta.
\end{equation*}
This concludes the proof.
\end{proof}

\subsection{Proof of the classical lower bound}
\subsubsection{Step 1: Setting up the basic model.}
In this section, our goal is to prove the following lemma:

\begin{lemma}
\label{lem:few-multibit}
We have:
\begin{equation*}
\E_{\mathbf P}[\#\{b:r_b\ge 2\}]
\le
O(s)
+
O(\eta M).
\end{equation*}
\end{lemma}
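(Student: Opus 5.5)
Here $r_b$ is the number of fresh block-$b$ coordinates revealed in $b$'s completion round. I would combine the per-round mixture structure with Theorem~\ref{thm:approximable}, and use the reweighted law $\mathbf Q$ so that the target conditional laws are exact. The bound is first proved under $\mathbf Q$ and then moved to $\mathbf P$ using Lemma~\ref{lem:ledger}(a). Since the count is at most $M$, this transfer costs at most $2\eta M$, which is where the $O(\eta M)$ term comes from.

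\textbf{Step 1: per-round reduction.} Fix a round $t$ and a realizable history $F$, and let $C_t^{\ge2}(F)$ be the blocks completed in round $t$ with $r_b\ge 2$. Restrict attention to the output bits $O_t$ on these blocks.

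Under $\mathbf P$, conditioned on the visible output history and on the hidden workspace contents $w$, the round-$t$ law is a product distribution $Q_{F,w}$. Averaging over the at most $2^{O(s)}$ workspace values gives a mixture $P_F$ of $K\le 2^{O(s)}$ product distributions. The right object to condition on is the visible output transcript, not the full history $F$, since $F$ already contains $w$.

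The comparison target is $W_F:=\mathbf Q(O_t\in\cdot\mid F)$. It is supported on the parity coset $S_F$. Its entropy is $D_F-\kappa_F$, where $D_F=\sum_{b\in C_t^{\ge2}}(r_b-1)$ and $\kappa_F$ is at most the block deficits $\sum_b d_{b,t}(F)$ from Corollary~\ref{cor:blockdeficit}.

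By Lemma~\ref{lem:conditional-tv-average} and Lemma~\ref{lem:ledger}(a), we have $\E_F\,\TV(P_F,W_F)\le 4\eta$. Hence, by Markov, for a $1-O(\eta)/\eta_0$ fraction of histories, $\TV(P_F,W_F)\le\eta_0$, where $\eta_0$ is the constant from Theorem~\ref{thm:approximable}.

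\textbf{Step 2: apply Theorem~\ref{thm:approximable}.} On these good histories, Theorem~\ref{thm:approximable} gives
\[
O(s)\ge\log K\ge c_1D_F-\kappa_F-c_2,
\]
so $|C_t^{\ge2}(F)|\le D_F\le O(s)+\kappa_F+O(1)$.

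On the bad histories, bound $|C_t^{\ge2}(F)|$ trivially by $M$. Their probability is $O(\eta)$ up to the constant $1/\eta_0$, so they contribute $O(\eta M)$.

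Now sum over the $T=O(1)$ rounds and take expectations. Corollary~\ref{cor:blockdeficit} gives $\sum_t\E\kappa_{\mathcal F_t}\le s+2\eta$. The additive $O(1)$ terms total $O(T)=O(1)$, which is absorbed since $s$ is large. Each block is completed exactly once, so summing $|C_t^{\ge2}|$ over rounds counts $\#\{b:r_b\ge2\}$. This gives the bound under $\mathbf Q$, and the transfer to $\mathbf P$ is as described above.

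\textbf{Main obstacle.} The delicate point is Step~1's conditioning. The product/mixture structure lives under $\mathbf P$ given the visible transcript, while the parity-exact target lives under $\mathbf Q$. One must condition both on the same $\sigma$-algebra: the visible output history, with the workspace marginalized. Then $\mathbf Q$'s conditional and $\mathbf P$'s conditional are close on average in total variation, and the $O(s)$ bits of workspace entropy appear as both the mixture count and slack in $\kappa$. Also, Theorem~\ref{thm:approximable} requires $D$ sufficiently large; small-$D$ rounds are handled trivially since then $|C_t^{\ge2}|\le D_F=O(1)$.
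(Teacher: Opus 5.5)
\textbf{Verdict.} Your route is the paper's, but one step fails as written. The route: condition on the visible output-and-schedule transcript, view the $\mathbf P$-conditional law of the fresh coordinates of multi-bit completed blocks as a mixture of at most $2^{s}$ products, and compare it with the parity-coset-supported $\mathbf Q$-conditional law. Then discard bad transcripts by Markov, apply Theorem~\ref{thm:approximable} on good ones, sum over rounds, and pay $2\eta M$ to return to $\mathbf P$.

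\textbf{The failing step.} You claim $\kappa_F\le\sum_b d_{b,t}(F)$ and then invoke Corollary~\ref{cor:blockdeficit} to get $\sum_t\E\,\kappa\le s+2\eta$. The inequality goes the other way. Subadditivity over the multi-bit completed blocks gives $H_{\mathbf Q}(O_t^{\ge2}\mid F)\le\sum_{b}H_{\mathbf Q}(O_t^{b}\mid F)$, hence $\sum_{b\in C_t^{\ge2}}d_{b,t}(F)\le\kappa_F$. Block deficits only see per-block marginals. Correlations across the simultaneously completed blocks can make $\kappa_F$ large while every $d_{b,t}(F)=0$. That joint entropy is exactly what Theorem~\ref{thm:approximable} needs controlled, so Corollary~\ref{cor:blockdeficit} cannot supply it.

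\textbf{The repair.} Use the round deficit $\delta_t$ instead, as the paper does.
\begin{itemize}
\item Condition on the full history $F$. Under $\mathbf Q$, single-bit completions in round $t$ are forced by parity. So the round-$t$ coordinates outside the multi-bit completed blocks have support of size at most $2^{|\Pi_t(F)|-c_t(F)-D_F}$.
\item Hence $H_{\mathbf Q}(O_t\mid F)\le H_{\mathbf Q}(O_t^{\ge2}\mid F)+|\Pi_t(F)|-c_t(F)-D_F$, which says $\kappa_F\le\delta_t(F)$.
\item Your comparison law is conditioned only on the visible transcript $\tau$, where $F=(\tau,w)$ and $D$ is determined by $\tau$. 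So you also need that conditioning reduces entropy: $H_{\mathbf Q}(O_t^{\ge2}\mid\tau)\ge\E_{\mathbf Q}[H_{\mathbf Q}(O_t^{\ge2}\mid\mathcal F_t)\mid\tau]$. This gives $\kappa_\tau\le\E_{\mathbf Q}[\delta_t(\mathcal F_t)\mid\tau]$.
\item Summing over rounds, Lemma~\ref{lem:ledger}(b) yields $\sum_t\E_{\mathbf Q}[\kappa]\le s+2\eta$. This is the bound you wanted from the corollary, and the rest of your argument then goes through unchanged.
\end{itemize}
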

\begin{proof}
We first work under $\mathbf Q$. Fix a round $t$ and an output-and-schedule transcript $\tau$. Let:
\begin{equation*}
\mathcal B_t^{(2)}(\tau)
:=
\{b\in C_t(\tau):|\Pi_t^b(\tau)|\ge 2\},
\qquad
D_t(\tau)
:=
\sum_{b\in \mathcal B_t^{(2)}(\tau)}
\big(|\Pi_t^b(\tau)|-1\big).
\end{equation*}
For a transcript $\tau$, let $\mu_\tau$ denote the conditional law under
$\mathbf Q$ of the newly revealed coordinates in the multi-bit completed blocks,
namely
$
\bigcup_{b\in\mathcal B_t^{(2)}(\tau)}\Pi_t^b(\tau).
$
Let $\nu_\tau$ denote the corresponding conditional law under $\mathbf P$ on
the same coordinates. 
We call $\tau$ \emph{good} if:
\begin{equation*}
\TV(\mu_\tau,\nu_\tau)\le \eta_0,
\end{equation*}
and \emph{bad} otherwise. Let $\mathsf{Bad}_t$ be the event that the round-$t$
transcript is not good.

Under $\mathbf P$, conditioned on $\tau$ and on a workspace state $w$, the round-$t$ denoising law is a coordinate-wise product. Averaging over the
possible workspace states, $\nu_\tau$ is therefore a mixture of at most
$
K\le |\Sigma|^s
$
product distributions.
On every good transcript $\tau$, this mixture is $\eta_0$-close to
$\mu_\tau$, which is supported on a product of affine parity cosets of total
dimension $D_t(\tau)$ and has entropy $H(\mu_\tau)$ which can be written as:
\begin{equation*}
H(\mu_\tau)=D_t(\tau)-\delta_t,
\end{equation*}
where $\delta_t$ is the total entropy deficit in round $t$. 
By Theorem~\ref{thm:approximable}, we have:
\begin{equation*}
\log K \ge c_3D_t(\tau)-\delta_t-C_0.
\end{equation*}
Since $\log K\le s$, every good transcript therefore satisfies:
\begin{align}
\label{eq:Dtaugood}
D_t(\tau)\le \frac{s+\delta_t+C_0}{c_3}.
\end{align}
% Moreover, by subadditivity and the definition of the round-$t$ entropy
% deficit,
% \begin{equation*}
% \delta_t
% \le
% \E_{\mathbf Q}[\delta_t(\mathcal F_t)\mid\tau].
% \end{equation*}
% Indeed, conditioned further on the workspace state, the remaining newly
% revealed coordinates have support dimension at most
% $|\Pi_t|-c_t-D_t(\tau)$.
By Lemma~2.1 and Lemma~\ref{lem:ledger}(a), the average value of
$\TV(\mu_\tau,\nu_\tau)$ over the round-$t$ transcript $\tau$ is at most
$4\eta$. Hence Markov's inequality gives us:
\begin{equation*}
\Pr_{\mathbf Q}[\mathsf{Bad}_t]
=
\Pr_{\mathbf Q}\big[\TV(\mu_\tau,\nu_\tau)>\eta_0\big]
\le
\frac{4\eta}{\eta_0}.
\end{equation*}
The key observation is that for the final counting bound, we only need the
number of multi-bit completions in round $t$, not their total dimension. Let
$N_t(\tau):=|\mathcal B_t^{(2)}(\tau)|$. Now, on a good transcript, we have $N_t(\tau)\le D_t(\tau)$, and hence, by
Eq.~\eqref{eq:Dtaugood}:
\begin{equation*}
N_t(\tau)\le \frac{s+\delta_t+C_0}{c_3}.
\end{equation*}
On a bad transcript, we can use the trivial bound $N_t(\tau)\le M$. Since
$\Pr_{\mathbf Q}[\mathsf{Bad}_t]\le 4\eta/\eta_0$, we obtain:
\begin{equation*}
\E_{\mathbf Q}[N_t]
\le
\frac{s+C_0}{c_3}
+
\frac{\E_{\mathbf Q}[\delta_t]}{c_3}
+
\frac{4\eta}{\eta_0}\cdot M.
\end{equation*}
Summing over the constant number of rounds and using
Lemma~\ref{lem:ledger}(b) yields:
\begin{align*}
\E_{\mathbf Q}\sum_{t=1}^T N_t
&\le
T\cdot\frac{s+C_0}{c_3}
+
\frac{1}{c_3}\sum_{t=1}^T\E_{\mathbf Q}[\delta_{\tau_t}]
+
\frac{4T\eta}{\eta_0}\cdot M\\
&\le
T\cdot\frac{s+C_0}{c_3}
+
\frac{s+2\eta}{c_3}
+
\frac{4T\eta}{\eta_0}\cdot M.
\end{align*}
Finally, observe that every block with $r_b\ge 2$ is counted \emph{exactly}
once in $\sum_t N_t$: in its completion round. Therefore,
$
|\{b:r_b\ge 2\}|=\sum_{t=1}^T N_t,
$
and hence:
\begin{equation*}
\E_{\mathbf Q}[|\{b:r_b\ge 2\}|]
\le
T\cdot\frac{s+C_0}{c_3}
+
\frac{s+2\eta}{c_3}
+
\frac{4T\eta}{\eta_0}\cdot M.
\end{equation*}
Since $|\{b:r_b\ge 2\}|\in[0,M]$ and
$\TV(\mathbf P,\mathbf Q)\le 2\eta$, we obtain:
\begin{equation*}
\E_{\mathbf P}[|\{b:r_b\ge 2\}|]
\le
\E_{\mathbf Q}[|\{b:r_b\ge 2\}|]+2\eta M,
\end{equation*}
which proves the claim (using the fact that $|\Sigma|=2$ as assumed in our
$\DLM$ definition).
\end{proof}

{{\subsubsection{Step 2: Final-bit prediction yields parity computation. }

\paragraph{Selecting a block.}
For a block $b$, let
$
E_b:=\{x\in\{0,1\}^B:\bigoplus_{i=1}^B x_i=0\}.
$
If $r_b=1$, let $J_b$ be the final revealed coordinate and $G_b$ the bit
generated there. Define:
\begin{equation*}
S_b:=
\{
r_b=1
\text{ and }
G_b=\bigoplus_{i\ne J_b}X_{b,i}
\}.
\end{equation*}
Let $P_b$ be the marginal distribution of the final output on block $b$. Since
$\TV(P_Y,D_{\blk})\le\eta$, data processing under the projection to block $b$ yields:
\begin{equation*}
\TV(P_b,U_B^\oplus)\le\eta.
\end{equation*}
In particular, since $U_B^\oplus$ is supported on $E_b$, we have
$
\Pr[X_b\notin E_b]\le \eta.
$ 
If $r_b=1$ and the final parity of block $b$ is correct, then the unique bit
generated in the completion round must equal the parity of the other $B-1$
bits. Thus, there are only two ways for $S_b$ to fail: either $b$ was not completed in the last-bit regime, or the final block has odd parity. Therefore, we get:
\begin{equation*}
S_b^c\subseteq \{r_b\ge2\}\cup
\{\bigoplus_{i=1}^B X_{b,i}=1\},
\end{equation*}
and hence:
\begin{equation*}
\Pr[S_b^c]\le \Pr[r_b\ge2]+\eta.
\end{equation*}
By Lemma~\ref{lem:few-multibit}, we have:
\begin{equation*}
\frac1M\sum_{b\in[M]}\Pr[r_b\ge2]
=\frac{\E[\#\{b:r_b\ge2\}]}{M}
\le O\Big(\frac{s}{M}\Big)+O(\sqrt{\eta})\ \le\ \frac1{100},
\end{equation*}
say, for $\eta$ bounded below by some constant depending only on $T$ and $n$ (recall that
$s=n^{0.9}=o(M)$). Hence, with $\Delta:=s+2\eta$, we have:
\begin{equation*}
\frac1M\sum_{b}\Pr[S_b^c]\ \le\ \frac1{100}+\eta\ \le\ \frac1{40},
\qquad\qquad
\frac1M\sum_b \mathrm{def}_b\ \le\ \frac{\Delta}{M}
\quad\text{(Corollary~\ref{cor:blockdeficit})}.
\end{equation*}
Hence it follows that at most $M/4$ blocks have $\Pr[S_b^c]>1/10$, and at most $M/4$ blocks have $\mathrm{def}_b>4\Delta/M$; hence there exists $b^\star$ satisfying:
\begin{equation*}
\Pr_{\mathbf{P}}[S_{b^\star}]\ \ge\ \tfrac9{10}
\qquad\text{and}\qquad
\mathrm{def}_{b^\star}\ \le\ \tfrac{4\Delta}{M}.
\end{equation*}
This $b^\star$ behaves essentially like an almost-ideal last-bit block. With high probability, the $\DLM$ leaves one coordinate until the end and fills it in as a function of the parity of all the other coordinates. Furthermore, the $\mathrm{def}_{b^\star}$ bound says that those earlier coordinates still have nearly full entropy, so this can be viewed as a parity-prediction task on \emph{nearly} random bits.

\paragraph{Replacing the pre-completion bits by uniform bits.}
The block $b^\star$ almost always behaves like a last-bit parity block. The only remaining issue is that the first $B-1$ bits of this block were generated by the $\DLM$ itself, whereas the parity lower bound is stated for uniform external inputs. To address this, we compare the original execution law $\mathbf P$ with a modified execution law $\mathbf B$: whenever an output coordinate of block $b^\star$ is revealed before the completion round of $b^\star$, its value is
replaced by an independent uniform bit. All other tokens, and in particular the one during the completion round of $b^\star$, are generated exactly as in the original $\DLM$. The bound for $b^\star$ implies that this replacement does not change the
execution law by much, which we formalize in the following lemma.

\begin{lemma}\label{lem:planting}
For the block $b^\star$ chosen above, we have:
\begin{equation*}
\Pr_{\mathbf B}[S_{b^\star}]
\ge
\Pr_{\mathbf P}[S_{b^\star}]
-
6T\eta
-
\sqrt{T\cdot\mathrm{def}_{b^\star}}.
\end{equation*}
\end{lemma}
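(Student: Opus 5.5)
We will bound $\TV(\mathbf P,\mathbf B)$ and then use $|\Pr_{\mathbf B}[S_{b^\star}]-\Pr_{\mathbf P}[S_{b^\star}]|\le \TV(\mathbf P,\mathbf B)$, since $S_{b^\star}$ is an event of the execution.

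The comparison goes through the reweighted law $\mathbf Q$, in three steps.

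First, apply the triangle inequality $\TV(\mathbf P,\mathbf B)\le \TV(\mathbf P,\mathbf Q)+\TV(\mathbf Q,\mathbf B)$. Lemma~\ref{lem:ledger}(a) bounds the first term by $2\eta$.

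Second, for the second term, write both $\mathbf Q$ and $\mathbf B$ as products of round-by-round conditional kernels given the history $\mathcal F_t$. The two laws use the same kernels for every token except the block-$b^\star$ output coordinates revealed before its completion round. On those coordinates $\mathbf B$ uses the uniform law, while $\mathbf Q$ uses its own conditional law.

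There is a subtlety here: under $\mathbf Q$ the other tokens of round $t$ are not generated by the $\DLM$ kernels. They follow the $\mathbf Q$-conditional, which differs from the $\mathbf P$-kernel by an amount controlled through Lemma~\ref{lem:conditional-tv-average}. We will therefore couple $\mathbf B$ against a hybrid that uses $\mathbf Q$-conditionals on the block-$b^\star$ pre-completion coordinates and $\mathbf P$-kernels elsewhere. Paying the kernel mismatch once per round via Lemma~\ref{lem:conditional-tv-average} and Lemma~\ref{lem:ledger}(a) gives a cost of at most $2\cdot 2\eta$ per round. Together with the initial $2\eta$, this accounts for the $6T\eta$ term, up to the constant bookkeeping, which we expect to be loose enough.

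Third, by the hybrid argument (telescoping over rounds), the remaining cost is
\[
\sum_t \E\bigl[\TV\bigl(\mathcal L_{\mathbf Q}(Y^{b^\star}_t\mid \mathcal F_t),\ \mathsf{unif}\bigr)\bigr],
\]
restricted to rounds $t$ in which $b^\star$ is revealed but not completed. For such $t$ we have $\mathbf 1[b^\star\in C_t]=0$, so $d_{b^\star,t}(F)=n_{b^\star,t}(F)-H_{\mathbf Q}(Y^{b^\star}_t\mid \mathcal F_t=F)$ is exactly the KL divergence to uniform. Lemma~\ref{lem:pinsker} then gives a per-round TV of at most $\sqrt{d_{b^\star,t}(F)}$. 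Jensen's inequality moves the square root outside the expectation, and Cauchy--Schwarz over the $T$ rounds gives
\[
\sum_t \sqrt{\E[d_{b^\star,t}]}\le \sqrt{T\sum_t \E[d_{b^\star,t}]}=\sqrt{T\cdot \mathrm{def}_{b^\star}}.
\]

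The main obstacle is the definitional point that the completion round is itself random and history-dependent. The replacement rule ``before completion'' must be decidable from $\mathcal F_t$ together with the current schedule, so that the hybrid is well defined round by round. It is: the schedule is deterministic given $\mathcal F_t$, so whether round $t$ completes $b^\star$ is known before sampling. A further consequence is that on rounds where $b^\star$ is completed, $\mathbf B$ and $\mathbf P$ use identical kernels, so no entropy term is paid there. This matters because $d_{b^\star,t}$ on completion rounds measures distance to a parity coset rather than to uniform. We also need the $\mathbf Q$-versus-$\mathbf P$ kernel mismatch to be charged only $O(\eta)$ per round. This is the step where the constant $6$ might need adjusting, but any $O(T\eta)$ bound suffices downstream.
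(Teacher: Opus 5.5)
Your ingredients are the right ones: round-by-round coupling, Lemma~\ref{lem:conditional-tv-average}, Pinsker on non-completing rounds, Jensen, and Cauchy--Schwarz. The problem is that routing the comparison through $\mathbf Q$ at the level of whole execution laws does not give the stated $6T\eta$, and two steps fail as written.

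First, consider the per-round mismatch in your $\TV(\mathbf Q,\text{hybrid})$ stage. On pre-completion rounds it is $\TV\bigl(\mathbf Q(V_t\mid F),\,\mathbf P(V_t^{-}\mid F)\otimes\mathbf Q(O_t^{b^\star}\mid F)\bigr)$, where $V_t^{-}$ is the rest of the round. Because $\mathbf Q(V_t\mid F)$ is not a product law, this is not bounded by $\TV(\mathbf Q(V_t\mid F),\mathbf P(V_t\mid F))$, and it can genuinely exceed it. The triangle inequality only gives that term plus the TV of the $b^\star$-marginals. With Lemma~\ref{lem:conditional-tv-average} this is $8\eta$ per round on average, not $2\cdot 2\eta$.

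Second, in the stage comparing the hybrid with $\mathbf B$, the history $\mathcal F_t$ is distributed according to the hybrid (or $\mathbf B$), not $\mathbf Q$. So Jensen produces $\sqrt{\E_{\mathrm{hyb}}[d_{b^\star,t}]}$, and $\sum_t \E_{\mathrm{hyb}}[d_{b^\star,t}]$ is not $\mathrm{def}_{b^\star}$. Changing measure on the bounded quantity $\min\{\sqrt{d_{b^\star,t}},1\}$ costs the TV between the history laws. That TV is itself $O(t\eta)$, so you end up with $O(T^2\eta)$.

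Both defects can be patched. For example, telescope $\mathbf Q$ against $\mathbf B$ with $\mathcal F_t\sim\mathbf Q$ and split each round through $\mathbf P(V_t\mid F)$. This yields $2\eta+8T\eta+\sqrt{T\cdot\mathrm{def}_{b^\star}}$, which suffices downstream but is not the lemma as stated.

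The paper gets exactly $6T\eta$ by never paying a global $\TV(\mathbf P,\mathbf Q)$ and never comparing against a non-product kernel. It telescopes $\mathbf P$ against $\mathbf B$ directly with $\mathcal F_t\sim\mathbf P$. The product structure of the $\DLM$ round kernel under $\mathbf P$ then reduces the round discrepancy to $e_t(F)=\TV(\mathbf P(O_t^{b^\star}\mid F),\mathsf{unif})$. The paper splits only this $b^\star$-marginal through $\mathbf Q(O_t^{b^\star}\mid F)$, and the first piece averages to at most $4\eta$ under $\mathbf P$ by Lemma~\ref{lem:conditional-tv-average}. Only the bounded term $\min\{\sqrt{d_{b^\star,t}},1\}$ is moved from $\mathbf P$ to $\mathbf Q$, at cost $2\eta$. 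That gives $6\eta$ per round and no additive initial term.
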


\begin{proof}
For a round $t$ and a realizable history $F$, define:
\begin{equation*}
e_t(F)
:=
\TV\!\left(
\mathbf P(V_t\mid \mathcal F_t=F),
\mathbf B(V_t\mid \mathcal F_t=F)
\right).
\end{equation*}
We compare the two executions  round-by-round. As long as the
two executions have the same history $\mathcal F_t=F$ (upto round $t-1$), the
probability of a disagreement in round $t$ is exactly $e_t(F)$. Indeed, if the two executions never disagree, then the event $S_{b^\star}$ has the same value in both executions. Hence:
\begin{align}
\label{eq:event-transfer-coupling}
\Pr_{\mathbf B}[S_{b^\star}]
\ge
\Pr_{\mathbf P}[S_{b^\star}]
-
\sum_{t=1}^T \E_{\mathbf P}[e_t(\mathcal F_t)].
\end{align}
For the given $t$ and $F$, the two conditional round laws agree unless some output coordinate of block $b^\star$ is revealed during some round $t$, that occurs before the completion round of
$b^\star$. Therefore, if either $b^\star\in C_t(F)$ or
$\Pi_t^{b^\star}(F)=\emptyset$, then $e_t(F)=0$. Otherwise, conditioned on $F$, it must be the case that $O_t^{b^\star}$ is independent of all other tokens generated in round $t$. Conditioned on $\mathcal{F}_t=F$, observe that within each round the $\DLM$ has a product structure implies that $O_t^{b^\star}$ is independent of the other tokens generated in round $t$.  Since $\mathbf B$ changes only the marginal of $O_t^{b^\star}$, tensorization of total variation gives us:
\begin{equation*}
e_t(F)
=
\TV\!\left(
\mathbf P(O_t^{b^\star}\mid \mathcal{F}_t=F),
\mathsf{Unif}(\{0,1\}^{\Pi_t^{b^\star}(F)})
\right).
\end{equation*}
Now, define:
\begin{equation*}
\beta_t(F)
:=
\TV\!\left(
\mathbf P(O_t\mid \mathcal F_t=F),
\mathbf Q(O_t\mid \mathcal F_t=F)
\right).
\end{equation*}
By data processing, we get:
\begin{equation*}
\TV\!\left(
\mathbf P(O_t^{b^\star}\mid \mathcal F_t=F),
\mathbf Q(O_t^{b^\star}\mid \mathcal F_t=F)
\right)
\le
\beta_t(F).
\end{equation*}
Moreover, in the non-completing case $b^\star\notin C_t(F)$, the blockwise deficit is precisely:
\begin{equation*}
d_{b^\star,t}(F)
=
|\Pi_t^{b^\star}(F)|
-
H_{\mathbf Q}(O_t^{b^\star}\mid \mathcal F_t=F).
\end{equation*}
Applying Pinsker's inequality now gives us:
\begin{equation*}
\TV\!\left(
\mathbf Q(O_t^{b^\star}\mid \mathcal F_t=F),
\mathsf{Unif}\big(\{0,1\}^{\Pi_t^{b^\star}(F)}\big)
\right)
\le
\sqrt{d_{b^\star,t}(F)}.
\end{equation*}
Therefore, for all $F$, we have:
\begin{equation*}
e_t(F)
\le
\beta_t(F)+\min\left\{\sqrt{d_{b^\star,t}(F)},1\right\}.
\end{equation*}
We now take expectations in the bound on $e_t(F)$. For the first summand on the right hand side, the conditional-TV lemma implies that
$\E_{\mathbf P}[\beta_t(\mathcal F_t)]\le 4\eta$. For the second summand, since
$\min\{\sqrt{x},1\}\in[0,1]$ and $\TV(\mathbf P,\mathbf Q)\le2\eta$, changing the
measure from $\mathbf P$ to $\mathbf Q$ costs at most $2\eta$; then, Jensen's inequality gives:
\begin{equation*}
\E_{\mathbf P}\big[e_t(\mathcal F_t)\big]
\le
6\eta+
\sqrt{\E_{\mathbf Q}\big[d_{b^\star,t}(\mathcal F_t)\big]}.
\end{equation*}
Summing this quantity over $t$, and applying Cauchy-Schwarz now gives:
\begin{equation*}
\sum_{t=1}^T \E_{\mathbf P}[e_t(\mathcal F_t)]
\le
6T\eta
+
\sum_{t=1}^T
\sqrt{\E_{\mathbf Q}[d_{b^\star,t}(\mathcal F_t)]}
\le
6T\eta
+
\sqrt{
T\sum_{t=1}^T
\E_{\mathbf Q}[d_{b^\star,t}(\mathcal F_t)]
}
=
6T\eta+\sqrt{T\cdot\mathrm{def}_{b^\star}}.
\end{equation*}
Finally, combining this with Eq.~\eqref{eq:event-transfer-coupling} yields:
\begin{equation*}
\Pr_{\mathbf B}[S_{b^\star}]
\ge
\Pr_{\mathbf P}[S_{b^\star}]
-
6T\eta
-
\sqrt{T\cdot\mathrm{def}_{b^\star}}.
\end{equation*}
This concludes the proof.
\end{proof}
\begin{lemma}\label{lem:simulation}
There is a randomized polynomial-size constant depth $\GCz[\log n]$ circuit $C$ such~that:
\begin{equation*}
\Pr_{x\sim\{0,1\}^B}
\left[
C(x)=|x| \pmod 2
\right]
\ge
\Pr_{\mathbf B}[S_{b^\star}].
\end{equation*}
\end{lemma}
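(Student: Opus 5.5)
We will simulate the planted execution $\mathbf B$ by a circuit in which the uniform input $x\in\{0,1\}^B$ supplies the planted values of block $b^\star$. Then the parity prediction is read off from the completion round of $b^\star$. Concretely, $C$ receives $x$ together with auxiliary random bits $\rho$: the randomness of all other denoising samples, and the randomness needed to choose the completion coordinate. It runs the $T=O(1)$ rounds of the $\DLM$. Each round consists of the scheduler circuits $S_{t,i}$ followed by the denoiser circuits $p_{t,i}$, both polynomial-size constant-depth $\GCz[\log n]$ circuits by assumption. A token sampled from a marginal $p_{t,i}(\cdot\mid X^{(t)})$ with $p$-bit precision can be produced by a constant-depth comparison of the computed probability against fresh random bits from $\rho$; this is an $\ACz$ comparator, up to a negligible rounding error that we absorb, or that we avoid by assuming the marginals are given as samplers. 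Composing $T$ rounds keeps depth $O(1)$ and size $\poly(n)$.

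The planting works as follows. Whenever the scheduler in round $t$ reveals an output coordinate $(b^\star,i)$ and $b^\star\notin C_t$ (checkable in constant depth from the scheduler bits: the block has masked coordinates not scheduled this round), the circuit writes $x_i$ instead of the denoiser's sample. This is a multiplexer, $\ACz$. Since the revealed coordinates of $b^\star$ before completion receive independent uniform bits, exactly as in the definition of $\mathbf B$, the simulated execution has law $\mathbf B$ when $x$ is uniform. The one subtlety is that the coordinate $J$ left for the completion round is chosen adaptively. We handle this by indexing: the value planted at coordinate $i$ is $x_i$, so the pre-completion bits are $x_{-J}$, and all $B$ entries of $x$ are uniform and independent of the execution except through $x_J$, which is never read before completion. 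Thus the execution up to completion is distributed as under $\mathbf B$ and is independent of $x_J$.

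The output is then defined as $C(x,\rho):=G_{b^\star}\oplus x_{J}$ if $r_{b^\star}=1$, where $G_{b^\star}$ is the bit generated at $J$. On $S_{b^\star}$ we have $G=\bigoplus_{i\ne J}x_i$, so $C(x)=\bigoplus_i x_i=|x|\bmod 2$. Extracting $x_J$ and $G$ at the adaptively chosen $J$ is an OR over $i\in[B]$ of $[J=i]\wedge(\cdot)$, which is constant depth. If $r_{b^\star}\ne 1$, the circuit outputs an arbitrary fixed bit.

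Hence $\Pr_{x,\rho}[C=\text{parity}]\ge\Pr_{\mathbf B}[S_{b^\star}]$. The main point to verify carefully is that feeding $x_J$ into the output does not bias the computation, i.e.\ that the event $S_{b^\star}$ under the simulated run has the same probability as under $\mathbf B$. This holds because $x_J$ enters only after completion.

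A residual issue is the remaining tokens of the other blocks, whose values may depend on the planted bits. They are generated by the same circuits, so no extra structure is needed. Finally, fixing $\rho$ by averaging yields a deterministic circuit, which is the form in which it will be fed to Theorem~\ref{thm:parity-bound}.
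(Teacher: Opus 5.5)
Your proposal is correct and follows the same route as the paper. You run the planted process $\mathbf B$ with the input $x$ supplying the pre-completion bits of block $b^\star$, hard-wire the remaining coins, and output $G\oplus x_J$, which equals $|x| \bmod 2$ on $S_{b^\star}$. You also fill in details the paper leaves implicit: the constant-depth planting step, reading off $x_J$ and $G$ at the adaptively chosen $J$, and why the simulated run has law exactly $\mathbf B$. The one thing to add is the paper's remark that $n=B^{100}$, so $\poly(n)$ size and $O(\log n)$ gate arity are $\poly(B)$ and $O(\log B)$, which is what you need to invoke Theorem~\ref{thm:parity-bound} with $m=B$.
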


\begin{proof}
The circuit that is being constructed in the lemma is obtained via the process $\mathbf B$, using its input $x=(x_1,\ldots,x_B)$ as the uniform bits inserted into block $b^\star$ before completion. All other randomness comes from the hard-wired coin flips baked into the circuit, and all scheduling and denoising steps are implemented by the given $\DLM$ circuits. If $b^\star$ is completed by a single final coordinate $J$, and the generated
final bit is $G$, the circuit outputs $G\oplus x_J$; otherwise it outputs an arbitrary bit. On the event $S_{b^\star}$, we have:
\begin{equation*}
G=\bigoplus_{i\ne J}x_i,
\end{equation*}
and hence,
$
G\oplus x_J=\bigoplus_{i=1}^B x_i.
$
Therefore, $C$ computes parity with probability at least
$\Pr_{\mathbf B}[S_{b^\star}]$. Now, by the $\DLM$ definition, observe that the circuit that computes $G$ is via a constant-depth $\GCz[\log n]$ circuit. Since $n=B^{100}$, its size is polynomial in $B$, and its gate arity is $O(\log n)=O(\log B)$.
\end{proof}

\paragraph{Putting everything together.}
By Lemma~\ref{lem:planting} and our choice of $b^\star$, we have:
\begin{equation*}
\Pr_{\mathbf B}[S_{b^\star}]
\ge
\Pr_{\mathbf P}[S_{b^\star}]
-
6T\eta
-
\sqrt{T\cdot\mathrm{def}_{b^\star}}
\ge
\frac 9{10}
-
6T\eta
-
\sqrt{T\cdot \frac{4\Delta}{M}}.
\end{equation*}
Now, recall our parameters, wherein we chose  $\Delta=O(s)=O(n^{0.9})$ and
$M=n^{0.99}$. Therefore,
$
\sqrt{T\cdot \frac{4\Delta}{M}}=o(1)
$
for constant $T$. Taking
$\eta\le \varepsilon_\star(T)$ to be sufficiently small, the right-hand side becomes at
least $4/5$ for all sufficiently large $n$. Then, Lemma~\ref{lem:simulation} gives us a randomized polynomial-size depth-$O_{d,T}(1)$ $\GCz[\log B]$ circuit that computes $B$-bit parity with success probability at least $4/5$. Fixing the internal coins gives a deterministic circuit with the same success probability, contradicting
Theorem~\ref{thm:parity-bound}. This proves the lower bound in the basic model.

\subsubsection{Step 3. Allowing revision and remasking}
\label{sec:revisionremaskingLB}
We now explain how the our argument extends to $\DLM$s that allow a (bounded) number of remasking or revision operations. Suppose the $\DLM$ allows for
at most $R=n^{0.9}$ output-token remasking/revision events during the entire execution, and suppose each such operation acts on a single output coordinate. We say a block $b\in[M]$ is \emph{touched} if some output coordinate in block $b$ is ever remasked or revised over the course of the computation. Let $\mathsf{Touch}_b$ denote this event. Since each
remasking/revision event touches only one output coordinate, we have the pathwise bound:
$$
\sum_{b=1}^M \mathbf 1[\mathsf{Touch}_b]\le R.
$$
Thus, the fraction of touched blocks is at most $R$, which by assumption satisfies $R/M=o(1)$.  We will simply discard all touched blocks (since there are not too many of them). We remark that it is important to discard these blocks, because revision/remasking events destroy the monotonicity property that is used crucially in the proof for the basic model above: if any coordinate is later revised or remasked, then the value written at the first completion time need not be its final value (note that in our structural result, we used the fact that once a token is unmasked it is never masked or revised again). 

For untouched blocks, however, monotonicity is completely preserved.  Once an output coordinate in an untouched block is written, it is never subsequently changed or remasked.  Hence, when such a block is first completed, the coordinates unmasked in that round are final output coordinates, and so the same residual affine parity constraint as in the basic model applies. Let:
\begin{equation*}
    N_t^{\mathrm{unt}}
    :=
    \#\{b:\neg \mathrm{Touch}_b,\ b\text{ is completed in round }t
    \text{ with }r_b\ge 2\}.
\end{equation*}
%Applying the multi-bit-completion argument of Lemma~\ref{lem:few-multibit} to these untouched blocks gives, for each round $t$:
For every realization of the $\DLM$ execution, after deleting all touched blocks, the remaining completed blocks evolve exactly as in the monotone model (i.e., the model where there is no remasking/revision). Applying Lemma~\ref{lem:few-multibit} to this induced monotone execution therefore gives
\begin{equation*}
    \E_{\mathbf Q}[N_t^{\mathrm{unt}}]\le O(s)+O(\eta M),
\end{equation*}
where we used that the total number of $\DLM$ rounds is a constant. Indeed, Lemma~\ref{lem:few-multibit} applies to the induced monotone execution on the untouched blocks, so on good transcripts the same product-mixture argument bounds the corresponding dimension, while on bad transcripts we again use the trivial bound $N_t^{\mathrm{unt}}\leq M$.\footnote{We remark that the property that a block is untouched is determined only after the full execution of the $\DLM$, so one should not interpret the argument as saying that conditioned on a future event, the $\DLM$ execution induces a product distribution. Rather, the proof is more naturally viewed as deleting the touched blocks after the execution and applying the same counting argument to the induced monotone execution on the remaining blocks. Making this viewpoint completely formal requires only routine bookkeeping, without affecting any of the quantitative bounds; overall producing the same bounds as we show below as long as $s,R=o(n)$.} Summing over the constant number of rounds, we get:
\begin{equation*}
    \E_{\mathbf Q}\!\left[
    \#\{b:\neg\mathrm{Touch}_b,\ r_b\ge 2\}
    \right]
    \le O(s)+O(\eta M).
\end{equation*}
Adding back touched blocks costs at most $R$ pathwise, and therefore:
\begin{equation*}
    \E_{\mathbf Q}[\#\{b:r_b\ge 2\}]
    \le O(s)+O(\eta M)+R.
\end{equation*}
Next, define the last-bit success event with the extra requirement that the block is untouched:
\begin{equation*}
S_b^{\mathrm{rev}}
:=
\left\{
\neg\mathsf{Touch}_b,\ r_b=1,\
G_b=\bigoplus_{i\ne J_b}X_{b,i}
\right\}.
\end{equation*}
If $S_b^{\mathrm{rev}}$ fails, then either block $b$ was touched, or it was
completed by at least two coordinates, or its final parity is incorrect. Thus:
\begin{equation*}
(S_b^{\mathrm{rev}})^c
\subseteq
\mathsf{Touch}_b
\cup
\{r_b\ge2\}
\cup
\left\{\bigoplus_{i=1}^B X_{b,i}=1\right\}.
\end{equation*}
Averaging over $b$ gives us:
\begin{equation*}
\frac1M\sum_{b=1}^M \Pr[(S_b^{\mathrm{rev}})^c]
\le
\frac{R}{M}
+
\frac{\Exp [\#\{b:r_b\ge 2\}]}{M}
+
\eta
=
o(1)
\end{equation*}
nearly uniform pre-completion bits. Namely, Corollary~\ref{cor:blockdeficit} gives us:
\begin{equation*}
\frac1M\sum_{b=1}^M \mathrm{def}_b
\le
\frac{s +2\eta}{M}.
\end{equation*}
Combining this with the preceding bound and applying Markov's inequality, it follows that there exists a block $b^\star$ such that:
\begin{equation*}
\Pr[S_{b^\star}^{\mathrm{rev}}]\ge \frac9{10}
\qquad\text{and}\qquad
\mathrm{def}_{b^\star}
\le
\frac{4(s +2\eta)}{M}.
\end{equation*}
In the event $S_{b^\star}^{\mathrm{rev}}$, the block $b^\star$ is never remasked or revised, and it is completed in the last-bit regime. Therefore the
replacement-by-uniform-bits argument applies to this block exactly as in the basic model. The modified simulation now simulates the full $\DLM$, including
any revision/remasking operations on other blocks, while using external uniform input bits for the pre-completion coordinates of block $b^\star$. If the $\DLM$ ever touches block $b^\star$, the simulation may fail, but this is already accounted for in the complement of $S_{b^\star}^{\mathrm{rev}}$. Thus, the same parity-prediction reduction goes through with an additional additive loss of at most $R/M=o(1)$, which vanishes. Consequently, our tolerated amount of remasking/revision does not affect the constant-distance lower bound.\footnote{We emphasize that \cite{chen2024theoretical} accounts for revision/remasking as the \emph{number of rounds} where these revisions occur; we offer a significantly more fine-grained measure, wherein each token which undergoes revision/remasking is included in the complexity.}

\bibliographystyle{alpha}
\bibliography{bibliography}

\end{document}